\documentclass[11pt]{article}%

\usepackage{color}

\usepackage{graphicx, stmaryrd}

\usepackage{amsmath}
\usepackage{amsfonts}
\usepackage{amssymb}
\usepackage{graphicx}
\usepackage{fullpage}%
\setcounter{MaxMatrixCols}{30}
%TCIDATA{OutputFilter=latex2.dll}
%TCIDATA{Version=5.50.0.2960}
%TCIDATA{CSTFile=LaTeX article (bright).cst}
%TCIDATA{Created=Monday, November 26, 2007 21:26:07}
%TCIDATA{LastRevised=Friday, November 18, 2011 10:14:12}
%TCIDATA{<META NAME="GraphicsSave" CONTENT="32">}
%TCIDATA{<META NAME="SaveForMode" CONTENT="1">}
%TCIDATA{BibliographyScheme=Manual}
%TCIDATA{<META NAME="DocumentShell" CONTENT="Standard LaTeX\Blank - Standard LaTeX Article">}
%TCIDATA{Language=American English}
%BeginMSIPreambleData
\providecommand{\U}[1]{\protect\rule{.1in}{.1in}}
%EndMSIPreambleData
\newtheorem{theorem}{Theorem}

\newtheorem{claim}[theorem]{Claim}

\newtheorem{corollary}[theorem]{Corollary}

\newtheorem{definition}[theorem]{Definition}

\newtheorem{lemma}[theorem]{Lemma}

\newenvironment{proof}[1][Proof]{\noindent\textbf{#1.} }{\ \rule{0.5em}{0.5em} \vspace{1 em}}

\newcommand{\cl}{\mathrm{clk}}

\newcommand{\size}{\mathrm{size}}

\newcommand{\intt}{\mathrm{int}}
\newcommand{\ext}{\mathrm{ext}}

\newcommand{\inn}{\mathrm{in}}
\newcommand{\prop}{\mathrm{prop}}

\newcommand{\hin}{H_{\inn}}
\newcommand{\hout}{H_{\out}}
\newcommand{\hprop}{H_{\prop}}

 %expectation
%\newcommand{\deg}{\mbox{\rm deg}}
\newcommand{\ket}[1]{|#1\rangle}
\newcommand{\bra}[1]{\langle#1|}
\newcommand{\ketbra}[2]{|#1\rangle\langle#2|}
\newcommand{\braket}[2]{\langle#1|#2\rangle}

 % inproduct, < | >

\newcommand{\ol}[1]{\overline{#1}}
\newcommand{\mc}[1]{\mathcal{#1}}

\newcommand{\suc}{\mathrm{suc}}
\newcommand{\targ}{\mathrm{targ}}

\newcommand{\comp}{\mathrm{comp}}

\newcommand{\anc}{\mathrm{anc}}
\newcommand{\els}{\mathrm{else}}
\newcommand{\eff}{\mathrm{eff}}

\newcommand{\adv}{\mathrm{adv}}

\newcommand{\early}{\mathrm{early}}
\newcommand{\late}{\mathrm{late}}

\newcommand{\out}{\mathrm{out}}
\newcommand{\bad}{\mathrm{bad}}

\newcommand{\poly}{\mathrm{poly}}
\newcommand{\avg}{\mathrm{avg}}

\newcommand{\eps}{\varepsilon}

\newcommand{\bE}{\mathbb{E}}

\newcommand{\bR}{\mathbb{R}}
\newcommand{\bC}{\mathbb{C}}

\newcommand{\ot}{\otimes}

\DeclareMathOperator{\val}{val}
\DeclareMathOperator{\inval}{inval}

\begin{document}

\title{A Full Characterization of Quantum Advice\footnote{A preliminary extended abstract of this work appeared in ACM STOC 2010.}}
\author{Scott Aaronson\thanks{Email: aaronson@csail.mit.edu. \ \ This material is
based upon work supported by the National Science Foundation under Grant No.
0844626. \ Also supported by a DARPA YFA grant, a Sloan Fellowship, and the
Keck Foundation.}\\MIT
\and Andrew Drucker\thanks{Email: andy.drucker@gmail.com. \ \ Supported by the NSF under agreements Princeton University Prime Award No. CCF-0832797 and Sub-contract No. 00001583. This material is based upon work supported by an Akamai Presidential Graduate Fellowship while the author was a graduate student at MIT.}\\IAS}
\date{}
\maketitle

\begin{abstract}
We prove the following surprising result: given any quantum state $\rho$\ on
$n$ qubits, there exists a local Hamiltonian $H$ on $\operatorname*{poly}%
\left(  n\right)  $\ qubits (e.g., a sum of two-qubit interactions), such that
any ground state of $H$ can be used to simulate $\rho$\ on all quantum
circuits of fixed polynomial size. \ In terms of complexity classes, this
implies that $\mathsf{BQP/qpoly}\subseteq\mathsf{QMA/poly}$, which supersedes
the previous result of Aaronson that $\mathsf{BQP/qpoly}\subseteq
\mathsf{PP/poly}$. \ Indeed, we can exactly characterize quantum advice, as
equivalent in power to \textit{untrusted} quantum advice combined with trusted
\textit{classical} advice.

Proving our main result requires combining a large number of previous
tools---including a result of Alon et al.\ on learning of real-valued concept
classes, a result of Aaronson on the learnability of quantum states, and a
result of Aharonov and Regev on ``$\mathsf{QMA}_{\mathsf{+}}$
super-verifiers''---and also creating some new ones. \ The main new tool is a
so-called \textit{majority-certificates lemma}, which is closely related to
boosting in machine learning, and which seems likely to find independent
applications. \ In its simplest version, this lemma says the following.
\ Given any set $S$ of Boolean functions on $n$ variables, any function $f\in
S$ can be expressed as the pointwise majority of $m=O\left(  n\right)  $
functions $f_{1},\ldots,f_{m}\in S$, such that each $f_{i}$ is the unique
function in $S$ compatible with $O\left(  \log\left\vert S\right\vert \right)
$\ input/output constraints.

\end{abstract}
\tableofcontents

\section{Introduction\label{INTRO}}

\begin{quotation}
\noindent\textit{How much classical information is needed to specify a quantum
state\ of} $n$\textit{ qubits?}
\end{quotation}

This question has inspired a rich and varied set of responses, in part because
it can be interpreted in many ways. \ If we want to specify a quantum state
$\rho$ \textit{exactly}, then of course the answer is ``an infinite
amount,''\ since amplitudes in quantum mechanics are continuous. \ A natural
compromise is to try to specify $\rho$ \textit{approximately}, i.e., to give a
description which yields a state $\widetilde{\rho}$ whose statistical behavior
is close to that of $\rho$ under every measurement. (This statement is
captured by the requirement that $\rho$ and $\widetilde{\rho}$ are close under
the so-called \textit{trace distance} metric.) \ But it is not hard to see
that even for this task, we still need to use an exponential (in $n$) number
of classical bits.

This fact can be viewed as a disappointment, but also as an opportunity, since
it raises the prospect that we might be able to encode massive amounts of
information in physically compact quantum states: for example, we might hope
to store $2^{n}$ classical bits in $n$ qubits. \ But an obvious practical
requirement is that we be able to retrieve the information reliably, and this
rules out the hope of significant ``quantum compression''\ of classical strings,
as shown by a landmark result of Holevo \cite{holevo} from 1973. \ Consider a
sender Alice and a recipient Bob, with a one-way quantum channel between them.
\ Then Holevo's Theorem says that, if Alice wants to encode an $n$-bit
classical string $x$\ into an $m$-qubit quantum state $\rho_{x}$, in such a
way that Bob can retrieve $x$ (with probability $2/3$, say) by measuring
$\rho_{x}$, then Alice must take $m\geq n-O\left(  1\right)  $ (or $m\geq
n/2-O\left(  1\right)  $,\ if Alice and Bob share entanglement). \ In other
words, for this communication task, quantum states offer essentially no
advantage over classical strings. In 1999, Nayak~\cite{nayak}, improving on
Ambainis et al.~\cite{antv_conf} (see~\cite{antv}), generalized Holevo's
result as follows: even if Bob wants to learn only a \textit{single bit}
$x_{i}$\ of $x=x_{1}\ldots x_{n}$\ (for some $i\in\left[  n\right]  $\ unknown
to Alice), and is willing to destroy the state $\rho_{x}$ in the process of
learning that bit, Alice still needs to send $m=\Omega\left(  n\right)  $
qubits for Bob to succeed with high probability.

These results say that the exponential descriptive complexity of quantum
states cannot be effectively harnessed for classical data storage, but they do
not bound the number of practically meaningful ``degrees of freedom''\ in a
quantum state used for purposes other than storing data. \ For example, a
quantum state could be useful for computation, or it could be a physical
system worthy of study in its own right. \ The question then becomes, what
useful information \textit{can} we give about an $n$-qubit state using a
``reasonable'' number (say, $\operatorname*{poly}\left(  n\right)  $) of
classical bits?

One approach to this question is to identify special subclasses of quantum
states for which a faithful approximation can be specified using only
$\operatorname*{poly}\left(  n\right)  $ bits. \ This has been done, for
example, with matrix product states \cite{vidal} and ``tree
states''\ \cite{aar:mlin}. \ A second approach is to try to describe an
\textit{arbitrary} $n$-qubit state $\rho$ concisely, in such a way that the
state $\widetilde{\rho}$ recovered from the description is close to $\rho$
with respect to some natural subclass of \textit{measurements}. \ This has
been done for specific classes like the ``pretty good measurements''\ of
Hausladen and Wootters \cite{hausladen}. \ A more ambitious goal in this vein,
explored by Aaronson in two previous works \cite{aar:adv, aar:learn} and
continued in the present paper, is to give a description of an $n$-qubit state
$\rho$ which yields a state $\widetilde{\rho}$ that behaves approximately like
$\rho$ with respect to all (binary) measurements performable by quantum
circuits of ``reasonable'' size---say, of size at most $n^{c}$, for some fixed
$c > 0$. Then if $c$ is taken large enough, $\widetilde{\rho}$ is arguably
``just as good'' as $\rho$ for practical purposes.

Certainly we can achieve this goal using $2^{n^{c + O(1)}}$\ bits: simply give
approximations to the measurement statistics for every size-$n^{c}$ circuit.
\ However, the results of Holevo \cite{holevo}\ and Ambainis et
al.\ \cite{antv}\ suggest that a much more succinct description might be
possible. \ This hope was realized by Aaronson \cite{aar:adv}, who gave a
description scheme in which an $n$-qubit state can be specified using
$\operatorname*{poly}\left( n\right) $ classical bits. There is a significant
catch in Aaronson's result, though: the encoder Alice and decoder Bob both
need to invest exponential amounts of computation.

In a subsequent paper \cite{aar:learn}, Aaronson gave a closely-related result
which significantly reduces the computational requirements:\ now Alice can
generate her message in polynomial time (for fixed $c$). \ Also, while Bob
cannot necessarily construct the state $\widetilde{\rho}$ efficiently on his
own, if he is presented with such a state (by an untrusted prover, say), Bob
can \textit{verify} the state in polynomial time.\ \ The catch in this result
is a weakened approximation guarantee: Bob cannot use $\widetilde{\rho}$ to
predict the outcomes of \textit{all} the measurements defined by size-$n^{c}$
circuits, but only \textit{most} of them (with respect to a samplable
distribution used by Alice in the encoding process). \ Aaronson conjectured \cite{aar:learn} that the tradeoff between the results of \cite{aar:learn} and of \cite{aar:adv} revealed an inherent limit to quantum compression.

\subsection{Our Quantum Information Result}

The main result of this paper is that Aaronson's conjecture was false: one
really can get the best of both worlds, and simulate an arbitrary quantum
state $\rho$\ on all small circuits, using a different state $\widetilde{\rho
}$ that is easy to recognize. \ Indeed, we can even take $\widetilde{\rho}%
$\ to be the \textit{ground state of a local Hamiltonian}: that is, a pure
state $\widetilde{\rho}=\left\vert \psi\right\rangle \left\langle
\psi\right\vert $\ on $\operatorname*{poly}\left(  n\right)  $\ qubits minimizing the disagreement with $\operatorname*{poly}\left(  n\right)  $\ local constraints,
each involving a constant number of qubits. \ In a sense, then, this paper
completes a ``trilogy''\ of which \cite{aar:adv,aar:learn}\ were the first two installments.

Here is a formal statement of our result.

\begin{theorem}
\label{mainthm0} Let $c, \delta>0$, and let $\rho^*$\ be any $n$-qubit
quantum state. \ Then there exists a $2$-local Hamiltonian $H$ on
$\operatorname*{poly}\left( n, \frac{1}{\delta} \right)  $\ qubits, and a
transformation $C\longrightarrow C^{\prime}$\ of quantum circuits, computable
in time $\operatorname*{poly}\left(  n, 1 / \delta\right) $ given $H$,
such that the following holds: for any ground state $\ket{\psi}$ of $H$, and for any measurement
$C$ definable by a quantum circuit of size $n^{c}$, we have $\left\vert  \bE\left[
C^{\prime}\left( \ketbra{\psi}{\psi} \right)\right]    - \bE\left[C\left(  \rho^*\right) \right]  \right\vert \leq\delta$.
\end{theorem}

In other words, the ground states of local Hamiltonians are ``universal quantum
states''\ in a very non-obvious sense. \ For example, suppose you own a quantum
software store,\ which sells quantum states $\rho$\ that can be fed as input
to quantum computers. \ Then our result says that \textit{ground states of
local Hamiltonians are the only kind of state you ever need to stock}. \ What
makes this surprising is that being a good piece of quantum software\ might
entail satisfying an exponential number of constraints: for example, if $\rho
$\ is supposed to help a customer's quantum computer $Q$\ evaluate some
Boolean function $f:\left\{  0,1\right\}  ^{n}\rightarrow\left\{  0,1\right\}
$, then $Q\left(  \rho, x \right)  $\ should output $f\left(  x\right)  $\ for
\textit{every} input $x\in\left\{  0,1\right\}  ^{n}$. \ By contrast, any
$k$-local Hamiltonian $H$ can be described as a set of at most ${\binom{n }%
{k}} = O(n^{k})$ constraints.

One can also interpret Theorem \ref{mainthm0}\ as a statement about
communication over quantum channels. \ Suppose Alice (who is computationally
unbounded) has a classical description of an $n$-qubit state $\rho^*$. \ She
would like to describe this state to Bob (who is computationally bounded), at
least well enough for Bob to be able to \textit{simulate} $\rho^*$\ on all
quantum circuits of some fixed polynomial size. \ However, Alice cannot just
send $\rho^*$ to Bob, since her quantum communication channel is noisy and
there is a chance that the state might get corrupted along the way. \ Nor can
she send a faithful classical description of $\rho^* $,
since that would require an exponential number of bits.\ Our result provides
an alternative: Alice can send a different quantum state $\sigma$, of
$\operatorname*{poly}(n)$\ qubits, together with a $\operatorname*{poly}%
(n)$-bit classical string $x$. Then, Bob can use $x$ to \textit{verify} that
$\sigma$\ can be used to accurately simulate $\rho^*$\ on all small measurements.

We believe Theorem \ref{mainthm0} makes a significant contribution to the
study of the effective information content of quantum states. \ It does,
however, leave open whether a quantum state of $n$ qubits can be efficiently
encoded \textit{and} decoded in polynomial time, in a way that is ``good
enough'' to preserve the measurement statistics of measurements defined by
circuits of fixed polynomial size. \ This remains an important problem for
future work.

\subsection{Impact on Quantum Complexity Theory\label{MOTIVATION}}

The questions addressed in this paper, and our results, are naturally phrased
and proved in terms of complexity classes. \ In recent years, researchers have
defined quantum complexity classes as a way to study the ``useful information''
embodied in quantum states. One approach is to study the power of nonuniform
\textit{quantum advice}. \ The class $\mathsf{BQP/qpoly}$, defined by
Nishimura and Yamakami \cite{ny}, consists of all languages decidable in
polynomial time by a quantum computer, with the help of a
$\operatorname*{poly}\left(  n\right)  $-qubit advice state that depends only
on the input length $n$. This class is analogous to the classical class
$\mathsf{P/poly}$. To understand the role of quantum information in
determining the power of $\mathsf{BQP/qpoly}$, a useful benchmark of
comparison is the class $\mathsf{BQP/poly}$ of decision problems efficiently
solvable by a quantum algorithm with $\operatorname*{poly}\left(  n\right)  $
bits of \emph{classical} advice (or equivalently, by a non-uniform family of $\operatorname*{poly}\left(  n\right)  $-sized quantum circuits). It is open whether $\mathsf{BQP/qpoly} =
\mathsf{BQP/poly}$.

A second approach studies the power of quantum \textit{proof systems}, by
analogy with the classical class $\mathsf{NP}$. \ Kitaev (unpublished, 1999)
defined the complexity class now called $\mathsf{QMA}$, for ``Quantum
Merlin-Arthur.'' This is the class of decision problems for which a ``yes''
answer can be proved by exhibiting a \textit{quantum witness state} (or
\textit{quantum proof}) $\left\vert \psi\right\rangle $, on
$\operatorname*{poly}\left(  n\right)  $\ qubits, which is then checked by a
skeptical polynomial-time quantum verifier. A useful benchmark class is
$\mathsf{QCMA}$\ (for ``Quantum Classical Merlin-Arthur''), defined by Aharonov
and Naveh \cite{an}. This is the class of decision problems for which a ``yes''
answer can be checked by a \textit{quantum} verifier who receives a
\textit{classical} witness. \ Here the natural open question is whether
$\mathsf{QMA} = \mathsf{QCMA}$.

In this paper we prove a new upper bound on $\mathsf{BQP/qpoly}$:

\begin{theorem}
\label{mainthm1}$\mathsf{BQP/qpoly}\subseteq\mathsf{QMA/poly}$.
\end{theorem}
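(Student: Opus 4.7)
The plan is to invoke Theorem \ref{mainthm0} as a compiler that replaces the trusted quantum advice of a $\mathsf{BQP/qpoly}$ computation with the ground state of a publicly-described local Hamiltonian; in the $\mathsf{QMA/poly}$ simulation, the description of $H$ becomes the classical advice, while Merlin is asked to supply the (untrusted) ground state as the quantum witness.

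Concretely, let $L\in\mathsf{BQP/qpoly}$ via an advice family $\{\rho_n\}$ and a poly-time quantum algorithm $Q$ of total size at most $n^c$. Each input $x\in\{0,1\}^n$ together with $Q$ defines a measurement circuit $C_x$ of size $n^c$ whose acceptance probability $C_x(\rho_n)$ decides $L(x)$ with bounded error. Apply Theorem \ref{mainthm0} to $\rho_n$ with accuracy $\varepsilon=1/10$ (and constant $c$) to obtain a $2$-local Hamiltonian $H_n$ on $m=\operatorname{poly}(n)$ qubits with unique ground state $|\psi_n\rangle\langle\psi_n|$, together with an efficient transformation $C_x\mapsto C'_x$ such that $C'_x(|\psi_n\rangle\langle\psi_n|)$ approximates $C_x(\rho_n)$ to within $1/10$ for every $x$. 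I take the $\mathsf{QMA/poly}$ classical advice $a_n$ to be a description of $H_n$, which has polynomially many bits since $H_n$ is 2-local on $m$ qubits. On input $x$, Arthur requests from Merlin a witness $|\phi\rangle$ meant to equal $|\psi_n\rangle$, verifies that $|\phi\rangle$ is the ground state of $H_n$, and then applies $C'_x$ to decide.

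Completeness is immediate: if Merlin sends $|\psi_n\rangle$, the ground-state check succeeds with high probability and $C'_x$ correctly approximates $C_x(\rho_n)$, so Arthur outputs $L(x)$ with bounded error.

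The crux of the proof is soundness, and this is the step I expect to be the main obstacle. A dishonest Merlin could try to pass the local-Hamiltonian test with a state $|\phi\rangle$ of low energy that is nevertheless far from $|\psi_n\rangle$ in trace distance, and in particular misleads $C'_x$. Moreover, Arthur is given only one witness, so he cannot separately destroy one copy to estimate the energy and retain another to run $C'_x$. This is precisely where the Aharonov--Regev $\mathsf{QMA}_{+}$ super-verifier, flagged as a key tool in the introduction, enters: it upgrades local-Hamiltonian verification so that, from a polynomial-size witness, Arthur can simultaneously certify closeness to the \emph{unique} ground state and then use essentially undisturbed registers of that state inside $C'_x$. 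The technical burden is to verify that the parameters produced by Theorem \ref{mainthm0} mesh with those required by the super-verifier---in particular that $H_n$ can be chosen (after rescaling or amplification) with a spectral gap large enough to turn the low-energy guarantee into a trace-distance guarantee sufficient to keep the final $C'_x$ simulation within the $1/3$ soundness margin---after which soundness of the $\mathsf{QMA/poly}$ protocol follows by combining the two error bounds.
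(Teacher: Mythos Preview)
Your proposal is circular. In this paper, Theorem \ref{mainthm0} is not an independent input but a \emph{corollary} of the machinery that proves Theorem \ref{mainthm1}: the proof of Theorem \ref{mainthm0} (in Section \ref{LOCALHAM}) goes through Theorem \ref{localthm}, which explicitly invokes the $\mathsf{YQP/poly}$ checking algorithm furnished by Theorem \ref{yqpthm}, and Theorem \ref{yqpthm} is exactly the strengthening $\mathsf{BQP/qpoly}=\mathsf{YQP/poly}$ from which Theorem \ref{mainthm1} follows immediately. So invoking Theorem \ref{mainthm0} as a black box to deduce Theorem \ref{mainthm1} assumes what is to be proved. The numbering (Theorem 1 before Theorem 2) reflects expository order, not logical dependency; see also Figure \ref{yqpflow}.

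The paper's actual route bypasses local Hamiltonians entirely at this stage. It works directly with the p-concept class $S=\{f_\xi:\xi\text{ a }p(n)\text{-qubit state}\}$ where $f_\xi(x)=\Pr[Q(x,\xi)\text{ accepts}]$, bounds $\operatorname{fat}_\gamma(S)$ via Theorem \ref{qlearn}, and applies the real-valued majority-certificates lemma (Lemma \ref{realmajcer}) to express $f_{\rho_n}$ approximately as an average of $m=O(n)$ functions $f_{\rho[i]}$, each pinned down by its values on a small set $X_i$. The classical advice encodes the $X_i$'s and target values $r_{i,z}$; Merlin sends the states $\rho[1],\ldots,\rho[m]$; and the Aharonov--Regev $\mathsf{QMA}_+$ collapse (Theorem \ref{yqpplusthm}) is what lets Arthur check all the constraints $|\Pr[C_{i,z}\text{ accepts}]-r_{i,z}|\leq\alpha$ with a single witness. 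Note that Aharonov--Regev is used here not to certify closeness to a Hamiltonian ground state, but to estimate acceptance probabilities of many circuits to inverse-polynomial precision---a different role than the one you sketched.

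Even setting circularity aside, your soundness argument has a real hole you yourself flag: Theorem \ref{mainthm0} as stated promises only a \emph{unique} ground state, with no spectral-gap guarantee, so ``low energy implies close in trace distance'' is not available from its statement alone. Closing that gap would again require unpacking the Kitaev/Kempe--Kitaev--Regev construction, which the paper does only \emph{after} establishing Theorem \ref{yqpthm}.
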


Previously Aaronson showed in \cite{aar:adv} that $\mathsf{BQP/qpoly}
\subseteq\mathsf{PP/poly}$, and showed in \cite{aar:learn} that
$\mathsf{BQP/qpoly}$ is contained in the ``heuristic'' class
$\mathsf{HeurQMA/poly}$; Theorem \ref{mainthm1} supersedes both of these
earlier results.

Theorem \ref{mainthm1} says that one can always replace polynomial-size
quantum advice by polynomial-size \textit{classical} advice, together with a
polynomial-size untrusted quantum witness. \ Indeed, we can \textit{characterize} the class
$\mathsf{BQP/qpoly}$, as equal to the subclass of $\mathsf{QMA/poly}$ in which
the quantum witness state $\left\vert \psi_{n}\right\rangle $\ can only depend
on the input length $n$.\footnote{We call this restricted class
$\mathsf{YQP/poly}$.  Its definition is closely related to the earlier notion of \emph{input-oblivious nondeterminism}; this concept was used to define several other complexity classes in works of Chakravarthy and Roy~\cite{chakaravarthy} and Fortnow, Santhanam, and Williams~\cite{fsw}.  We have made a significant alteration to the definition of $\mathsf{YQP/poly}$ from prior versions of this work, as discussed in Section~\ref{ss:changes}.}

Using Theorem \ref{mainthm1}, we also obtain several other results for quantum
complexity theory:

\begin{enumerate}
\item[(1)] Without loss of generality, every quantum advice state can be taken
to be the ground state of some local Hamiltonian $H$. \ In essence, this result follows by combining our $\mathsf{BQP/qpoly}\subseteq\mathsf{QMA/poly}%
$\ result with the result of Kitaev~\cite{ksv} that \textsc{Local Hamiltonians} is
$\mathsf{QMA}$-complete.  The proof, however, requires a close analysis of the structure of low-energy states of the Hamiltonian $H$ in Kitaev's 5-local reduction (not proved or needed in~\cite{ksv}).  To show that the locality of $H$ can be reduced to 2, we use gadgets and a perturbation-theoretic result of Oliveira and Terhal~\cite{OT}, which built on Kempe, Kitaev and Regev's original proof of the $\mathsf{QMA}$-completeness of \textsc{2-Local Hamiltonians}~\cite{kkr}.\footnote{Related results appear in~\cite{JF}, although these seem not to give what we need.}  
%New edit here

\item[(2)] It is open whether for every local Hamiltonian $H$ on $n$ qubits,
there exists a quantum circuit of size $\operatorname*{poly}\left(  n\right)
$ that prepares a ground state of $H$. It is easy to show that an affirmative
answer would imply $\mathsf{QMA}=\mathsf{QCMA}$. As a consequence of Theorem
\ref{mainthm1}, we can show that an affirmative answer would also imply
$\mathsf{BQP/qpoly}=\mathsf{BQP/poly}$---thereby establishing a
previously-unknown connection between quantum proofs and quantum advice.

\item[(3)] We generalize Theorem \ref{mainthm1} to show that
$\mathsf{QCMA/qpoly}\subseteq\mathsf{QMA/poly}$.

\item[(4)] We use our new characterization of $\mathsf{BQP/qpoly}$\ to prove a
quantum analogue of the Karp-Lipton Theorem \cite{kl}. \ Recall that the
Karp-Lipton Theorem says that if $\mathsf{NP}\subset\mathsf{P/poly}$, then the
polynomial hierarchy collapses to the second level. \ Our ``Quantum Karp-Lipton
Theorem''\ says that if $\mathsf{NP}\subset\mathsf{BQP/qpoly}$ (that is,
$\mathsf{NP}$-complete problems are efficiently solvable with the help of
quantum advice), then $\mathsf{\Pi}_{\mathsf{2}}^{\mathsf{P}}\subseteq
\mathsf{QMA}^{\mathsf{P{}romiseQMA}}$. \ As far as we know, this\ is the first
nontrivial result to derive unlikely consequences from a hypothesis about
quantum machines being able to solve $\mathsf{NP}$-complete problems in
polynomial time.
\end{enumerate}

Finally, using our result, we are able to provide an illuminating perspective on a 2000 paper of Watrous \cite{watrous}. \ Watrous gave a
simple example of a ``purely-classical'' problem in $\mathsf{QMA}$\ that is not
\textit{obviously} in $\mathsf{QCMA}$---that is, for which quantum proofs
actually seem to help.\footnote{Aaronson and Kuperberg \cite{ak}, however,
give evidence that this problem might be in $\mathsf{QCMA}$, under conjectures
related to the Classification of Finite Simple Groups.} \ This problem is
called \textsc{Group Non-Membership}, and is defined as follows: Arthur is
given a finite \textit{black-box group} $G$ and a subgroup $H\leq
G$\ (specified by their generators), as well as an element $x\in G$. \ His
task is to verify that $x\notin H$. \ It is known that, as a black-box
problem, this problem is not in $\mathsf{MA}$. \ But Watrous showed that
\textsc{Group Non-Membership} is in $\mathsf{QMA}$, by a protocol in which
Merlin is ``expected'' to send the following quantum proof:
\[
\left\vert H\right\rangle =\frac{1}{\sqrt{\left\vert H\right\vert }}\sum_{h\in
H}\left\vert h\right\rangle .
\]
Arthur's verification procedure consists of two tests. \ In the first test,
Arthur \textit{assumes} that Merlin sent $\left\vert H\right\rangle $, and
then uses $\left\vert H\right\rangle $\ to decide whether $x\in H$. \ The test
is a simple, beautiful illustration of the power of quantum algorithms. \ The
second test in Watrous's protocol confirms that Merlin really sent $\left\vert
H\right\rangle $\ , or at least, a state which is ``equivalent'' for purposes of
the first test. \ This second test and its analysis are considerably more
involved, and seem less ``natural.''

Using our results, we see that a slightly weaker version of Watrous's result
can be derived in an almost automatic way from his first test, as follows. If
we assume that the black-box group $H = H_{n}$ is fixed for each input length,
then \textsc{Group Non-Membership} is in $\mathsf{BQP/qpoly}$, by letting
$\left\vert H_{n}\right\rangle $ as above be the trusted advice for length $n$
and using Watrous's first test as the $\mathsf{BQP/qpoly}$ algorithm. Then
Theorem \ref{mainthm1} (which can be readily adapted to the black-box setting)
tells us that Group Non-Membership is in $\mathsf{QMA/poly}$ as well.

\subsection{Changes to the Paper}\label{ss:changes}

We have corrected some significant issues with previous drafts.  First, the definition of so-called $\mathsf{YQP}$ machines needed to be amended to correct a deficiency in the previous definition, that prevented completeness- and soundness-amplification techniques from working as claimed.  This change appears necessary to preserve the claim $\mathsf{BQP/qpoly} = \mathsf{YQP/poly}$.  The revised definition of $\mathsf{YQP/poly}$ is actually more natural, and has the same intuitive interpretation: now as before, a $\mathsf{YQP/poly}$ machine receives trusted classical advice plus untrusted quantum advice, each determined solely by the input length, and applies two computations---a first which \emph{tests} the quantum advice $\rho$ by some measurement process, and a second which \emph{uses} $\rho$ to compute to decide membership of an input $x$ in some language $L$.

The necessary change is that, rather than testing one copy of $\rho$ and separately using another copy for the computation (an unnatural scenario, due to the No-Cloning Theorem of quantum mechanics), a $ \mathsf{YQP/poly}$ algorithm first tests $\rho$, then uses the \emph{modified}, post-measurement state $\rho'$ for computing $L(x)$.  
 The revised correctness requirement is that, for any quantum advice $\rho$ which has a noticeable chance of passing the test, the post-test state $\rho'$ is useful for computation, \emph{conditioned} on passing the test.  Section~\ref{QCC} contains relevant definitions and further discussion.

The second significant issue we have addressed (pointed out to us by a journal referee) is that the analysis of Local-Hamiltonian reductions for $\mathsf{QMA}$ in~\cite{ksv, kkr} does not immediately supply enough information about the structure of ground states to prove Theorem~\ref{mainthm0}.  In particular, ground states of the Hamiltonians produced need not be ``history states'' encoding $\mathsf{QMA}$ verifier computations in the intended format, as we had erroneously claimed.  

In the present version, we instead prove some properties of existing Local-Hamiltonian reductions that suffice for our original application.  First, we show that when Kitaev's reduction~\cite{kkr} is applied to a $\mathsf{QMA}$ verifier $V$ which accepts some proof state with probability close to 1, the resulting 5-local Hamiltonian $H_V$ is such that any nearly-minimal-energy state\footnote{Here, the \emph{energy} of a pure state $\ket{\psi}$ with respect to Hamiltonian $H$ is defined as $\bra{\psi}H\ket{\psi}$, and the minimal-energy states are precisely the ground states.} $\ket{\psi}$ is close (in trace distance) to a history state, and can be used to efficiently obtain a proof state accepted with high probability by $V$.  Next, we show that the reductions of Oliveira and Terhal~\cite{OT}, which can be used to transform a 5-local Hamiltonian $H^{(5)}$ into a 2-local $H^{(2)}$, are such that from any nearly-minimal-energy state for $H^{(2)}$ we can obtain a nearly-minimal-energy state for $H^{(5)}$.  While this property is not immediate from past work, it can be obtained by applying a powerful theorem in~\cite{OT} (building on~\cite{kkr}) which describes the behavior of $H^{(2)}$ on its low-energy subspaces.

\subsection{Proof Overview\label{TECHNIQUES}}

We now give an overview of the proof of Theorem \ref{mainthm1}, that
$\mathsf{BQP/qpoly}\subseteq\mathsf{QMA/poly}$. \ As we will explain, our
proof rests on a new idea we call the ``majority-certificates'' technique, which
is not specifically quantum and which seems likely to find other applications.

We begin with a language $L\in\mathsf{BQP/qpoly}$ and, for $n>0$, a
$\operatorname*{poly}(n)$-size quantum circuit $Q\left(  x, \xi\right)  $ that
computes $L(x)$ with high probability when given the ``correct'' advice state
$\xi=\rho_{n}$ on $\operatorname*{poly}\left(  n\right)  $ qubits. \ The
challenge, then, is to force Merlin to supply a witness state $\rho^{\prime}%
$\ that behaves like $\rho_{n}$\ on every input $x\in\left\{  0,1\right\}
^{n}$.

Every potential advice state $\xi$ defines a function $f_{\xi}:\left\{
0,1\right\}  ^{n}\rightarrow\left[  0,1\right]  $, by $f_{\xi}(x):=\Pr\left[
Q\left(  x, \xi\right)  =1\right]  $. \ For each such $\xi$, let
$\widehat{f}_{\xi}(x):=\left[  f_{\xi}(x)\geq1/2\right]  $ be the Boolean
function obtained by rounding $f_{\xi}$. \ As a simplification, suppose that
Merlin is restricted to sending an advice state $\xi$ for which $f_{\xi
}(x)\notin\left(  1/3,2/3\right)  $: that is, an advice state which renders a
``clear opinion'' about every input $x$. \ (This simplification helps to explain
the main ideas, but does not follow the actual proof.) \ Let $S$ be the set of
all Boolean functions $f:\left\{  0,1\right\}  ^{n}\rightarrow\left\{
0,1\right\}  $\ that are expressible as $\widehat{f}_{\xi}$ for some such
advice state $\xi$. \ Then $S$ includes the ``target function'' $f^{\ast}%
:=L_{n}$ (the restriction of $L$ to inputs of length $n$), as well as a
potentially-large number of other functions. \ However, we claim $S$ is not
\textit{too} large: $\left\vert S\right\vert \leq2^{\operatorname*{poly}%
\left(  n\right)  }$. \ This bound on the ``effective information content'' of
quantum states was derived previously by Aaronson \cite{aar:adv,aar:learn},
building on the work of Ambainis et al.\ \cite{antv}.

One might initially hope that, just by virtue of the size bound on $S$, we
could find some set of $\operatorname*{poly}(n)$ values%
\[
\left(  x_{1},f^{\ast}\left(  x_{1}\right)  \right)  ,\ldots,\left(
x_{k},f^{\ast}\left(  x_{t}\right)  \right)
\]
which \textit{isolate} $f^{\ast}$ in $S$---that is, which differentiate
$f^{\ast}$ from all other members of $S$. \ In that case, the trusted
classical advice could simply specify those values, as ``tests'' for Arthur to
perform on the quantum state sent by Merlin. \ Alas, this hope is unfounded in
general. \ For consider the case where $f^{\ast}$ is the identically-zero
function, and $S$ consists of $f^{\ast}$ along with the ``point function''
$f_{y}$ (which equals $1$ on $y$ and $0$ elsewhere), for all $y\in\left\{
0,1\right\}  ^{n}$. \ Then $f^{\ast}$ can only be isolated in $S$ by
specifying its value at \textit{every} point!

Luckily, this counterexample leads us to a key observation. \ Although $f^{*}$
is not isolatable in $S$ by a small number of values, each point function
$f_{y}$ \textit{can} be isolated (by its value at $y$), and moreover, $f_{y}$
is quite ``close'' to $f^{*}$. \ In fact, if we choose any three distinct
strings $x,y,z$, then $f^{\ast}\equiv\operatorname*{MAJ}\left(  f_{x}%
,f_{y},f_{z}\right)  $. \ (Of course if $f^{\ast}$\ were the identically-zero
function, it could be easily specified with classical advice! \ But $f^{\ast}%
$\ could have been any function in this example.)

This suggests a new, more indirect approach to our general problem: we try to
express $f$ as the pointwise majority vote
\[
f^{\ast}\left(  x\right)  \equiv\operatorname*{MAJ}\left(  f_{1}\left(
x\right)  ,\ldots,f_{m}\left(  x\right)  \right)  ,
\]
of a small number ($m=O\left(  n\right)  $) of \textit{other} functions
$f_{1},\ldots,f_{m}$ in $S$, where each $f_{i}$ is isolatable in $S$ by
specifying at most $k=O\left(  \log\left\vert S\right\vert \right)  $ of its
values. \ Indeed, we will show this can \textit{always} be done. \ We call
this key result the \textit{majority-certificates lemma}; we will say more
about its proof and its relation to earlier work in Section
\ref{MAJCERDISCUSSION}.

With this lemma in hand, we can solve our (artificially simplified) problem:
in the $\mathsf{QMA/poly}$ protocol for $L$, we use certificates which isolate
$f_{1},\ldots,f_{m}\in S$ as above as the classical advice for Arthur.
\ Arthur requests from Merlin each of the $m$ states $\xi_{1}, \ldots, \xi
_{m}$ such that $f_{i}=f_{\xi_{i}}$, and verifies that he receives appropriate
states by checking them against the certificates. \ This involves multiple
measurements of each $\xi_{i} $---and an immediate difficulty is that, since
measurements are irreversible in quantum mechanics, the process of verifying
the witness state\ might also destroy it. \ \ We get around this
difficulty by a somewhat more complicated protocol asking for multiple copies of each state $\xi_i$.  Our analysis builds on ideas of Aharonov and Regev \cite{ar} used to prove the complexity-class equality $\mathsf{QMA} = \mathsf{QMA}^+$; informally, this result says that protocols in which Arthur is granted the (physically unrealistic) ability to perform ``non-destructive measurements''\ on his witness state, can be efficiently simulated by
ordinary $\mathsf{QMA}$\ protocols.

To build intuition, we will begin (in Section \ref{MAJCER}) by proving the
majority-certificates lemma for Boolean functions, as described above.
\ However, to remove the artificial simplification we made and prove Theorem
\ref{mainthm1}, we will need to generalize the lemma substantially, to a
statement about possibly-infinite sets of real-valued functions $f:\left\{
0,1\right\}^{n}\rightarrow\left[  0,1\right]  $.\ \ In the general version,
the hypothesis that $S$ is finite and not too large will be replaced by a more
subtle assumption: namely, an upper bound on the so-called
\textit{fat-shattering dimension} of $S$. \ To prove our generalization, we
use powerful results of Alon et al.\ \cite{abch}\ and Bartlett and Long
\cite{bartlettlong} on the learnability of real-valued functions. \ We then
use a bound on the fat-shattering dimension of real-valued functions defined
by quantum states (from Aaronson \cite{aar:learn}, building on Ambainis et
al.\ \cite{antv}).  Figure \ref{yqpflow}\ shows the flow of ideas and results going into the proof.%

\begin{figure}[ptb]%
\centering
\includegraphics[trim=0in 0in 0in 0in, scale=.3]%
{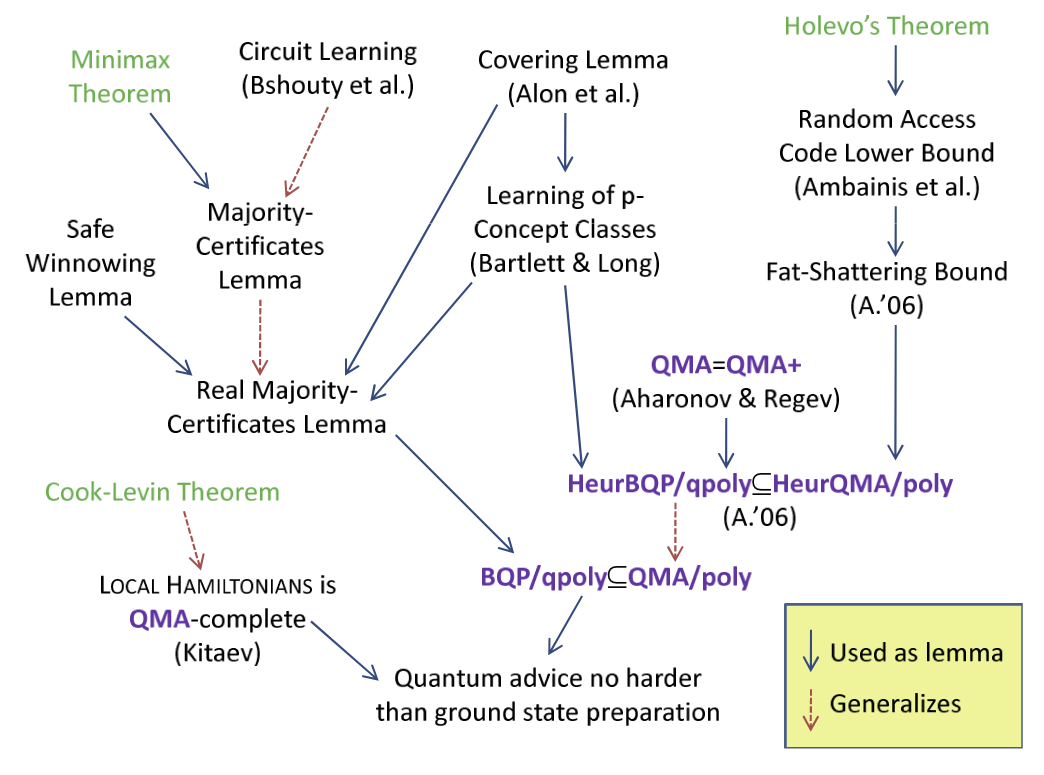}% yqpflow2.png
\caption{Dependency structure of our proof that quantum advice states can be
expressed as ground states of local Hamiltonians.}%
\label{yqpflow}%
\end{figure}

\subsection{Majority-Certificates Lemma in Context\label{MAJCERDISCUSSION}}

The majority-certificates lemma is closely related to the seminal notion of
\textit{boosting} \cite{schapire} from computational learning theory.
\ Boosting is a broad topic with a vast literature, but a common ``generic''
form of the boosting problem is as follows: we want to learn some target
function $f^{\ast}$, given sample data of the form $\left(  x,f^{\ast}\left(
x\right)  \right)  $. \ We assume we have a \textit{weak learning algorithm}
$A^{f^{\ast},\mathcal{D}}$, with the property that, for any probability
distribution $\mathcal{D}$ over inputs $x$, with high probability $A$ finds a
hypothesis $f\in\mathcal{F}$ which predicts $f^{\ast}\left(  x\right)
$\ ``reasonably well'' when $x\sim\mathcal{D}$. \ The task is to ``boost'' this
weak learner into a \textit{strong} learner $B^{f^{\ast}}$. \ The strong
learner should output a collection of functions $f_{1},\ldots,f_{m}%
\in\mathcal{F}$, such that a (possibly-weighted) majority vote over
$f_{1}\left(  x\right)  ,\ldots,f_{m}\left(  x\right)  $ predicts $f^{\ast
}\left(  x\right)  $ ``extremely well.'' \ It turns out \cite{schapire,
fs_adaboost} that this goal can be achieved in a very general setting.

Our majority-certificates lemma has strengths and weaknesses compared to
boosting. \ Our assumptions are much milder than those of boosting: rather
than needing a weak learner, we assume only that the hypothesis class $S$ is
``not too large.'' \ Also, we represent our target function $f^{\ast}$
\textit{exactly} by $\operatorname*{MAJ}\left(  f_{1},\ldots,f_{m}\right)  $,
not just approximately. \ On the other hand, we do not give an efficient
algorithm to \textit{find} our majority-representation. Also, the $f_{i}$'s
are not ``explicitly given:'' we only give a way to \textit{recognize} each
$f_{i}$, under the assumption that the function purporting to be $f_{i}$ is in
fact drawn from the original hypothesis class.

The proof of our lemma also has similarities to boosting. \ As an analogue of
a ``weak learner,'' we show that for every distribution $\mathcal{D}$, there
exists a function $f\in S$ which agrees with the target function $f^{\ast}$ on
most $x\sim\mathcal{D}$, \textit{and} which is isolatable in $S$ by specifying
$O(\log|S|)$ queries. \ Using the Minimax Theorem, we then nonconstructively
``boost'' this fact into the desired majority-representation of $f^{\ast}$. \ We
note that Nisan used the Minimax Theorem for boosting in a similar way, in his
alternative proof of Impagliazzo's ``hard-core set theorem'' (see
\cite{imp_hardcore}).

The majority-certificates lemma is also reminiscent of Bshouty et al.'s
algorithm\ \cite{bshouty}, for learning small circuits in the complexity class
$\mathsf{ZPP}^{\mathsf{NP}}$. \ Our lemma lacks the algorithmic component of
this earlier work, but unlike Bshouty et al., we do not require the functions
being learned to come with any succinct labels (such as circuit descriptions).

\subsection{Organization of the Paper}

%new edit here

In Section \ref{MAJCER}, we prove the Boolean majority-certificates-lemma. In
Section \ref{REALMAJ}, we give our real-valued generalization of this lemma,
and in Section \ref{QADV} we use it to prove Theorem \ref{mainthm1}, and state
some consequences for quantum complexity classes.  Theorem \ref{mainthm0} is
proved in Sections \ref{LOCALHAM} through \ref{sec:2local}. Section \ref{NERD} contains some further
applications to quantum complexity theory, and the Appendices provide some additional
applications of and perspectives on the majority-certificates lemma.

\section{The Majority-Certificates Lemma\label{MAJCER}}

A \textit{Boolean concept class} is a family of sets $\left\{  S_{n}\right\}
_{n\geq1}$, where each $S_{n}$\ consists of Boolean functions $f:\left\{
0,1\right\}  ^{n}\rightarrow\left\{  0,1\right\}  $ on $n$ variables.
\ Abusing notation, we will often use $S$ to refer directly to a set of
Boolean functions on $n$ variables, with the quantification over $n$ being understood.

By a \textit{certificate}, we mean a partial Boolean function $C:\left\{
0,1\right\}  ^{n}\rightarrow\left\{  0,1,\ast\right\}  $. \ The \textit{size}
of $C$, denoted $\left\vert C\right\vert $, is the number of inputs $x$ such
that $C\left(  x\right)  \in\left\{  0,1\right\}  $. \ A Boolean function
$f:\left\{  0,1\right\}  ^{n}\rightarrow\left\{  0,1\right\}  $\ is
\textit{consistent} with $C$ if $f\left(  x\right)  =C\left(  x\right)
$\ whenever $C\left(  x\right)  \in\left\{  0,1\right\}  $. \ Given a set $S$
of Boolean functions and a certificate $C$, let $S\left[  C\right]  $\ be the
set of all functions $f\in S$\ that are consistent with $C$. \ Say that a
function $f\in S$ is \textit{isolated in} $S$\ by the certificate $C$ if
$S\left[  C\right]  =\left\{  f\right\}  $.

We now prove a lemma that represents one of the main tools of this paper
(although it will be generalized, rather than used directly).

\begin{lemma}
[Majority-Certificates Lemma]\label{majcer}Let $S$ be a set of Boolean
functions $f:\left\{  0,1\right\}  ^{n}\rightarrow\left\{  0,1\right\}  $, and
let $f^{\ast}\in S$. \ Then there exist $m=O\left(  n\right)  $\ certificates
$C_{1},\ldots,C_{m}$, each of size $k=O\left(  \log\left\vert S\right\vert
\right)  $, and functions $f_{1}, \ldots, f_{m} \in S$, such that

\begin{enumerate}
\item[(i)] $S\left[  C_{i}\right]  = \{f_{i}\}$\ all $i\in\left[  m\right]  $;

\item[(ii)] $\operatorname*{MAJ}\left(  f_{1}\left(  x\right)  ,\ldots
,f_{m}\left(  x\right)  \right)  =f^{\ast}\left(  x\right)  $ for all
$x\in\left\{  0,1\right\}  ^{n}$.
\end{enumerate}
\end{lemma}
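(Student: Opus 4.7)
The plan is to follow the weak-learning-plus-boosting template suggested in Section \ref{MAJCERDISCUSSION}. First I would establish a \emph{weak isolation} statement: for every probability distribution $\mathcal{D}$ on $\left\{0,1\right\}^{n}$, there exist an $f\in S$ and a certificate $C$ with $\left|C\right|=O(\log\left|S\right|)$, $S[C]=\left\{f\right\}$, and $\Pr_{x\sim\mathcal{D}}[f(x)=f^{\ast}(x)]\geq 2/3$. I would then use the minimax theorem to convert this into a distribution $\mu$ over such pairs $(f,C)$ with $\Pr_{(f,C)\sim\mu}[f(x)=f^{\ast}(x)]\geq 2/3$ at \emph{every} $x$. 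Finally, a Chernoff bound combined with a union bound over the $2^{n}$ inputs shows that $m=O(n)$ i.i.d.\ samples from $\mu$ already produce a pointwise majority equal to $f^{\ast}$ everywhere, yielding both (i) and (ii).

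For the weak isolation step, I would build the certificate in two stages. Stage one draws $k=\Theta(\log\left|S\right|)$ independent points $x_{1},\ldots,x_{k}\sim\mathcal{D}$ and forms the partial certificate $C_{0}=\{(x_{i},f^{\ast}(x_{i}))\}_{i\leq k}$; any $g\in S$ with $\Pr_{\mathcal{D}}[g\neq f^{\ast}]>1/3$ survives $C_{0}$ with probability at most $(2/3)^{k}$, so a union bound over $S$ shows that for $k=O(\log\left|S\right|)$ some sample makes every $g\in S[C_{0}]$ agree with $f^{\ast}$ on at least a $2/3$ fraction of the $\mathcal{D}$-mass. Stage two extracts an isolatable element of the restricted class $T:=S[C_{0}]$ by a recursive halving argument: while $\left|T\right|\geq 2$, pick any $x$ on which $T$ splits nontrivially, follow the smaller side $T_{b}$, and record the query $(x,b)$. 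This produces some $f\in T$ isolated in $T$ by a certificate of size at most $\lceil\log_{2}\left|T\right|\rceil\leq\lceil\log_{2}\left|S\right|\rceil$, with every entry consistent with $f$. Concatenating with $C_{0}$ yields a certificate of size $O(\log\left|S\right|)$ isolating $f$ in all of $S$, and $f\in T$ is automatically $\mathcal{D}$-close to $f^{\ast}$.

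For the boosting step, let $\mathcal{H}$ be the (finite) set of pairs $(f,C)$ with $f\in S$, $\left|C\right|\leq K:=O(\log\left|S\right|)$, and $S[C]=\left\{f\right\}$. Consider the zero-sum game in which the prover plays $(f,C)\in\mathcal{H}$, the adversary plays $x\in\left\{0,1\right\}^{n}$, and the payoff is $\mathbf{1}[f(x)=f^{\ast}(x)]$. Weak isolation says that against every mixed adversary strategy $\mathcal{D}$, the prover has a pure response of expected payoff $\geq 2/3$, so von Neumann's minimax theorem yields a mixed strategy $\mu^{\ast}$ for the prover with $\Pr_{(f,C)\sim\mu^{\ast}}[f(x)=f^{\ast}(x)]\geq 2/3$ for every $x$. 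Drawing $(f_{1},C_{1}),\ldots,(f_{m},C_{m})$ i.i.d.\ from $\mu^{\ast}$, a Chernoff bound gives $\Pr[\operatorname*{MAJ}(f_{1}(x),\ldots,f_{m}(x))\neq f^{\ast}(x)]\leq e^{-\Omega(m)}$ for each fixed $x$; a union bound over the $2^{n}$ inputs then shows that $m=O(n)$ samples suffice for some realization to satisfy (ii) everywhere, while (i) holds by construction of $\mathcal{H}$.

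The main obstacle, I expect, is the weak isolation step, and specifically producing an isolatable $f$ after restricting by $C_{0}$: the target $f^{\ast}$ itself is generally not isolatable by few queries (the point-function example in Section~\ref{TECHNIQUES} illustrates this), so one cannot simply take $f=f^{\ast}$. The two-stage construction---first use random sampling to cut $S$ down to a subclass in which every survivor is $\mathcal{D}$-close to $f^{\ast}$, then use recursive halving to extract an isolatable representative---is precisely what sidesteps this difficulty.
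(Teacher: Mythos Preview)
Your proposal is correct and follows essentially the same approach as the paper's proof: both prove a ``weak isolation'' claim (the paper's Claim~\ref{heurpart}) via random sampling against $\mathcal{D}$ followed by recursive halving, then apply the Minimax Theorem and a Chernoff/union bound to obtain the majority representation. The only cosmetic differences are your use of $2/3$ versus the paper's $9/10$ for the agreement threshold, and the paper's simplifying assumption that $f^{\ast}\equiv 0$ (so that $C_{0}$ records only zeros).
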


\begin{proof}
Our proof of Lemma \ref{majcer}\ relies on the following claim.

\begin{claim}
\label{heurpart} Let $\mathcal{D}$ be any distribution over inputs
$x\in\{0,1\}^{n}$. Then there exists a function $f\in S$ such that

\begin{itemize}
\item[(i)] $f$ is isolatable in $S$ by a certificate $C$ of size $k=O\left(
\log\left\vert S\right\vert \right)  $;

\item[(ii)] $\Pr_{x\sim\mathcal{D}}[f(x)\neq f^{\ast}(x)]\leq\frac{1}{10}$.
\end{itemize}
\end{claim}

Lemma~\ref{majcer} follows from Claim~\ref{heurpart} by a boosting-type
argument, as follows. Consider a two-player game where:

\begin{itemize}
\item Alice chooses a certificate $C$ of size $k$ that isolates some $f\in S$, and

\item Bob simultaneously chooses an input $x\in\left\{  0,1\right\}  ^{n}$.
\end{itemize}

Alice wins the game if $f\left(  x\right)  =f^{\ast}\left(  x\right)  $.
Claim~\ref{heurpart} tells us that for every mixed strategy of Bob (i.e.,
distribution $\mathcal{D}$ over inputs), there exists a pure strategy of Alice
that succeeds with probability at least $0.9$ against $\mathcal{D}$. \ Then by
the Minimax Theorem, there exists a mixed strategy for Alice---that is, a
probability distribution $\mathcal{C}$ over certificates---that allows her to
win with probability at least $0.9$\ against \textit{every} pure strategy of Bob.

Now suppose we draw $C_{1},\ldots,C_{m}$ independently from $\mathcal{C}$,
isolating functions $f_{1},\ldots,f_{m}$ in $S$. Fix an input $x\in\left\{
0,1\right\}^{n}$; then by the success of Alice's strategy against $x$, and
applying a Chernoff bound,%
\[
\Pr_{C_{1},\ldots,C_{m}\sim\mathcal{C}}\left[  \operatorname*{MAJ}\left(  f_{1}\left(
x\right)  ,\ldots,f_{m}\left(  x\right)  \right)  \neq f^{\ast}(x)\right]
<\frac{1}{2^{n}},
\]
provided we choose $m=O\left(  n\right)  $\ suitably. \ But by the union
bound, this means there must be a \textit{fixed} choice of $C_{1},\ldots
,C_{m}$ such that $\operatorname*{MAJ}\left(  f_{1},\ldots,f_{m}\right)
\equiv f^{\ast}$, where each $f_{i}$ is isolated in $S$ by $C_{i}$. This
proves Lemma \ref{majcer}, modulo the Claim.
\end{proof}

\begin{proof}
[Proof of Claim \ref{heurpart}]By symmetry, we can assume without loss of
generality that $f^{\ast}$ is the identically-zero function. \ Given the mixed
strategy $\mathcal{D}$ of Bob, we construct the certificate $C$ as follows.
\ Initially $C$ is empty: that is, $C\left(  x\right)  =\ast$\ for all
$x\in\left\{  0,1\right\}  ^{n}$. \ In the first stage, we draw $t=O\left(
\log\left\vert S\right\vert \right)  $\ inputs $x_{1},\ldots,x_{t}%
$\ independently from $\mathcal{D}$. \ For any $f:\{0,1\}^{n}\rightarrow
\{0,1\}$, let%
\[
w_{f}:=\Pr_{x\sim\mathcal{D}}\left[  f\left(  x\right)  =1\right]  .
\]
Now suppose $f$ is such that $w_{f}>0.1$. \ Then%
\[
\Pr_{x_{1},\ldots,x_{t}\sim\mathcal{D}}\left[  f\left(  x_{1}\right)
=0\wedge\cdots\wedge f\left(  x_{t}\right)  =0\right]  <0.9^{t}\leq\frac
{1}{\left\vert S\right\vert },
\]
provided $t\geq\log_{10/9}\left\vert S\right\vert $. \ So by the union bound,
there must be a \textit{fixed} choice of $x_{1},\ldots,x_{t}$\ that kills off
every $f\in S$\ such that $w_{f}>0.1$---that is, such that $f\left(
x_{1}\right)  =\cdots=f\left(  x_{t}\right)  =0$\ implies $w_{f}\leq0.1$.
\ Fix that $x_{1},\ldots,x_{t}$, and set $C\left(  x_{i}\right)  :=0$ for all
$i\in\left[  t\right]  $.

In the second stage, our goal is just to isolate some \textit{particular}
function $f\in S\left[  C\right]  $. \ We do this recursively as follows. \ If
$\left\vert S\left[  C\right]  \right\vert =1$\ then we are done. \ Otherwise,
there exists an input $x$ such that $f\left(  x\right)  \neq f^{\prime}\left(
x\right) $\ for some pair $f, f^{\prime}\in S\left[  C\right]  $. \ If setting
$C\left(  x\right)  :=0$\ decreases $\left\vert S\left[  C\right]  \right\vert
$ by at least a factor of $2$, then set $C\left(  x\right)  :=0$; otherwise
set $C\left(  x\right)  :=1$. \ Since $S\left[  C\right]  $ can halve in size
at most $\log_{2}\left\vert S\right\vert $\ times, this procedure terminates
after at most $\log_{2}\left\vert S\right\vert $\ steps with $\left\vert
S\left[  C\right]  \right\vert =1$.

The end result is a certificate $C$ of size $O\left(  \log\left\vert
S\right\vert \right)  $, which isolates a function $f$ in $S$ for which
$w_{f}\leq1/10$. \ We have therefore found a pure strategy for Alice that
fails with probability at most $1/10$\ against $\mathcal{D}$, as desired.
\end{proof}

\section{Extension to Real Functions\label{REALMAJ}}

In this section, we extend the majority-certificates lemma from Boolean
functions to real-valued functions $f:\left\{  0,1\right\}  ^{n}%
\rightarrow\left[  0,1\right]  $. \ We will need this extension for the
application to quantum advice in Section \ref{QADV}. \ In proving our
extension we will have to confront several new difficulties. \ Firstly, the
concept classes $S$ that we want to consider can now contain a
\textit{continuum} of functions---so Lemma \ref{majcer}, which assumed that
$S$ was finite and constructed certificates of size\ $O\left(  \log\left\vert
S\right\vert \right)  $, is not going to work. \ In Section \ref{LEARNBACK},
we review notions from computational learning theory, including fat-shattering
dimension and $\varepsilon$-covers, which\ (combined with results of Alon et
al.\ \cite{abch} and Bartlett and Long \cite{bartlettlong}) can be used to get
around this difficulty. \ Secondly, it is no longer enough to isolate a
function $f_{i}\in S$\ that we are interested in; instead we will need to
``safely''\ isolate $f_{i}$, which roughly speaking means that (i) $f_{i}$ is
consistent with some certificate $C$, and (ii) any $f\in S$\ that is even
\textit{approximately} consistent with $C$ is close to\ $f_{i}$. \ In Section
\ref{SAFEWINNOW_REALMAJ}, we prove a ``safe winnowing lemma''\ that can be used for this
purpose, and put our ingredients together to
prove a real-valued majority-certificates lemma.

\subsection{Background from Learning Theory\label{LEARNBACK}}

A \textit{p-concept class} $S$ over $\{0, 1\}^n$ is a family of functions $f:\left\{
0,1\right\}  ^{n}\rightarrow\left[  0,1\right]  $\ (as usual, quantification
over all $n$ is understood). \ Given functions $f,g:\left\{  0,1\right\}
^{n}\rightarrow\left[  0,1\right]  $ and a subset of inputs $X\subseteq
\left\{  0,1\right\}  ^{n}$, we will be interested in three measures of the
distance between $f$ and $g$ restricted to $X$:%
\begin{align*}
\Delta_{\infty}\left(  f,g\right)  \left[  X\right]   &  :=\max_{x\in
X}\left\vert f\left(  x\right)  -g\left(  x\right)  \right\vert ,\\
\Delta_{2}\left(  f,g\right)  \left[  X\right]   &  :=\sqrt{\sum_{x\in
X}\left(  f\left(  x\right)  -g\left(  x\right)  \right)  ^{2}},\\
\Delta_{1}\left(  f,g\right)  \left[  X\right]   &  :=\sum_{x\in X}\left\vert
f\left(  x\right)  -g\left(  x\right)  \right\vert .
\end{align*}
For convenience, we define $\Delta_{\infty}\left(  f,g\right)  :=\Delta
_{\infty}\left(  f,g\right)  \left[  \left\{  0,1\right\}  ^{n}\right]  $, and
similarly for $\Delta_{2}\left(  f,g\right)  $\ and $\Delta_{1}\left(
f,g\right)  $. \ Also, given a distribution $\mathcal{D}$ over $\left\{
0,1\right\}  ^{n}$, define%
\[
\Delta_{1}\left(  f,g\right)  \left\langle \mathcal{D}\right\rangle
:=\operatorname*{E}_{x\sim\mathcal{D}}\left[  \left\vert f\left(  x\right)
-g\left(  x\right)  \right\vert \right]  .
\]
Finally, we will need the notions of coverability and fat-shattering dimension.

\begin{definition}
[Coverability]Let $S$ be a p-concept class over $\{0, 1\}^n$. \ The subset $C\subseteq S$\ is an
$\varepsilon$\emph{-cover}\ for $S$ if for all $f\in S$,\ there exists a $g\in
C$\ such that $\Delta_{\infty}\left(  f,g\right)  \leq\varepsilon$.  \ We say
$S$ is \emph{coverable} if for all $\varepsilon>0$, there exists an $\varepsilon
$-cover\ for $S$\ of size $2^{\operatorname*{poly}\left(  n,1/\varepsilon
\right)  }$.

\end{definition}

\begin{definition}
[Fat-Shattering Dimension]Let $S$ be a p-concept class over $\{0, 1\}^n$ and $\varepsilon>0$\ be
a real number. \ We say the set $A\subseteq\left\{  0,1\right\}  ^{n}$\ is
\emph{$\varepsilon$-shattered} by $S$ if there exists a function $r:A\rightarrow
\left[  0,1\right]  $\ such that for all $2^{\left\vert A\right\vert }$
Boolean functions $g:A\rightarrow\left\{  0,1\right\}  $, there exists a
p-concept $f\in S$\ such that for all $x\in A$, we have $f\left(  x\right)
\leq r\left(  x\right)  -\varepsilon$\ whenever $g\left(  x\right)  =0$\ and
$f\left(  x\right)  \geq r\left(  x\right)  +\varepsilon$\ whenever $g\left(
x\right)  =1$. \ Then the $\varepsilon$-fat-\emph{shattering dimension}
of\ $S$, denoted $\operatorname*{fat}_{\varepsilon}\left(  S\right)  $, is the size
of the largest set $\varepsilon$-shattered by $S$.

We say $S$ is bounded-dimensional if $\operatorname*{fat}_{\varepsilon}\left(  S\right) \leq\operatorname*{poly}\left(  n,1/\varepsilon\right)  $\ for all $\varepsilon>0$. \end{definition}

The p-concept classes we consider in this paper will be convex, when considered as subsets of $[0, 1]^{2^n}$.  We remark that for such classes, $\operatorname*{fat}_{\varepsilon}\left(  S\right)$ measures the largest dimension of any axis-parallel subcube contained in $S$ of side length $2\eps$.

The following central result was shown by Alon et al.\ \cite{abch} (see also
\cite{tewari}).

\begin{theorem}
[\cite{abch}]\label{coverthm}Every p-concept class $S$ has an $\varepsilon
$-cover of size $\exp\left[  O\left(  \left(  n+\log1/\varepsilon\right)
\operatorname*{fat}_{\varepsilon/4}\left(  S\right)  \right)  \right]$.
\end{theorem}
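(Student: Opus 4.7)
The plan is to derive the covering bound via a real-valued analogue of the classical Sauer--Shelah lemma, combining discretization with a packing argument driven by the fat-shattering dimension. First I would discretize: round each $f\in S$ to $\widetilde{f}$ taking values in the grid $G_\varepsilon := \{j\varepsilon/4 : 0\le j\le 4/\varepsilon\}$, forming a class $\widetilde{S}$. Any $(\varepsilon/2)$-cover of $\widetilde{S}$ under $\Delta_\infty$ on the full domain $\{0,1\}^n$ is automatically an $\varepsilon$-cover of $S$, so it suffices to bound the size of a maximal $(\varepsilon/2)$-separated subset $T\subseteq\widetilde{S}$, since by packing--covering duality such a $T$ is itself an $(\varepsilon/2)$-cover.

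The combinatorial heart is a Sauer--Shelah type bound: if every pair of functions in $T$ differs by at least $\varepsilon/2$ at some point of $\{0,1\}^n$, then $|T|\le\exp\bigl[O((n+\log(1/\varepsilon))\cdot d)\bigr]$ where $d=\operatorname{fat}_{\varepsilon/4}(S)$. The cleanest route is a random-projection argument. For each $x\in\{0,1\}^n$, draw a threshold $r(x)$ uniformly from $G_\varepsilon$, and associate to each $f\in T$ the Boolean function $b_f(x):=[f(x)>r(x)]$. Two $(\varepsilon/2)$-separated members of $T$ become distinct Boolean functions with probability $\Omega(\varepsilon)$ on their witnessing input, so in expectation one of the induced Boolean classes $\mathcal{B}_r=\{b_f:f\in T\}$ has cardinality at least $|T|\cdot\operatorname{poly}(\varepsilon)$. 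The ordinary Sauer--Shelah lemma then gives $|\mathcal{B}_r|\le\binom{2^n}{\le d_r}$, where $d_r$ is the VC dimension of $\mathcal{B}_r$. Finally, any set VC-shattered by $\mathcal{B}_r$ is automatically $(\varepsilon/4)$-fat-shattered by $S$, with the chosen $r$ serving as the witnessing function from the definition of fat-shattering dimension; hence $d_r\le d$, and rearranging yields the claimed bound on $|T|$.

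The main obstacle is the combinatorial random-projection step: carefully balancing the survival probability of separated real-valued functions as distinct Booleans against the budget needed to lift Boolean shattering to genuine $(\varepsilon/4)$-fat-shattering. Particular care is needed in the scale gap between $\varepsilon/2$ (the packing scale) and $\varepsilon/4$ (the fat-shattering scale in the cited bound), which is precisely what the initial discretization is designed to create. With those details in place, the expression $\binom{2^n}{\le d}\cdot\operatorname{poly}(1/\varepsilon)^d$ simplifies to $\exp\bigl[O((n+\log(1/\varepsilon))\cdot\operatorname{fat}_{\varepsilon/4}(S))\bigr]$, giving the theorem; bounded dimensionality of $S$ then implies coverability as an immediate corollary.
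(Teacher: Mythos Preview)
The paper does not prove this theorem: it is stated as a result of Alon, Ben-David, Cesa-Bianchi, and Haussler \cite{abch} and simply cited, with no proof given in the text. So there is nothing to compare your proposal against.

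That said, your sketch is in the right spirit (discretize, pack, reduce to a Sauer--Shelah count controlled by the fat-shattering dimension), but the random-projection step as you have written it has a gap. From ``any two $(\varepsilon/2)$-separated functions in $T$ become distinct Booleans with probability $\Omega(\varepsilon)$'' you cannot conclude that for some $r$ the induced Boolean class $\mathcal{B}_r$ has size $\ge |T|\cdot\operatorname{poly}(\varepsilon)$: pairwise survival probabilities of order $\varepsilon$ are far too weak to prevent massive collapse when $|T|$ is exponentially large, since the number of pairs is $\binom{|T|}{2}$ and linearity of expectation on ``pairs that collide'' gives nothing useful here. The actual argument in \cite{abch} is a more delicate combinatorial induction (a real-valued Sauer--Shelah lemma) rather than a one-shot random threshold, and getting the dependence $\operatorname{fat}_{\varepsilon/4}(S)$ in the exponent requires that extra care. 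If you want to pursue a randomized reduction, you would need to control not just pairwise distinguishability but the full multiplicity of preimages under $f\mapsto b_f$, which is essentially as hard as the direct combinatorial route.
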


Building on the work of Alon et al.\ \cite{abch}, Bartlett and Long
\cite{bartlettlong} then proved the following:

\begin{theorem}
[\cite{bartlettlong}]\label{blthm}Let $S$ be a p-concept class and
$\mathcal{D}$ be a distribution over $\left\{  0,1\right\}  ^{n}$. \ Fix an
$f:\left\{  0,1\right\}  ^{n}\rightarrow\left[  0,1\right]  $\ (not
necessarily in $S$)\ and an error parameter $\alpha>0$. \ Suppose we form a
set $X\subseteq\left\{  0,1\right\}  ^{n}$ by choosing $m$ inputs
independently with replacement from $\mathcal{D}$. \ Then there exists a
positive constant $K$ such that, with probability at least $1-\delta$ over
$X$, any hypothesis $h\in S$\ that minimizes $\Delta_{1}\left(  h,f\right)
\left[  X\right]  $ also satisfies%
\[
\Delta_{1}\left(  h,f\right)  \left\langle \mathcal{D}\right\rangle \leq
\alpha+\inf_{g\in S}\Delta_{1}\left(  g,f\right)  \left\langle \mathcal{D}%
\right\rangle ,
\]
provided that%
\[
m\geq\frac{K}{\alpha^{2}}\left(  \operatorname*{fat}\nolimits_{\alpha
/5}\left(  S\right)  \log^{2}\frac{1}{\alpha}+\log\frac{1}{\delta}\right)  .
\]

\end{theorem}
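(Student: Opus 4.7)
The plan is to prove this via the standard ``uniform convergence implies agnostic PAC learnability'' template. Define the loss class $\mathcal{L}=\{\ell_h : h\in S\}$ where $\ell_h(x):=|h(x)-f(x)|\in[0,1]$. It suffices to establish a two-sided uniform convergence bound: with probability at least $1-\delta$ over $X\sim\mathcal{D}^m$, every $h\in S$ satisfies
$$\left|\tfrac{1}{m}\Delta_1(h,f)[X]-\Delta_1(h,f)\langle\mathcal{D}\rangle\right|\leq \alpha/3.$$
Granting this, the theorem follows by a short ERM argument: for any empirical $\ell_1$-minimizer $h$ and any $g\in S$ whose true error is within $\alpha/3$ of $\inf_{g\in S}\Delta_1(g,f)\langle\mathcal{D}\rangle$, empirical-optimality gives empirical$(h)\leq$ empirical$(g)$, and applying uniform convergence once to $h$ and once to $g$ yields true$(h)\leq$ true$(g)+2\alpha/3\leq \alpha+\inf_{g\in S}\Delta_1(g,f)\langle\mathcal{D}\rangle$.

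The heart of the proof is therefore the uniform convergence bound. I would first observe that subtracting $f$ and taking $|\cdot|$ barely affects the fat-shattering dimension: any set $\alpha$-shattered by $\mathcal{L}$ with witness $r$ yields a set $\alpha$-shattered by $S$ using witness $f\pm r$ (split the shattered set according to the sign needed), so $\operatorname{fat}_\alpha(\mathcal{L})\leq O(\operatorname{fat}_\alpha(S))$ up to constants. To remove the input-space dimension $n$ that appears in Theorem \ref{coverthm}, I would invoke the classical Pollard/Vapnik symmetrization trick: introduce an i.i.d.\ ghost sample $X'\sim\mathcal{D}^m$ and use the standard inequality
$$\Pr_X\!\left[\sup_h|\widehat E_X\ell_h-E_\mathcal{D}\ell_h|>\alpha/3\right]\leq 2\Pr_{X,X'}\!\left[\sup_h|\widehat E_X\ell_h-\widehat E_{X'}\ell_h|>\alpha/6\right].$$
Conditioning on the realized multiset $X\cup X'$ of $2m$ points, I apply Theorem \ref{coverthm} \emph{to the restriction of $\mathcal{L}$ to $X\cup X'$}, treating those $2m$ points as the entire input space. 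This produces an $(\alpha/O(1))$-cover of size $\exp\!\bigl[O\bigl((\log m+\log(1/\alpha))\cdot\operatorname{fat}_{\alpha/5}(S)\bigr)\bigr]$, with the crucial effect that the ambient $n$ has been replaced by $\log m$. A Hoeffding bound on a uniformly random $\pm 1$ swapping of $X$ with $X'$, combined with a union bound over this sample-dependent cover, bounds the symmetrized probability by $\delta$ as soon as $m$ exceeds the stated threshold.

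The main obstacle is the implicit self-reference: the cover bound depends on $m$ via $\log m$, so the condition on $m$ has the form $m\gtrsim \alpha^{-2}\bigl(\operatorname{fat}_{\alpha/5}(S)\log m+\log(1/\delta)\bigr)$. A standard fixed-point argument, or a chaining argument that replaces the running $\log m$ by $\log^2(1/\alpha)$ plus a lower-order term, unwinds this into the clean closed-form bound claimed. The other delicate bookkeeping step is propagating the fat-shattering scale through the two applications of the cover construction and the Chernoff step so that only $\operatorname{fat}_{\alpha/5}(S)$ appears in the final bound rather than $\operatorname{fat}_{\alpha/c}(S)$ for some smaller $c$; this is purely a matter of reoptimizing constants in the cover-size/deviation-probability trade-off.
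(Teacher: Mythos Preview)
The paper does not prove Theorem~\ref{blthm}; it is quoted as a black-box result of Bartlett and Long~\cite{bartlettlong} (note the ``\texttt{[\textbackslash cite\{bartlettlong\}]}'' in the theorem header) and is used only as an ingredient in Corollary~\ref{occamcor}. So there is no proof in the paper to compare your proposal against.

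As a standalone sketch, your outline follows the standard template (symmetrization, sample-dependent covers via Theorem~\ref{coverthm} restricted to the double sample, Hoeffding plus union bound). Two remarks are worth making. First, the step relating $\operatorname{fat}_\alpha(\mathcal{L})$ to $\operatorname{fat}_\alpha(S)$ when $\ell_h(x)=|h(x)-f(x)|$ is a little more delicate than your one-line argument suggests, because the absolute value folds the range; one typically argues via the class $\{h-f:h\in S\}$ and then handles the folding separately, but it does go through. Second, and more importantly, the passage from a bound of the form $m\gtrsim\alpha^{-2}(\operatorname{fat}_{\alpha/5}(S)\log m+\log(1/\delta))$ to the stated closed form with $\log^2(1/\alpha)$ is precisely the technical contribution of~\cite{bartlettlong}; calling it ``a standard fixed-point argument, or a chaining argument'' undersells the work involved. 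A naive fixed point would give $\log m\approx\log(\operatorname{fat}\cdot\alpha^{-2})$, not $\log^2(1/\alpha)$, and getting the clean dependence requires the more careful analysis in that paper. If you actually need to reprove this rather than cite it, that is where the real effort lies.
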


Theorem \ref{blthm}\ has the following corollary, which is similar to
Corollary 2.4 of Aaronson \cite{aar:learn}, but more directly suited to our
purposes here.\footnote{It would also be possible to apply the bound from
\cite{aar:learn} ``off-the-shelf,''\ but at the cost of a worse dependence on
$1/\varepsilon$.}

\begin{corollary}
\label{occamcor}Let $S$ be a p-concept class over $\{0, 1\}^n$ and $\mathcal{D}$ be a
distribution over $\left\{  0,1\right\}  ^{n}$. \ Fix an $f\in S$\ and an
error parameter $\varepsilon>0$. \ Suppose we form a set $X\subseteq\left\{
0,1\right\}  ^{n}$ by choosing $m$ inputs independently with replacement from
$\mathcal{D}$. \ Then there exists a positive constant $K$ such that, with
probability at least $1-\delta$ over $X$, any hypothesis $h\in S$\ that
satisfies $\Delta_{\infty}\left(  h,f\right)  \left[  X\right]  \leq
\varepsilon$ also satisfies $\Delta_{1}\left(  h,f\right)  \left\langle
\mathcal{D}\right\rangle \leq11\varepsilon$,\ provided%
\[
m\geq\frac{K}{\varepsilon^{2}}\left(  \operatorname*{fat}%
\nolimits_{\varepsilon}\left(  S\right)  \log^{2}\frac{1}{\varepsilon}%
+\log\frac{1}{\delta}\right)  .
\]

\end{corollary}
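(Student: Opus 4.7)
\ \ The plan is to apply Theorem~\ref{blthm} with target function $f$ and class $S$, exploiting the fact that $f$ itself lies in $S$ so that the $\inf$ term vanishes, and then to bridge the minor gap between the ``empirical minimizer'' conclusion of Theorem~\ref{blthm} and the ``any $h$ with small $\Delta_\infty[X]$'' conclusion we need via the uniform-convergence statement that underlies the Bartlett--Long bound.

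Concretely, I would set $\alpha := 10\varepsilon$ and invoke Theorem~\ref{blthm} with this $\alpha$, the given $\delta$, target $f$, and class $S$. Since $f \in S$, $\inf_{g \in S}\Delta_1(g,f)\langle \mathcal{D}\rangle = 0$ (take $g=f$), so the conclusion collapses to $\Delta_1(h,f)\langle \mathcal{D}\rangle \leq \alpha$ for any $\Delta_1[X]$-minimizer $h \in S$. Monotonicity of $\operatorname*{fat}_\varepsilon(S)$ in $\varepsilon$ together with the crude bound $\log^2(1/\alpha) = O(\log^2(1/\varepsilon))$ lets the sample-size requirement of Theorem~\ref{blthm} be folded into the simpler form claimed in the corollary by enlarging the constant $K$. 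To upgrade ``minimizer'' to ``any $h \in S$ with $\Delta_\infty(h,f)[X] \leq \varepsilon$'', I would invoke the two-sided uniform-convergence statement that underlies Theorem~\ref{blthm}: with probability at least $1 - \delta$ over the choice of $X$, every $h \in S$ simultaneously satisfies
\[
\left| \frac{1}{m}\Delta_1(h,f)[X] - \Delta_1(h,f)\langle \mathcal{D}\rangle \right| \leq \alpha.
\]
Any $h \in S$ with $\Delta_\infty(h,f)[X] \leq \varepsilon$ automatically has $\frac{1}{m}\Delta_1(h,f)[X] \leq \varepsilon$, and hence $\Delta_1(h,f)\langle \mathcal{D}\rangle \leq \varepsilon + \alpha = 11\varepsilon$, as claimed.

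I expect this last step to be essentially the only real obstacle: Theorem~\ref{blthm} as quoted is phrased purely in terms of empirical $\Delta_1$-minimizers, whereas Corollary~\ref{occamcor} demands a conclusion for \emph{any} $h \in S$ that merely has small $\Delta_\infty$ loss on $X$. The resolution should be routine bookkeeping, since the Bartlett--Long argument (fat-shattering combined with the covering bound of Theorem~\ref{coverthm} and standard empirical-process symmetrization) in fact delivers the displayed two-sided uniform-convergence estimate en route to the minimizer conclusion, and from that estimate both the original minimizer version and our slightly stronger Corollary~\ref{occamcor} follow on equal footing.
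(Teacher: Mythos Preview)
Your approach is correct but takes a different route from the paper's. You work directly with $S$ and target $f$, and to pass from ``empirical $\Delta_1$-minimizer'' to ``any $h$ with $\Delta_\infty(h,f)[X]\le\varepsilon$'' you explicitly invoke the two-sided uniform-convergence estimate underlying Bartlett--Long. The paper instead enlarges $S$ to the class $S^*$ of all functions within $\varepsilon$ (in $\Delta_\infty$) of some member of $S$, and for each candidate $h$ manufactures an auxiliary target $g\in S^*$ by setting $g:=h$ on $X$ and $g:=f$ off $X$; then $\Delta_1(h,g)[X]=0$, so $h$ is an \emph{exact} empirical minimizer with respect to $g$, and Theorem~\ref{blthm} (applied to the class $S^*$) together with the fat-shattering comparison $\operatorname*{fat}_{\alpha/5}(S^*)\le\operatorname*{fat}_{\alpha/5-\varepsilon}(S)$ and a triangle inequality $\Delta_1(h,f)\langle\mathcal{D}\rangle\le\Delta_1(h,g)\langle\mathcal{D}\rangle+\Delta_\infty(g,f)$ finish the job with $\alpha=10\varepsilon$. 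The paper's $S^*$ device is designed to stay within the minimizer formulation of Theorem~\ref{blthm} as quoted; your argument is shorter and more transparent but requires citing the stronger uniform-convergence form of the Bartlett--Long result rather than the black-box statement provided in the paper.
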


\begin{proof}
Let $S^{\ast}$\ be the p-concept class consisting of all functions $g:\left\{
0,1\right\}  ^{n}\rightarrow\left[  0,1\right]  $\ for which there exists an
$f\in S$\ such that $\Delta_{\infty}\left(  g,f\right)  \leq\varepsilon$.
\ Fix an $f\in S$ and a distribution $\mathcal{D}$, and let $X$ be chosen as
in the statement of the corollary. \ Suppose we choose a hypothesis $h\in
S$\ such that $\Delta_{\infty}\left(  h,f\right)  \left[  X\right]  %new part: starting at 'Define a ...'
\leq\varepsilon$. \  Define a function $g$ by setting $g\left(  x\right)  :=h\left(  x\right)  $\ if $x\in X$\ and $g\left(  x\right)  :=f\left(  x\right)  $\ otherwise.  Note that $\Delta_{\infty}\left( g, f \right) \leq \varepsilon$ and that $g \in S^*$.
Also note that\ $\Delta_{1}\left(  h,g\right)  \left[  X\right]  =0$, which means that $h$ minimizes the functional $\Delta_{1}\left(  h,g\right) \left[  X\right]  $\ over all hypotheses in $S$ (and indeed in $S^{\ast}$).
\ By Theorem \ref{blthm}, this implies that with probability at least
$1-\delta$ over $X$,%
%Then there exists a function $g\in S^{\ast}$\ such that
%$g\left(  x\right)  =h\left(  x\right)  $\ for all $x\in X$. \ This $g$ is
%simply obtained by setting $g\left(  x\right)  :=h\left(  x\right)  $\ if
%$x\in X$\ and $g\left(  x\right)  :=f\left(  x\right)  $\ otherwise. \ In
%particular, note that\ $\Delta_{1}\left(  h,g\right)  \left[  X\right]  =0$,
%which means that $h$ minimizes the functional $\Delta_{1}\left(  h,g\right)
%\left[  X\right]  $\ over all hypotheses in $S$ (and indeed in $S^{\ast}$).
\[
\Delta_{1}\left(  h,g\right)  \left\langle \mathcal{D}\right\rangle \leq
\alpha+\inf_{u\in S^{\ast}}\Delta_{1}\left(  u,g\right)  \left\langle
\mathcal{D}\right\rangle =\alpha
\]
for all $\alpha>0$, provided we take%
\[
m\geq\frac{K}{\alpha^{2}}\left(  \operatorname*{fat}\nolimits_{\alpha
/5}\left(  S^{\ast}\right)  \log^{2}\frac{1}{\alpha}+\log\frac{1}{\delta
}\right)  .
\]
Here we have used the fact that $g\in S^{\ast}$, and hence%
\[
\inf_{u\in S^{\ast}}\Delta_{1}\left(  u,g\right)  \left\langle \mathcal{D}%
\right\rangle =0.
\]
So by the triangle inequality,%
\begin{align*}
\Delta_{1}\left(  h,f\right)  \left\langle \mathcal{D}\right\rangle  &
\leq\Delta_{1}\left(  h,g\right)  \left\langle \mathcal{D}\right\rangle
+\Delta_{1}\left(  g,f\right)  \left\langle \mathcal{D}\right\rangle \\
&  \leq\alpha+\Delta_{\infty}\left(  g,f\right) \\
&  \leq\alpha+\varepsilon.
\end{align*}
Next,\ we claim that $\operatorname*{fat}\nolimits_{\alpha/5}\left(  S^{\ast
}\right)  \leq\operatorname*{fat}\nolimits_{\alpha/5-\varepsilon}\left(
S\right)  $. \ The reason is simply that, if a given set is $\beta$
-fat-shattered by \ $S^{\ast}$, then it must also be $\left(  \beta-\varepsilon
\right)  $-fat-shattered by $S$,\ by the triangle inequality. \ Setting
$\alpha:=10\varepsilon$\ now yields the desired statement.
\end{proof}

\subsection{The Safe Winnowing Lemma and the Real-Valued Majority-Certificates Lemma\label{SAFEWINNOW_REALMAJ}}

An important technical step toward proving the real-valued majority-certificates lemma is our so-called ``Safe Winnowing Lemma.''\ \ This lemma says intuitively that,
given any set $S$ of real-valued functions with a small $\varepsilon$-cover
(or equivalently, with polynomially-bounded fat-shattering dimension), and given any $f^* \in S$ and subset $Y \subseteq \{0, 1\}^n$ of inputs to $f^*$, it is
possible to find a set of $k=\operatorname*{poly}\left(  n\right)  $
constraints $\left\vert f\left(  x_{1}\right)  -a_{1}\right\vert \leq\epsilon
$, \ldots, $\left\vert f\left(  x_{k}\right)  -a_{k}\right\vert \leq\epsilon$, and another function $f \in S$, such that $f$ is close to $f^*$ in $L_{\infty}$ norm on $Y$, and $f$ is \textit{essentially} the only function in $S$ compatible with the constraints. \ Here ``essentially''\ means that (i) any function that satisfies the
constraints is close to $f^*$\ in $L_{\infty}$-norm, and (ii) $f^*$ itself not
only satisfies the constraints, but does so with a ``margin to spare.''

\begin{lemma}
[Safe Winnowing Lemma]\label{linfwinnow}Let $S$ be a p-concept class over $\{0, 1\}^n$. \ Fix a
function $f^{\ast}\in S$ and subset $Y\subseteq\left\{  0,1\right\}  ^{n}$.
\ For some parameter $\varepsilon>0$, let $C$ be a finite $\varepsilon$-cover
for $S$. \ Then there exists an $f\in S$, as well as a subset $Z\subseteq
\left\{  0,1\right\}  ^{n}$\ of size at most $k=\log_{2}\left\vert
C\right\vert $,\ such that:

\begin{enumerate}
\item[(i)] Every $g\in S$\ that satisfies $\Delta_{\infty}\left(  f,g\right)
\left[  Y\cup Z\right]  \leq\frac{\varepsilon}{5k}$\ also satisfies
$\Delta_{\infty}\left(  f,g\right)  \leq3\varepsilon$.

\item[(ii)] $\Delta_{\infty}\left(  f,f^{\ast}\right)  \left[  Y\right]
\leq\varepsilon/5$.
\end{enumerate}
\end{lemma}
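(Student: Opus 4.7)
The plan is to adapt the Boolean halving argument from Lemma~\ref{majcer} to the real-valued setting, using the given $\varepsilon$-cover $C$ to discretize $S$.

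First, to ensure condition (ii), I would restrict attention to $Y$-near-$f^*$ functions: define $S_Y := \{g \in S : \Delta_\infty(g, f^*)[Y] \le \varepsilon/5\}$, and for each $g \in S_Y$ fix a cover element $\pi(g) \in C$ with $\Delta_\infty(g, \pi(g)) \le \varepsilon$. Let $\tilde C := \pi(S_Y) \subseteq C$, so that $f^* \in S_Y$ and $|\tilde C| \le |C| = 2^k$. The construction then proceeds in stages, maintaining a live subset $A_i \subseteq \tilde C$ (initially $A_0 := \tilde C$), a witness $f_i \in S_Y$ with $\pi(f_i) \in A_i$ (initially $f_0 := f^*$), and a point set $Z_i$ of size $i$ (initially $Z_0 := \emptyset$). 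At each stage with $|A_i| \ge 2$, I would find $x_i \in \{0,1\}^n$ and $v_i \in [0,1]$ such that partitioning $A_i$ by whether $c(x_i) \le v_i$ or $c(x_i) > v_i$ is both halving and gap-respecting (see below), then set $A_{i+1}$ to the smaller side while updating $f_{i+1}$ to some $h \in S_Y$ with $\pi(h) \in A_{i+1}$, and add $x_i$ to $Z$. After at most $k$ stages $|A_k| \le 1$, and I would take $f := f_k$ and $Z := Z_k$.

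Condition (ii) then follows immediately from $f \in S_Y$. To verify (i), suppose $g \in S$ satisfies $\Delta_\infty(f, g)[Y \cup Z] \le \varepsilon/(5k)$, and set $c^* := \pi(f)$, $c_g := \pi(g)$. If $c_g = c^*$, the triangle inequality already gives $\Delta_\infty(f, g) \le 2\varepsilon$. Otherwise $c_g$ was eliminated at some stage $i$; the gap-respecting halving guarantees $|c_g(x_i) - c^*(x_i)| > 2\varepsilon + \varepsilon/(5k)$, so using $\Delta_\infty(f, c^*) \le \varepsilon$ and $\Delta_\infty(g, c_g) \le \varepsilon$ the triangle inequality yields $|f(x_i) - g(x_i)| > \varepsilon/(5k)$, contradicting the hypothesis at $x_i \in Z$.

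The main obstacle is securing the gap-respecting halving---choosing $x_i, v_i$ such that $|A_i|$ drops by a factor of at least two AND the two sides have values separated by a gap strictly greater than $2\varepsilon + \varepsilon/(5k)$ at $x_i$. A bare median split can yield zero gap, since live cover elements may cluster arbitrarily at a chosen input. I would handle this by first pre-processing $\tilde C$ greedily, retaining $\pi(f^*)$ and discarding any cover element within $L_\infty$-distance $2\varepsilon + \varepsilon/(5k)$ of an already-retained element (such elements are interchangeable for condition (i) by the triangle inequality), then arguing combinatorially on the resulting well-separated live set that either a suitable $(x_i, v_i)$ exists, or else all live cover elements already lie pairwise within $2\varepsilon + \varepsilon/(5k)$ in $L_\infty$---in which case we can terminate early and condition (i) is already satisfied by any $f_i$. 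The factor $1/(5k)$ in the tolerance precisely absorbs the $k$-fold accumulation of gap slack across the $k$ halving stages.
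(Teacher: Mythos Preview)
There is a genuine gap: the ``gap-respecting halving'' you need does not exist in general, even after your greedy $L_\infty$-thinning. Here is a concrete obstruction. Take $S=C$ to consist of $m$ functions $c_1,\ldots,c_m$ on (an embedding of) $m$ inputs, with $c_i(x_j)=\bigl((i+j)\bmod m\bigr)/(m-1)$ --- a scaled cyclic Latin square. At every coordinate the multiset of values is exactly $\{0,\tfrac{1}{m-1},\ldots,1\}$, so every consecutive gap equals $\tfrac{1}{m-1}$. Yet any two rows differ by at least $\lceil m/2\rceil/(m-1)\approx\tfrac12$ at some coordinate. For $m$ moderately large relative to $1/\varepsilon$, all pairs survive your greedy thinning at threshold $\gamma=2\varepsilon+\varepsilon/(5k)$, but no coordinate offers a gap exceeding $\gamma$. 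Your procedure therefore stalls with $|A_i|\ge 2$: neither the split step nor the early-termination clause (which, after thinning, just means $|A_i|\le 1$) applies. A secondary issue is that in your verification of (i), the function $g$ need only lie in $S$, not in $S_Y$, so $\pi(g)$ may not belong to $\tilde C$ and hence was never ``eliminated at some stage'' --- and routing through the nearest retained element costs you an extra $\gamma$ that the triangle-inequality budget cannot absorb.

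The paper sidesteps both problems by never searching for a gap among cover elements. Instead, at each stage it tests whether the current pair $(f_{t-1},Z_{t-1})$ already satisfies condition (i). If not, it takes a \emph{witness of failure}: a $g\in S_{t-1}$ that is $\delta$-close to $f_{t-1}$ on $Y\cup Z_{t-1}$ but $3\varepsilon$-far at some new point $z_t$. It then splits $S_{t-1}$ at the midpoint of $f_{t-1}(z_t)$ and $g(z_t)$. Because these two values are more than $3\varepsilon$ apart, their cover representatives (each within $\varepsilon$) land on opposite sides, so $|S_t\cap C|$ halves automatically; and whichever side is kept contains one of $f_{t-1},g$ as the new witness $f_t$. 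Since $g$ was $\delta$-close to $f_{t-1}$ on $Y$, the witness drifts on $Y$ by at most $\delta$ per step, and $k\delta=\varepsilon/5$ gives condition (ii) --- this cumulative drift, not ``gap slack,'' is what the $1/(5k)$ is calibrated for. The missing idea is to let the violating function choose both the split coordinate and the threshold.
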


%NEEDED ANYMORE:  ?
Note that Lemma \ref{linfwinnow} is still interesting in the special case
$Y=\varnothing$, in which case $f^{\ast}$\ is irrelevant, and the problem
reduces to finding a $Z$\ such that every $g\in S$\ that satisfies
$\Delta_{\infty}\left(  f,g\right)  \left[  Z\right]  \leq\frac{\varepsilon
}{5k}$\ also satisfies $\Delta_{\infty}\left(  f,g\right)  \leq3\varepsilon$.

\ In Appendix \ref{WINNOW}, we will develop the theory of ``winnowability''\ of
p-concept classes for its own sake. \ We show there that the condition
$\Delta_{\infty}\left(  f,g\right)  \left[  Z\right]  =O\left(  \varepsilon
/k\right)  $ can be improved to $\Delta_{1}\left(  f,g\right)  \left[
Z\right]  =O\left(  \varepsilon\right)  $. \ On the other hand, the proof
becomes more involved, and we no longer know how to incorporate $f^{\ast}%
$\ and $Y$. \ We also show that the condition $\Delta_{\infty}\left(
f,g\right)  \left[  Z\right]  =O\left(  \varepsilon/k\right)  $%
\ \textit{cannot} be improved to $\Delta_{\infty}\left(  f,g\right)  \left[
Z\right]  =O\left(  \varepsilon\right)  $\ or even $\Delta_{2}\left(
f,g\right)  \left[  Z\right]  =O\left(  \varepsilon\right)  $.

%ELIMINATING:
%\subsection{The Real-Valued Majority-Certificates Lemma\label{THELEMMA}}

We defer the proof of Lemma~\ref{linfwinnow}, showing first how it helps us to prove our generalization of Lemma \ref{majcer}\ to the case of real-valued functions:

\begin{lemma}
[Real Majority-Certificates]\label{realmajcer}Let $S$ be a p-concept class over $\{0, 1\}^n$,
let $f^{\ast}\in S$, and let $\varepsilon>0$. \ Then for some $m=O\left(
n/\varepsilon^{2}\right)  $, there exist functions $f_{1},\ldots,f_{m}\in S$,
sets $X_{1},\ldots,X_{m}\subseteq\left\{  0,1\right\}  ^{n}$\ each of size
$k=O\left(  \left(  n+\frac{\log^{2}1/\varepsilon}{\varepsilon^{2}}\right)
\operatorname*{fat}\nolimits_{\varepsilon/48}\left(  S\right)  \right)  $, and
an\ $\alpha=\Omega\left(  \frac{\varepsilon}{\left(  n+\log1/\varepsilon
\right)  \operatorname*{fat}\nolimits_{\varepsilon/48}\left(  S\right)
}\right)  $ for which the following holds. \ All $g_{1},\ldots,g_{m}\in
S$\ that satisfy $\Delta_{\infty}\left(  f_{i},g_{i}\right)  \left[
X_{i}\right]  \leq\alpha$\ for $i\in\left[  m\right]  $\ also satisfy
$\Delta_{\infty}\left(  f^{\ast},g\right)  \leq\varepsilon$, where%
\[
g\left(  x\right)  :=\frac{g_{1}\left(  x\right)  +\cdots+g_{m}\left(
x\right)  }{m}.
\]

\end{lemma}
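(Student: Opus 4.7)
The plan is to imitate the proof of the Boolean Lemma~\ref{majcer}, substituting ``$|S|$ is small'' with ``$S$ has small fat-shattering dimension'' (exploited via Theorem~\ref{coverthm} and Corollary~\ref{occamcor}), and replacing ``isolation by a certificate'' with the safe-isolation guarantee of Lemma~\ref{linfwinnow}. As in the Boolean argument, the proof has three acts: (i) a real-valued \emph{weak-learner claim} that, for every distribution $\mathcal{D}$ over $\{0,1\}^n$, exhibits a candidate pair $(f, X)$; (ii) a nonconstructive boost via the Minimax Theorem to a mixed strategy whose expected $L_1$-error against $f^*$ is small at every input; (iii) a Hoeffding-plus-union-bound sampling step extracting a pointwise $L_\infty$ approximation.

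For (i), fix a small $\varepsilon' = \Theta(\varepsilon)$ and let $C$ be the $\varepsilon'$-cover from Theorem~\ref{coverthm}, so $k_0 := \log_2 |C| = O((n+\log(1/\varepsilon))\,\operatorname*{fat}_{\varepsilon/48}(S))$. Given $\mathcal{D}$, draw a sample $Y$ of size $m' = O(\varepsilon^{-2}(\operatorname*{fat}_{\varepsilon'}(S)\log^2(1/\varepsilon)+1))$ from $\mathcal{D}$ and apply Lemma~\ref{linfwinnow} to $(S, f^*, Y, C)$. This returns $(f, Z)$ with $|Z| \leq k_0$, $\Delta_\infty(f, f^*)[Y] \leq \varepsilon'/5$, and the safe-isolation property: any $g \in S$ obeying $\Delta_\infty(f,g)[Y \cup Z] \leq \alpha := \varepsilon'/(5k_0)$ has $\Delta_\infty(f,g) \leq 3\varepsilon'$. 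Setting $X := Y \cup Z$ (so $|X| \leq m' + k_0$ matches the stated $k$), Corollary~\ref{occamcor} converts the $L_\infty$-closeness of $f$ to $f^*$ on the random $Y$ into $\Delta_1(f, f^*)\langle \mathcal{D}\rangle \leq 11\varepsilon'/5 \leq \varepsilon/4$ with probability $>0$, so some such pair $(f, X)$ exists for every $\mathcal{D}$.

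For (ii), I would set up a two-player zero-sum game in which Alice picks a safely-isolating pair $(f, X)$ (with $f$ drawn from the finite cover $C$---perturbing $f$ to its nearest cover element and absorbing the $\varepsilon'$-loss---and $X$ a size-$k$ subset of $\{0,1\}^n$) and Bob picks $x$, with payoff $|f(x) - f^*(x)|$. Step (i) gives Bob no mixed strategy of value exceeding $\varepsilon/4$; by the Minimax Theorem, there is a mixed Alice-strategy $\mathcal{A}$ with $\mathbb{E}_{(f,X)\sim \mathcal{A}}|f(x) - f^*(x)| \leq \varepsilon/4$ for every $x$. For (iii), drawing $(f_i, X_i)_{i=1}^m$ i.i.d.\ from $\mathcal{A}$, Hoeffding applied to the bounded random variable $f(x) \in [0,1]$ gives $\Pr[|\tfrac{1}{m}\sum_i f_i(x) - \mathbb{E}_{\mathcal{A}} f(x)| > \varepsilon/4] < 2^{-n-1}$ for $m = \Theta(n/\varepsilon^2)$, so a union bound over $x$ fixes a collection with $|\tfrac{1}{m}\sum_i f_i(x) - f^*(x)| \leq \varepsilon/2$ uniformly, using $|\mathbb{E}_\mathcal{A} f(x) - f^*(x)| \leq \mathbb{E}_\mathcal{A}|f(x) - f^*(x)| \leq \varepsilon/4$. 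Finally, safe isolation propagates under averaging: if $g_i \in S$ obeys $\Delta_\infty(f_i,g_i)[X_i] \leq \alpha$, then $\Delta_\infty(f_i,g_i) \leq 3\varepsilon'$, hence $\Delta_\infty(g, \tfrac{1}{m}\sum_i f_i) \leq 3\varepsilon'$, and two triangle inequalities give $\Delta_\infty(g, f^*) \leq \varepsilon$ once $\varepsilon'$ is taken a small constant fraction of $\varepsilon$.

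The main obstacle I anticipate is the weak-learner claim of step (i): it has to produce, for every $\mathcal{D}$, a single pair $(f, X)$ that simultaneously carries the purely geometric safe-isolation property and the $\mathcal{D}$-dependent $L_1$-closeness to $f^*$. The trick making this work is that Lemma~\ref{linfwinnow} accepts an arbitrary pre-attached set $Y$, so we can \emph{use} a random sample from $\mathcal{D}$ as part of the isolating set $X$, and then invoke Corollary~\ref{occamcor} to convert pointwise agreement on the sample into expected agreement under $\mathcal{D}$. After that the residual work is bookkeeping: verifying that $|X| = m' + k_0$ matches the claimed $k$, that the margin $\alpha = \varepsilon'/(5k_0)$ matches the claimed $\Omega(\varepsilon / ((n + \log(1/\varepsilon))\operatorname*{fat}_{\varepsilon/48}(S)))$, and that all triangle-inequality losses fit inside $\varepsilon$.
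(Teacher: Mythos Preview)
Your three-act structure---weak learner via safe winnowing plus sampling, Minimax, then Hoeffding and union bound---is exactly the paper's approach, and is correct in outline. One step as written would fail, however. In act~(ii) you round the winnowing output $f$ to its nearest cover element $f'\in C$, and then in act~(iii) you invoke safe isolation for the sampled $f_i\in C$. But $C$ is only an $\varepsilon'$-cover, whereas the safe-isolation tolerance from Lemma~\ref{linfwinnow} is $\alpha=\varepsilon'/(5k_0)\ll\varepsilon'$; so from $\Delta_\infty(f_i,g_i)[X_i]\le\alpha$ with $f_i\in C$ you can only conclude $\Delta_\infty(f,g_i)[X_i]\le\alpha+\varepsilon'$ for the original Lemma~\ref{linfwinnow} output $f$, which does \emph{not} trigger the lemma's hypothesis. ``Absorbing the $\varepsilon'$-loss'' works for the payoff $|f(x)-f^*(x)|$ but not for the certificate margin.

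The paper's fix is to introduce a \emph{second}, much finer cover $S_{\operatorname*{fin}}$ at scale $\alpha$ (used solely to make Alice's strategy space finite for Minimax), and to build a factor-of-two slack into the winnowing threshold so that rounding by $\alpha$ can be absorbed: if $\Delta_\infty(f,u)\le\alpha$ and $\Delta_\infty(f,g)[X]\le\alpha$, then $\Delta_\infty(u,g)[X]\le 2\alpha$, which is arranged to still meet the Lemma~\ref{linfwinnow} threshold. With that correction your argument goes through. Your remaining organizational difference---taking the game payoff to be $|f(x)-f^*(x)|$ and adding a separate uniform bound $\Delta_\infty(f_i,g_i)\le 3\varepsilon'$ at the end, rather than the paper's single payoff $\sup_{g\in S[f,X]}|f^*(x)-g(x)|$---is an equivalent repackaging of the same estimates.
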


\begin{proof}
Let%
\begin{align*}
\beta &  :=\frac{\varepsilon}{48},\\
t  &  :=C\left(  n+\log\frac{1}{\beta}\right)  \operatorname*{fat}%
\nolimits_{\beta}\left(  S\right)  ,\\
\alpha &  :=\frac{0.4\beta}{t},
\end{align*}
where $C$ is a suitably large constant. \ Also, let $S_{\operatorname*{fin}}%
$\ be a finite $\alpha$-cover for $S$: that is, a finite subset
$S_{\operatorname*{fin}}\subseteq S$\ such that for all $f\in S$, there exists
a $g\in S_{\operatorname*{fin}}$\ such that $\Delta_{\infty}\left(
f,g\right)  \leq\alpha$.\footnote{We will need $S_{\operatorname*{fin}}$ for
the technical reason that the basic Minimax Theorem only works with finite
strategy spaces.} \ Given $f$ and $X$, let $S\left[  f,X\right]  $\ be the set
of all $g\in S$\ such that $\Delta_{\infty}\left(  f,g\right)  \left[
X\right]  \leq\alpha$.

Now consider a two-player game where Alice chooses a function $f\in
S_{\operatorname*{fin}}$ and a set $X\subseteq\left\{  0,1\right\}  ^{n}$ of
size $k$, and Bob simultaneously chooses an input $x\in\left\{  0,1\right\}
^{n}$. \ Alice's \textit{penalty} in this game (the number she is trying to
minimize) equals%
\[
\sup_{g\in S\left[  f,X\right]  }\left\vert f^{\ast}\left(  x\right)
-g\left(  x\right)  \right\vert .
\]
We claim that there exists a mixed strategy for Alice---that is, a probability
distribution $\mathcal{P}$ over $\left(  f,X\right)  $\ pairs---that gives her
an expected penalty of at most $\varepsilon/2$\ against every pure strategy of Bob.

Let us see why Lemma~\ref{realmajcer} follows from this claim. \ Fix an input $x\in\left\{
0,1\right\}  ^{n}$,\ and suppose Alice draws $\left(  f_{1},X_{1}\right)
,\ldots,\left(  f_{m},X_{m}\right)  $\ independently from $\mathcal{P}$.
\ Then for all $i\in\left[  m\right]  $,%
\[
\bE_{\left(  f_{i},X_{i}\right)  \sim\mathcal{P}}\left[
\sup_{g\in S\left[  f,X\right]  }\left\vert f^{\ast}\left(  x\right)
-g\left(  x\right)  \right\vert \right]  \leq\frac{\varepsilon}{2}.
\]
Thus, letting $z_{1},\ldots,z_{m}$\ be independent random variables in
$\left[  0,1\right]  $,\ each with expectation at most $\varepsilon/2$, the
expression%
\[
\Pr_{\left(  f_{1},X_{1}\right)  ,\ldots,\left(  f_{m},X_{m}\right)
\sim\mathcal{P}}\left[  \exists g_{1}\in S\left[  f_{1},X_{1}\right]
,\ldots,g_{m}\in S\left[  f_{m},X_{m}\right]  ~:~\left\vert f^{\ast}\left(
x\right)  -\frac{g_{1}\left(  x\right)  +\cdots+g_{m}\left(  x\right)  }%
{m}\right\vert >\varepsilon\right]
\]
is at most $\Pr\left[  z_{1}+\cdots z_{m}>\varepsilon m\right]  $\ using the
triangle inequality. \ This, in turn, is less than%
\[
2\exp\left(  -\frac{2\left(  \varepsilon m / 2\right)  ^{2}}{m}\right)
<2^{-n}%
\]
by Hoeffding's inequality, provided we choose $m=O\left(  n/\varepsilon
^{2}\right)  $\ suitably. \ By the union bound, this means that there must be
a fixed choice of $f_{1},\ldots,f_{m}$\ and $X_{1},\ldots,X_{m}$\ such that%
\[
\left\vert f^{\ast}\left(  x\right)  -\frac{g_{1}\left(  x\right)
+\cdots+g_{m}\left(  x\right)  }{m}\right\vert \leq\varepsilon
\]
for all $g_{1}\in S\left[  f_{1},X_{1}\right]  ,\ldots,g_{m}\in S\left[
f_{m},X_{m}\right]  $\ and all inputs $x\in\left\{  0,1\right\}  ^{n}%
$\ simultaneously, as desired.

We now prove the claim. \ By the Minimax Theorem, our task is equivalent to
the following: given any mixed strategy\ $\mathcal{D}$\ of Bob, find a
\textit{pure} strategy of Alice that achieves a penalty of at most
$\varepsilon/2$ against $\mathcal{D}$. \ In other words, given any
distribution $\mathcal{D}$\ over inputs $x\in\left\{  0,1\right\}  ^{n}$, we
want a fixed function $f\in S_{\operatorname*{fin}}$, and a set $X\subseteq
\left\{  0,1\right\}  ^{n}$\ of size $k$, such that%
\[
\bE_{x\sim\mathcal{D}}\left[  \sup_{g\in S\left[  f,X\right]
}\left\vert f^{\ast}\left(  x\right)  -g\left(  x\right)  \right\vert \right]
\leq\frac{\varepsilon}{2}.
\]
We construct this $\left(  f,X\right)  $\ pair as follows.
In the first stage, we let $Y$\ be a set, of size at most%
\[
M:=\frac{K}{\beta^{2}}\left(  \operatorname*{fat}\nolimits_{\beta}\left(
S\right)  \log^{2}\frac{1}{\beta}+\log\frac{1}{\delta}\right)  ,
\]
formed by choosing $M$\ inputs independently with replacement from
$\mathcal{D}$. \ Here $\beta=\varepsilon/48$ as defined earlier, $\delta=1/2$,
and $K$ is the constant from Corollary \ref{occamcor}. \ Then by Corollary
\ref{occamcor}, with probability at least $1-\delta=1/2$\ over the choice of
$Y$, any $g\in S$\ that satisfies $\Delta_{\infty}\left(  f^{\ast},g\right)
\left[  Y\right]  \leq\beta$\ also satisfies $\Delta_{1}\left(  f^{\ast
},g\right)  \left\langle \mathcal{D}\right\rangle \leq11\beta$. \ So there
must be a \textit{fixed} choice of $Y$\ with that property. \ Fix that $Y$,
and let $S^{\prime}$ be the set of all $g\in S$\ such that $\Delta_{\infty
}\left(  f^{\ast},g\right)  \left[  Y\right]  \leq\beta$.

In the second stage, our goal is just to use Lemma~\ref{linfwinnow} to winnow $S^{\prime}$ down to a
particular function $f$. \ More precisely, we want to find an $f\in S^{\prime
}\cap S_{\operatorname*{fin}}$, and a set $X\subseteq\left\{  0,1\right\}
^{n}$ containing $Y$, such that any $g\in S$ that satisfies $\Delta_{\infty
}\left(  f,g\right)  \left[  X\right]  \leq\alpha$\ also satisfies
$\Delta_{\infty}\left(  f,g\right)  \leq11\beta$.  We assert that such a pair $\left(f, X \right)$ can be found.  It will then follow that
\begin{align*}
\bE_{x\sim\mathcal{D}}\left[  \sup_{g\in S\left[  f,X\right]
}\left\vert f^{\ast}\left(  x\right)  -g\left(  x\right)  \right\vert \right]
&  \leq\Delta_{1}\left(  f^{\ast},f\right)  \left\langle \mathcal{D}%
\right\rangle +\sup_{g\in S\left[  f,X\right]  }\Delta_{\infty}\left(
f,g\right) \\
&  \leq11\beta+13\beta\\
&  =\frac{\varepsilon}{2},%
\end{align*}
which proves that $(f, X)$ give a strategy for Alice having the needed quality against the mixed strategy $\mathcal{D}$ for Bob.

We find the desired $\left(  f,X\right)  $\ pair as follows. \ By Theorem
\ref{coverthm}, the class $S^{\prime}$ has a $4\beta$-cover of size%
\[
N=\exp\left[  O\left(  \left(  n+\log\frac{1}{4\beta}\right)
\operatorname*{fat}\nolimits_{\beta}\left(  S^{\prime}\right)  \right)
\right]  \leq\exp\left[  O\left(  \left(  n+\log\frac{1}{\beta}\right)
\operatorname*{fat}\nolimits_{\beta}\left(  S\right)  \right)  \right]  .
\]
Let $t:=\log_{2}N$. \ Then by Lemma \ref{linfwinnow}, there exists a function
$u\in S^{\prime}$, as well as a subset $Z\subseteq\left\{  0,1\right\}  ^{n}%
$\ of size at most $t$, such that:

\begin{enumerate}
\item[(i)] $\Delta_{\infty}\left(  u,f^{\ast}\right)  \left[  Y\right]
\leq0.8\beta$.

\item[(ii)] Every $g\in S^{\prime}$\ that satisfies $\Delta_{\infty}\left(
u,g\right)  \left[  Y\cup Z\right]  \leq\frac{0.8\beta}{t}$\ also satisfies
$\Delta_{\infty}\left(  u,g\right)  \leq12\beta$.
\end{enumerate}

Let $X:=Y\cup Z$, and observe that%
\begin{align*}
\left\vert X\right\vert  &  =O\left(  \frac{1}{\beta^{2}}\operatorname*{fat}%
\nolimits_{\beta}\left(  S\right)  \log^{2}\frac{1}{\beta}+\left(  n+\log
\frac{1}{\beta}\right)  \operatorname*{fat}\nolimits_{\beta}\left(  S\right)
\right) \\
&  =O\left(  \left(  n+\frac{\log^{2}1/\varepsilon}{\varepsilon^{2}}\right)
\operatorname*{fat}\nolimits_{\varepsilon/48}\left(  S\right)  \right)
\end{align*}
as desired. \ Now let $f$ be a function in $S_{\operatorname*{fin}}$\ such
that $\Delta_{\infty}\left(  f,u\right)  \leq\alpha$. \ Let us check that $(f, X)$
have the properties we want. \ First,%
\begin{align*}
\Delta_{\infty}\left(  f^{\ast},f\right)  \left[  Y\right]   &  \leq
\Delta_{\infty}\left(  f^{\ast},u\right)  \left[  Y\right]  +\Delta_{\infty
}\left(  u,f\right)  \left[  Y\right] \\
&  \leq0.8\beta+\alpha\\
&  <0.9\beta,
\end{align*}
hence $f\in S^{\prime}$\ as desired. \ Next, consider any $g\in S$ that satisfies
$\Delta_{\infty}\left(  f,g\right)  \left[  X\right]  \leq\alpha$.  Then we also have
\begin{align*}
\Delta_{\infty}\left(  f^{\ast},g\right)  \left[  Y\right]   &  \leq
\Delta_{\infty}\left(  f^{\ast},f\right)  \left[  Y\right]  +\Delta_{\infty
}\left(  f,g\right)  \left[  Y\right] \\
&  \leq0.9\beta+\alpha\\
&  <\beta,
\end{align*}
hence $g\in S^{\prime}$, so that (by our construction of $Y$) we have $\Delta_{1}\left(  f^{\ast},g\right)
\left\langle \mathcal{D}\right\rangle \leq11\beta$. \ Next, observe that
\begin{align*}
\Delta_{\infty}\left(  u,g\right)  \left[  X\right]   &  \leq\Delta_{\infty
}\left(  u,f\right)  \left[  X\right]  +\Delta_{\infty}\left(  f,g\right)
\left[  X\right] \\
&  \leq2\alpha\\
&  =\frac{0.8\beta}{t},
\end{align*}
so that, using our guarantee (ii) above, we have $\Delta_{\infty}\left(  u,g\right)  \leq12\beta$.  Then we find that%
\begin{align*}
\Delta_{\infty}\left(  f,g\right)   &  \leq\Delta_{\infty}\left(  f,u\right)
+\Delta_{\infty}\left(  u,g\right) \\
&  \leq\alpha+12\beta\\
&  \leq13\beta,
\end{align*}
as required.  This shows that $(f, X)$ have the required properties, and completes the proof of Lemma~\ref{realmajcer}.
\end{proof}

\begin{proof}[Proof of Lemma~\ref{linfwinnow}]
Let $\delta:=\frac{\varepsilon}{5k}$. \ We construct $\left(  f,Z\right)
$\ by an iterative procedure. \ Initially let $S_{0}:=S$, let $f_{0}:=f^{\ast
}$, and let $Z_{0}:=Y$. \ We will form new sets $S_{1},S_{2},\ldots$\ by
repeatedly adding constraints of the form $f\left(  x\right)  \leq\alpha$\ or
$f\left(  x\right)  \geq\alpha$\ for various $x,\alpha$, maintaining the
invariant that $f_{t}\in S_{t}$. \ At iteration $t$, suppose there exists a
function $g\in S_{t-1}$\ such that $\Delta_{\infty}\left(  f_{t-1},g\right)
\left[  Y\cup Z_{t-1}\right]  \leq\delta$,\ but nevertheless $\left\vert
f_{t-1}\left(  z_{t}\right)  -g\left(  z_{t}\right)  \right\vert
>3\varepsilon$\ for some input $z_{t}$. \ Then first set $Z_{t}:=Z_{t-1}%
\cup\left\{  z_{t}\right\}  $ (i.e., add $z_{t}$\ into our set of inputs, if
it is not already there). \ Let $v:=\frac{1}{2}\left[  f_{t-1}\left(
z_{t}\right)  +g\left(  z_{t}\right)  \right]  $,\ let $A$\ be the set of all
functions $h\in S_{t-1}$\ such that $h\left(  z_{t}\right)  <v$, and let
$B$\ be the set of all $h\in S_{t-1}$\ such that $h\left(  z_{t}\right)  \geq
v$. \ Also, for any given set $M$, let $M^{\Diamond}:=M\cap C$. \ Then clearly
$\min\left\{  \left\vert A^{\Diamond}\right\vert ,\left\vert B^{\Diamond
}\right\vert \right\}  \leq\left\vert S_{t-1}^{\Diamond}\right\vert /2$.\ \ If
$\left\vert A^{\Diamond}\right\vert <\left\vert B^{\Diamond}\right\vert $,
then set $S_{t}:=A$; otherwise set $S_{t}:=B$. \ Then set $f_{t}:=f_{t-1}$\ if
$f_{t-1}\in S_{t}$\ and $f_{t}:=g$\ otherwise. \ Since $\left\vert
S_{t}^{\Diamond}\right\vert $\ can halve at most $k=\log_{2}\left\vert
C\right\vert $\ times, it is clear that after $T\leq k$\ iterations we have
$\left\vert S_{T}^{\Diamond}\right\vert \leq1$. \ Set $f:=f_{T}$\ and
$Z:=Z_{T}$. \ Then by the triangle inequality,%
\[
\Delta_{\infty}\left(  f,f^{\ast}\right)  \left[  Y\right]  \leq T\delta
\leq\frac{\varepsilon}{5},
\]
and also%
\[
\left\vert f\left(  z_{t}\right)  -f_{t}\left(  z_{t}\right)  \right\vert
\leq\left(  T-t\right)  \delta<\frac{\varepsilon}{5}%
\]
for all $t\in\left[  T\right]  $. \ So suppose by contradiction that there
still exists a function\ $g\in S_{T}$\ such that $\Delta_{\infty}\left(
f,g\right)  \left[  Y\cup Z\right]  \leq\delta$\ but $\left\vert f\left(
x\right)  -g\left(  x\right)  \right\vert >3\varepsilon$\ for some $x$, and
consider functions $p,q\in C$\ in the cover such that $\Delta_{\infty}\left(
f,p\right)  \leq\varepsilon$\ and $\Delta_{\infty}\left(  g,q\right)
\leq\varepsilon$. \ Then $p,q\in S_{T}^{\Diamond}$\ but $p\neq q$, which
contradicts the fact that $\left\vert S_{T}^{\Diamond}\right\vert \leq1$.
\ Also notice that for all $g\in S$, if $\Delta_{\infty}\left(  f,g\right)
\left[  Y\cup Z\right]  \leq\delta$\ then $g\in S_{T}$. \ Thus $\Delta
_{\infty}\left(  f,g\right)  \left[  Y\cup Z\right]  \leq\delta$\ implies
$\Delta_{\infty}\left(  f,g\right)  \leq3\varepsilon$ as desired.
\end{proof}

\section{Application to Quantum Advice Classes\label{QADV}}

In this section, we prove Theorem \ref{mainthm1}, as well as several other results.   We will be defining quantum circuits over some fixed universal basis of 2-local unitary and measurement gates. We use $\size(C)$ to denote the number of gates of a classical or quantum circuit (including the input and output gates).

\subsection{Classical Descriptions for Quantum States}

Fix a quantum circuit $Q$ taking an $n$-bit string $x$ and a $p$-qubit state $\rho$ and producing a 1-bit output. \ For a given state $\rho$,
let $f_{\rho}\left(  x\right)  :=\bE\left[  Q(x,\rho)\right]  $.
\ Let $S$\ be the p-concept class consisting of $f_{\rho}$\ for all $p$-qubit mixed states $\rho$. \ Then Aaronson \cite{aar:learn}%
\ proved the following result, which allows us to apply the real-valued majority-certificates lemma to the study of quantum advice.

\begin{theorem}
[\cite{aar:learn}]\label{qlearn}$\operatorname*{fat}\nolimits_{\gamma}\left(
S\right)  =O\left(  p  /\gamma^{2}\right)  .$
\end{theorem}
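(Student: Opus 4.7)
The plan is to reduce the existence of a $\gamma$-fat-shattered set of size $k$ to a quantum random access code (QRAC) encoding $k$ classical bits into roughly $p(n)/\gamma^2$ qubits, and then invoke the Nayak/Ambainis et al.\ lower bound on QRACs from \cite{antv} to conclude $k = O(p(n)/\gamma^2)$.

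First, I would unpack the hypothesis. Suppose $A = \{x_1,\ldots,x_k\} \subseteq \{0,1\}^n$ is $\gamma$-fat-shattered by $S$, with witness function $r : A \to [0,1]$. Then by definition, for each Boolean labeling $g : A \to \{0,1\}$ there exists a $p(n)$-qubit mixed state $\rho_g$ such that
\[
f_{\rho_g}(x_i) \;\leq\; r(x_i) - \gamma \quad \text{if } g(x_i) = 0, \qquad
f_{\rho_g}(x_i) \;\geq\; r(x_i) + \gamma \quad \text{if } g(x_i) = 1.
\]
So running $Q$ on $(x_i, \rho_g)$ already gives a single-shot bias of $\gamma$ toward the correct value of $g(x_i)$; the task is to amplify this weak bias to bounded-error decoding while keeping the encoding length small.

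Next, I would build an explicit QRAC. Set $\ell = \lceil C/\gamma^2\rceil$ for a sufficiently large constant $C$, and define the encoding $\sigma_g := \rho_g^{\otimes \ell}$, which uses $\ell \cdot p(n)$ qubits. To decode bit $g(x_i)$, run $Q$ on input $x_i$ using each of the $\ell$ copies of $\rho_g$ as the advice register independently, let $\hat p$ be the empirical acceptance frequency, and output $1$ iff $\hat p \geq r(x_i)$. Each run is a genuine two-outcome quantum measurement whose acceptance probability lies on the correct side of $r(x_i)$ by at least $\gamma$, so by Hoeffding's inequality the decoder returns $g(x_i)$ with probability at least $2/3$ once $C$ is large enough. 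This yields a $(k, \ell\cdot p(n), 2/3)$-QRAC. The standard QRAC lower bound of Ambainis et al., which gives $m \geq (1 - H(p))\,k$ for any $(k,m,p)$-QRAC with $p > 1/2$, then forces $\ell \cdot p(n) = \Omega(k)$, i.e.\ $k = O(p(n)/\gamma^2)$, as desired.

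The main subtlety to handle carefully is that the ``measurements'' guaranteed by the shattering hypothesis are the circuit-induced POVMs $\{M_{x_i}, I - M_{x_i}\}$ where $M_{x_i}$ is the acceptance projector of $Q(x_i, \cdot)$. I would need to check that these are legitimate quantum operations on the advice register (which they are, since $Q$ is a polynomial-size quantum circuit acting on $x_i$ together with $\rho$), so that applying them copy-by-copy to $\rho_g^{\otimes \ell}$ produces independent Bernoulli trials with bias $\gamma$ in the right direction. Everything else (the Hoeffding bound, and the application of Nayak's inequality) is routine. One minor variant to consider is whether to amplify by tensoring advice states or by using a single smarter measurement; the tensor-power approach is cleanest and loses only the claimed $1/\gamma^2$ factor, which matches the stated bound.
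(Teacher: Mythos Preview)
Your proposal is correct and is essentially the proof from \cite{aar:learn}; the present paper does not reprove this statement but simply cites it, noting explicitly that the bound ``builds on Ambainis et al.\ \cite{antv}''---i.e., on the random access code lower bound that you invoke. The reduction via $\ell = O(1/\gamma^2)$ tensor copies, Hoeffding amplification against the threshold $r(x_i)$, and Nayak's inequality is exactly the intended argument, and your check that the decoder may use $r(x_i)$ (since $r$ is fixed by the shattering witness and independent of the encoded string $g$) handles the only real subtlety.
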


The next claim gives a useful consequence of Theorem~\ref{qlearn} and the majority-certificates lemma. 

% It also prepares us to apply a technique used by Aharonov and Regev~\cite{ar:quantum} to prove the result $\mathsf{QMA}_{\mathsf{+}} = \mathsf{QMA}$.

\begin{lemma}\label{yqpplus_general}  Let $Q_n(x, \rho)$ be a quantum circuit taking as input a string $x \in \{0, 1\}^n$ and a quantum state $\rho$ on $p$ qubits, and outputting a single bit.  Fix any $p$-qubit state $\rho^*_n$.
%and let $F: \{0, 1\}^* \rightarrow [0, 1]$ be defined for $x\in \{0, 1\}^n$.\Pr[Q(x, \rho_n) = 1]

Let $c \geq 1$ be a constant.  For suitably chosen integers $m, k \leq \poly(n, p)$ and a real parameter $\alpha \geq 1/\poly(n, p)$, there exists:
\begin{itemize}
\item a second circuit $Q'_n(x, \sigma)$ of size at most $\poly(\size(Q_n))$ taking as input $x \in\{0, 1\}^n$ and an $m\cdot p$-qubit state $\sigma$;
\item a collection $\mathcal{C}_n = \{C_{(i, j)}(\sigma) \}_{(i , j) \in [m] \times [k]}$ of circuits, each of size $\poly(\size(Q_n))$, and each taking as input a quantum state $\sigma$ on $m \cdot p$ qubits; and,
\item a collection $\{r_{(i, j)}\}_{(i , j) \in [m] \times [k]}$ of rational numbers in $[0, 1]$, each specified by a decimal expansion of length $O(\log (n + p))$.
\end{itemize}
(Here, $Q'_n$ can be uniformly constructed in time $\poly(s, n)$ given a description of $Q_n$, while $\mathcal{C}_n, \{r_{(i, j)}\}$ are non-uniformly chosen.)
We have the following properties:
\begin{enumerate}
\item[(i)] There exists a state $\sigma$ on $m \cdot p$ qubits, of the form $\sigma = \sigma_1 \otimes \ldots \otimes \sigma_m$, that satisfies $|\bE[C_{(i, j)}(\sigma)] - r_{(i, j)}| \leq \alpha$ for each $(i , j) \in [m] \times [k]$;
\item[(ii)] If we are given \emph{any} state $\sigma$ on $m \cdot p$ qubits, satisfying
\[ |\bE[C_{(i, j)}(\sigma) ] -  r_{(i, j)} |  \ \leq \ 4\alpha  \quad{}\quad{} \forall (i , j) \in [m] \times [k] \ , \]
then it also holds that
\[  | \bE[Q'_n(x, \sigma) ] -  \bE[Q_n(x, \rho^*_n)]  |  \ \leq \ n^{-c} \quad\quad{}  \forall x \in \{0, 1\}^n \ .   \]
\end{enumerate}
\end{lemma}

%
%a family $\{C_{()}\}$
%
% $\operatorname*{poly}(n)$, and a polynomial-time algorithm $A(\rho, y)$, taking as input a state $\rho$ on $P(n) \leq \operatorname*{poly}(n)$ qubits and a classical advice state $y$ on $\operatorname*{poly}(n)$ bits, and $B$ additionally receives $x \in \{0, 1\}^n$.  The following conditions hold

\begin{proof}%[Proof of Lemma~\ref{yqpplus_general}]
For each $x \in \{0, 1\}^n$ and state $\xi$ on $p$ qubits, let $f_{\xi}\left(  x\right)  :=\bE\left[
Q_n\left(  x,\xi\right)  \right]  $.  Let $S$ be collection $\{f_{\xi}\}$, ranging over all $p$-qubit mixed states
$\xi$. \ Then Theorem \ref{qlearn}\ implies that\ $\operatorname*{fat}%
\nolimits_{\gamma}\left(  S\right)  =O\left(  p  /\gamma
^{2}\right)  $ for all $\gamma>0$. \ Set $\eps := n^{-c}$, $\gamma:= \eps/48$.\ \ By Lemma
\ref{realmajcer}, for some $m, k \leq \poly(n)$, there exist $p$-qubit\ mixed states $\rho_1  ,\ldots,\rho_m$, sets $X_{1},\ldots,X_{m}\subseteq\left\{  0,1\right\}  ^{n}%
$\ each of size $k$, and
an\ $\alpha=\Omega\left(  \frac{1}{\poly(n, p) }\right)  $ for
which the following holds:

\begin{itemize}
\item[(*)] \textit{All collections}
$\sigma_1  ,\ldots,\sigma_m$\textit{\ of $p\left(  n\right)$-qubit states that
satisfy }$\Delta_{\infty}\left(  f_{\rho_i  },f_{\sigma_i  }\right)  \left[  X_{i}\right]  \leq \alpha$\textit{\ for }%
$i\in\left[  m\right]  $\textit{\ also satisfy }$\Delta_{\infty}\left(
f_{\rho^*_{n}},f_{\sigma_{\avg}}\right)  \leq n^{-c}$\textit{, where} $\sigma_{\avg} := \frac{1}{m}(\sigma_1 + \ldots + \sigma_m)$.
\end{itemize}
For an $m \cdot p$-qubit state $\sigma$ and $i \in [m]$, let $\sigma[i]$ denote the reduced state of $\sigma$ on the $i^{th}$ register of $p$ qubits.  Let $x^{(i, j)} \in \{0, 1\}^n$ denote the $j^{th}$ element in $X_i$ (under some fixed ordering).
The circuits $\{ C_{(i, j)}   \}_{(i, j) \in [m] \times [k] }$
are then defined as follows: each $C_{(i, j)}$, on input state $\sigma$, simulates $Q_n\left(  x^{i, j},\sigma[i]\right)  $ (by applying $Q_n(x^{(i, j)}, \cdot)$ to the $i^{th}$ register of $\sigma$) and
outputs the resulting bit. \  The value $r_{i, j}$ is chosen as a rational approximation to the value $\bE[Q_n\left(  x^{(i, j)} ,\rho_i\right)]$, accurate to within $\pm .1\alpha$; this can be achieved with $O(\log (n + p))$ bits of precision, since $\alpha \geq 1/\poly(n, p)$.
Finally, for the circuit $Q_n'(x, \sigma)$, we let $Q_n'$ choose a uniformly random register $i \in [m]$ and simulate $Q_n(x, \sigma[i])$, outputting the result.  All of our efficiency claims for $Q'_n$ and $\{C_{(i, j)}\}_{(i, j) \in [m] \times [k]}$, and our uniform constructibility claim for $Q'_n$, follow from the definitions.

To establish item (i) in the Theorem's conclusion, it is enough to verify that $\sigma := \rho_1 \otimes \ldots \otimes \rho_m$ is a suitable choice of $\sigma$, by our settings to $\{C_{(i, j)}, r_{(i, j)}\}$.  
For item (ii), let the $m \cdot p(n)$-qubit state $\sigma$ satisfy the hypothesis in that item.  By our definitions and the quality of our rational approximations $\{r_{(i, j)}\}$, this implies that $\Delta_{\infty}\left(  f_{\rho_i  },f_{\sigma[i]  }\right)  \left[  X_{i}\right]  \leq \alpha$ for $i \in [m]$. \ Then by (*), we
have\ $\Delta_{\infty}\left(  f_{\rho_{n}},f_{\sigma[\avg]}\right)  \leq n^{-c}$, where we here define $\sigma[\avg] := \frac{1}{m}(\sigma[1] + \ldots + \sigma[m])$.  Also, for our choice of $Q_n'$ we have 
\[ \bE[Q_n'(x, \sigma)] \ = \  \frac{1}{m}\sum_{i \in [m]} \bE[Q_n(x,\sigma[i])] \ = \ \bE[Q_n(x,\sigma[\avg]) ] \ = \ f_{\sigma[\avg]}(x) \ . \]
This gives item (ii), completing the proof of Lemma~\ref{yqpplus_general}.
\end{proof}

\subsection{Advice-Testing Quantum Circuits and Input-Oblivious Testers}

Next we define a class of quantum circuits that will play an important role in our work.

%[Input-oblivious witness-checking algorithms]
\begin{definition} \label{def:adv_test}  An \emph{advice-testing circuit} (for the input length $n > 0$) is a quantum circuit $Y = Y_n$ with a classical $n$-bit  input register, along with advice and ancilla registers and two designated 1-qubit ``advice-testing'' and ``output'' registers.
On input a string $x \in \{0, 1\}^n$, and with the advice register initialized to some advice state $\rho$, the remaining registers are each initialized to the all-zero state.  $Y$ acts as follows:
\begin{enumerate}
\item First $Y$ applies a subcircuit $A$ to all registers, after which the advice-testing register is measured, producing a value $b_{\adv} \in \{0, 1\}$;
\item Next, $Y$ applies a second subcircuit $B$ to all registers, then measures the output register, producing a value $b_{\out} \in \{0, 1\}$.  \end{enumerate}
If in step 1 above, the subcircuit $A$ ignores the input register, then $Y$ is said to be an \emph{input-oblivious} advice-testing circuit. 
\end{definition}

%FORMERLY KEY, NOW REMOVED:
%$Y$ is called an \emph{advice-testing circuit with classical advice} if it obeys the above format, but also takes a second classical bitstring $y$ on $\poly(n)$ bits as part of its input.  Both $A$ and $B$ are allowed to act upon this ``classical advice'' string.  Input-obliviousness for advice-testing algorithms with classical advice is defined in the natural way; here, the testing algorithm $A$ is allowed to depend on the classical advice $y$, but not the input $x$.

%pair $A(\ol{\sigma}, y), B(x, \ol{\sigma}')$ of polynomial-time algorithms.  $A$ takes as input a state $\ol{\sigma}$ on $P(n) \leq \operatorname*{poly}(n)$ qubits, as well as a classical advice state $y$ on $\operatorname*{poly}(n)$ bits.  $A$ then outputs a quantum state $\ol{\sigma}'$ on $P(n)$ qubits, along with a single (Accept/Reject) bit.\footnote{The Accept/Reject bit is considered to be measured after $A$'s execution.  The state $\ol{\sigma}'$, defined over the Hilbert space of some subset of $A$'s ancilla qubits, may remain entangled with other qubits in $A$'s state space.}

Next, suppose we have a quantum circuit $Q_n(x, \rho)$ taking a classical string $x\in \{0, 1\}^n$ and a quantum state $\rho$, that we wish to simulate for a specific desired setting $\rho := \rho^*$.  The next result gives a general method to do so by an input-oblivious advice-testing algorithm with polynomial classical advice.   Our use of Lemma~\ref{yqpplus_general} in proving this result draws ideas from the proof of Aharonov and Regev of the equality of complexity classes $\mathsf{QMA}^{+} = \mathsf{QMA}$~\cite{ar:quantum}.

%
%
% in computation with trusted classical non-uniform advice plus untrusted quantum advice (which is tested in an input-oblivious fashion).  
\begin{theorem}\label{new_generalized}  Let $Q_n(x, \rho)$ be a quantum circuit taking as input a string $x \in \{0, 1\}^n$ and a quantum state $\rho$ on $p \leq s$ qubits, and outputting a single bit.  Fix any $p$-qubit state $\rho^*$, and let $d \geq 1$ be a fixed constant.

Then there exists an input-oblivious advice-testing circuit $Y_n$ of size $\poly(\size(Q_n))$, taking an input $x \in \{0, 1\}^n$ and a $P$-qubit advice state (for some $P \leq \poly(n, p)$), with the following properties:
\begin{enumerate}
\item[(i)] There exists an advice state $\ol{\sigma}^*$ on $P$ qubits such that for all $x \in \{0, 1\}^n$, in the execution of $Y_n(x, \ol{\sigma}^*)$ we have $\Pr[b_{\adv} = 1] \geq 1 - e^{-n}$;

\item[(ii)] For each $n$ and advice state $ \ol{\sigma}$ on $P$ qubits, it holds that in the execution of $Y_n(x, \ol{\sigma})$ (for each $x \in \{0, 1\}^n$) we have
\[  \Pr[b_{\adv} = 1]  \ \geq \ n^{-d}  \quad{} \Longrightarrow \quad{} \left|   \bE[b_{\out} |  b_{\adv} = 1]  \ - \  \bE[Q_n(x, \rho^*) ]    \right|  \ \leq \  n^{-d} \ . \]
\end{enumerate}
\end{theorem}

\begin{proof}[Proof of Theorem~\ref{new_generalized}]  
For $n > 1$, let 
\[ m, \  k, \ \alpha, \ Q'_n ,  \ \mathcal{C}_n, \ \{r_{(i, j)}\}_{i \in [m], j \in [k]} \]
 be as given by Lemma~\ref{yqpplus_general} applied to $Q, \rho^*$, and with $c := 2d$.  
We set $M :=  \lceil 10 n^{8d} mk / \alpha \rceil  , N :=  \lceil 10 \ln M/\alpha^2 \rceil$, and $P := MN m p$.  We regard a $P$-qubit state as having $MN$ registers (indexed by $[M] \times [N]$) of $m \cdot p$ qubits each.  We refer to the register indexed by $(s, t) \in [M] \times [N]$ as the ``$(s, t)^{th}$ proof register.''

%As our advice string $y_n$, we take an encoding of $\mathcal{C}_n$ and of $\{r_{(i, j)}\}_{i \in [m], j \in [k]}$ (the values $m, k, \alpha$ used in our construction can be efficiently computed given $(n, c)$).  

%and use 

%, whose advice string specifies values for $\mathcal{C}_n$ and $\{r_{(i, j)}\}_{i \in [m], j \in [k]}$,

The subroutine $A$ for $Y_n$ is defined as follows:  

\vspace{1 em}

\textbf{Algorithm} A$(\ol{\sigma}, y)$:
\begin{enumerate}
\item Set $b_{\adv} := 1$, and choose $S \in [M]$ uniformly;
\item For $s = 1, 2, \ldots, (S - 1)$:
\begin{itemize}
\item[2.a.] Choose $(i(s), j(s)) \in [m] \times [k]$ uniformly;
\item[2.b.] Apply $C_{(i(s), j(s))}$ successively to the proof registers $(s, 1), (s, 2), \ldots, (s, N)$, and let $\hat{r}_{s} \in [0, 1]$ be the fraction of these computations that accept;
\item[2.c] If $| \hat{r}_{s}   -    r_{(i(s), j(s))}  | > .5 \alpha$, set $b_{\adv} := 0$.
\end{itemize}
\end{enumerate}

Note that in step (2.b), the joint state on the proof registers may change after each application of $C_{(i(s), j(s))}$.  If $S = 1$, the proof registers go untouched and $b_{\adv} = 1$.

Next, the subroutine $B$ acts as follows.  $B$ measures the value $S$ chosen by $A$ (and stored in the ancilla register).  It then chooses $t \in [N]$ uniformly and simulates $Q'_n$ applied to input $x$ and with the $(S, t)^{th}$ proof register as the quantum advice state for $Q'_n$, taking the resulting bit as $b_{\out}$.

%KEY FACT:  If I apply measurements M_1, ... M_t to disjoint registers, then E[M_i] is independent of the order of application!  Follows from "no superliminal communication."

$Y_n$ can clearly be implemented in size $\poly(\size(Q_n))$.  Now let us analyze $Y_n$ to establish items (i)-(ii) in the Theorem's conclusion.  For item (i), consider the execution $Y_n(x, \ol{\sigma})$ on the advice state $\ol{\sigma}$ which is the tensor product of $MN$ independent copies of the state $\sigma$ guaranteed to exist by item (i) in our application of Lemma~\ref{yqpplus_general}.  Then in the operation of the subroutine $A$, for each execution of step (2.b) (indexed by an $s \in [M]$), the expected fraction $\bE[\hat{r}_{s}]$ is within $\pm .1\alpha$ of $r_{i(s), j(s)}$ after conditioning on $i(s), j(s)$.  Also, the outcome of the executions of $C_{i(s), j(s)}$ are mutually independent, since $\ol{\sigma}$ is a product state over the $MN$ registers.
Chernoff bounds and our setting of $N$ then imply that $\hat{r}_{s}$ is within $\pm .5\alpha$ of  $r_{i(s), j(s)}$ with probability $> 1 - e^{-n}/M$.  A union bound over all $s \in [M]$ completes the proof of item (i) in the Theorem.

%B probably doesn't need to run TRIAL!   But it may be easier to analyze this way...

We now turn to item (ii).  Let $\ol{\sigma}$ be any $P$-qubit state for which, in the execution of $Y_n(x, \ol{\sigma})$, we have $\bE[b_{\adv}] \geq n^{-d}$.  (If this holds for some $x \in \{0, 1\}^n$ then it holds for all such $x$; we fix some such $x$ in what follows.) For $s \in [M - 1]$, let $q_s$ denote the probability that  $| \hat{r}_{s}   -    r_{(i(s), j(s))}  | \leq .5 \alpha$ holds in the execution of subroutine $A$ in the operation of $Y_n(x, \ol{\sigma})$, \emph{conditioned} on the following two events:
 \begin{enumerate}
 \item $S = s + 1$, so that the For loop in Step 2 of $A$ executes for the value $s$;
 \item $| \hat{r}_{s'}   -    r_{(i(s'), j(s'))}  | \leq .5 \alpha$ for all $s' < s$.
 \end{enumerate}
 Note that the value $q_s$ would be unchanged if in the first item above we instead conditioned on $[S = s'']$, for any $s'' > s$.  Also, for future use we define $\overline{\sigma}^{(s)}$ as the $N m p$-qubit reduced state on the proof registers $(s, 1), (s, 2), \ldots, (s, t)$, conditioned on items 1 and 2 above.

%2n^{4d}/ M  <=    n^{-4d}      M >  2 n^{8d} 
 
Let $I_{\bad} \subseteq [M - 1]$ be the set of indices $s$ for which $q_s < 1 -   \alpha/(n^{3d} m k)$.  We will upper-bound $\Pr[S \in I_{\bad} \wedge b_{\adv} = 1]$.  
Let $I^{\early}_{\bad}$ be the first $W := \lceil n^{4d} m k / \alpha \rceil$ elements of $I_{\bad}$ in increasing order (or if $I_{\bad} \leq W$, then $I^{\early}_{\bad} := I_{\bad}$).  Let $I^{\late}_{\bad} := I_{\bad} \setminus I_{\bad}^{\early}$.  We have
\begin{equation*}\label{eq:split} \Pr[S \in I_{\bad} \wedge b_{\adv} = 1] \ \leq \ W/M +   \Pr[S \in I^{\late}_{\bad} \wedge b_{\adv} = 1]  \ ,  \end{equation*}
since $\Pr[S \in I^{\early}_{\bad}]   \leq W/M$.
If $I_{\bad}^{\late} \neq \emptyset$, then conditioned on any value of $S$ with $S > \max(I^{\early}_{\bad})$, the probability that $b_{\adv}$ is not set to 0 in the $S - 1$ executions of step 2 of $A$ equals 
\[ \prod_{s < S}q_s  \ \leq  \ \prod_{s \in I^{\early}_{\bad}}q_s \ \leq  \ (1 - \alpha/(n^{3d} m k))^{W}  \ \leq  \ n^{-4d} \ . \]  Thus, $\Pr[S \in I^{\late}_{\bad} \wedge b_{\adv} = 1] \leq n^{-4d}$, and $\Pr[S \in I_{\bad} \wedge b_{\adv} = 1] \leq n^{-4d} + W/M$; this is at most $2n^{-4d}$, by our setting to $M$.  It follows that 
\[ \Pr[S \in I_{\bad} |  b_{\adv} = 1]  \ \leq \  \frac{2 n^{-4d}}{\Pr[b_{\adv} = 1]}  \ \leq  \ 2 n^{-3d}\ ,\] using our assumption in item (ii) that $\Pr[b_{\adv} = 1] \geq n^{-d}$.

Next, we claim that for each $s \in [M] \setminus I_{\bad}$, the conditional expectation $ \bE[ b_{\out} = 1 | S = s \wedge b_{\adv} = 1] $ satisfies
\[   \left|  \bE[ b_{\out} = 1 | S = s \wedge b_{\adv} = 1]    - \  \bE[Q_n(x, \rho^*_n) ]    \right|  \ \leq \  n^{-3d} \ .  \]
To see this, fix any such $s$.  First note that, if we condition on $[S = s \wedge b_{\adv} = 1]$, the joint post-conditioned state of the proof registers $(s, 1), (s, 2), \ldots, (s, t)$ is precisely $\overline{\sigma}^{(s)}$ as defined previously.  Now consider the experiment in which we choose a pair $(i, j)$ uniformly from $[m] \times [k]$ and apply $C_{(i, j)}$ to each of these proof registers, prepared in the joint state $\overline{\sigma}^{(s)}$, and let $\hat{r}_{(i, j)} \in [0, 1]$ be the fraction of 1s measured.  The probability in this experiment that $|\hat{r}_{(i, j)} - r_{(i, j)}| \leq .5 \alpha$ is, by the linearity of quantum mechanics, equal to $q_s$; this is greater than $1 -  \alpha/(n^{3d} m k)$ since $s \notin I_\bad$.  Then by an application of Markov's inequality, for \emph{every} $(i^*, j^*) \in [m] \times [k]$, if we perform this experiment on $\overline{\sigma}^{(s)}$ with the fixed choice $(i, j) = (i^*, j^*)$, then we see $|\hat{r}_{(i^*, j^*)} - r_{(i^*, j^*)}| \leq .5 \alpha$ with probability greater than $1 -  \alpha n^{-3d} > 1 - .2\alpha$.  Thus $|\bE[\hat{r}_{(i^*, j^*)} ] - r_{(i^*, j^*)}| \leq .7 \alpha$.

For $t \in [N]$, let $\sigma^{(s, t)}$ denote the reduced state of $\overline{\sigma}^{(s)}$ on the $(s, t)$ proof register.  Let $\sigma^{(s, \avg)} := \frac{1}{N}\sum_{t \in [N]} \sigma^{(s, t)}$, and note that in the experiment above with fixed pair $(i^*, j^*)$, we have $\bE[\hat{r}_{(i^*, j^*)}] = \bE[ C_{(i^*, j^*)}(\sigma^{(s, \avg)}) ]$.  By our work above, $|\bE[ C_{(i^*, j^*)}(\sigma^{(s, \avg)}) ] -  r_{(i^*, j^*)} | \leq .7 \alpha$.  As $(i^*, j^*)$ was arbitrary, it follows from item (ii) in our application of Lemma~\ref{yqpplus_general} that
\[    \left|  \bE[  Q_n'(x, \sigma^{(s, \avg)}) ]    -  \bE[  Q_n(x, \rho^*_n) ] \right|   \ \leq \ n^{-2d} \ .   \]
Now let us return to the definition of the algorithm $Y_n$ and note that, in the execution $Y_n(x, \ol{\sigma})$, if we condition on $[b_{\adv} = 1 \wedge S = s]$, then $Y_n$ simulates $Q'_n$ applied to $x$ and to an advice state whose density operator is (under our conditioning) precisely that of $\sigma^{(s, \avg)}$,  and $Y_n$ outputs the resulting bit.  
Thus, $\left| \bE[b_{\out}| b_{\adv} = 1 \wedge S = s ] -  \bE[  Q_n(x, \rho^*_n) ] \right|    \leq  n^{-2d}$, and since $s$ was an arbitrary element of $[M] \setminus I_{\bad}$, we also have $\left| \bE[b_{\out}| b_{\adv} = 1 \wedge S \notin I_{\bad} ] -  \bE[  Q_n(x, \rho^*_n) ] \right|    \leq  n^{-2d}$.  Combining our findings, we see that
\begin{align*}
\left|\bE[b_{\out}| b_{\adv} = 1]  -    \bE[  Q_n(x, \rho^*_n) ]   \right| \ &\leq \  \Pr[S \in I_{\bad}| b_{\adv} = 1]  +   n^{-2d} \\
&\leq\   2n^{-3d}  + n^{-2d} \\
&\leq \ n^{-d} \ ,
\end{align*}
for $n > 1$.  The statement of item (ii) is trivial for $n = 1$, so this proves item (ii), completing the proof of the Theorem.
\end{proof}

\subsection{Bestiary of Quantum Complexity Classes\label{QCC}}

In this section we define some old and new complexity classes which our techniques shed light on.
Given a language $L\subseteq\left\{  0,1\right\}  ^{\ast}$, let $L:\left\{
0,1\right\}  ^{\ast}\rightarrow\left\{  0,1\right\}  $ be the characteristic
function of $L$. \ We now give a formal definition of the class
$\mathsf{BQP/qpoly}$.

\begin{definition}\label{def:bqppoly}
A language $L$\ is in\ $\mathsf{BQP/qpoly}$\ if there exists a polynomial-time
quantum algorithm $A$ and polynomial-time computable function $p(n) \leq \poly(n)$ such that for all $n$, there exists
an advice state $\rho_{n}$\ on $p\left(  n\right)  $\ qubits such that
$A\left(  x,\rho_{n}\right)  $\ outputs $L\left(  x\right)  $\ with
probability $\geq2/3$\ for all $x\in\left\{  0,1\right\}  ^{n}$.
\end{definition}

Closely related to quantum advice are \textit{quantum proofs}. \ We now recall
the definition of $\mathsf{QMA}$\ (Quantum Merlin-Arthur), a quantum version
of $\mathsf{NP}$.

\begin{definition}
A language $L$\ is in\ $\mathsf{QMA}$\ if there exists a polynomial-time
quantum algorithm $A$ and polynomial-time computable function $p(n) \leq \poly(n)$ such that for all $x\in\left\{
0,1\right\}  ^{n}$:

\begin{enumerate}
\item[(i)] If $x\in L$\ then there exists a witness $\rho_{x}$\ on $p\left(
n\right)  $\ qubits such that $A\left(  x,\rho_{x}\right)  $\ accepts with
probability $\geq2/3$.

\item[(ii)] If $x\notin L$\ then $A\left(  x,\rho\right)  $\ accepts with
probability $\leq1/3$ for all $\rho$.
\end{enumerate}
\end{definition}

We will define some complexity classes involving untrusted (classical or quantum) advice that depends only on the input length. \ This
notion has been studied before: Chakaravarthy and Roy \cite{chakaravarthy}%
\ and Fortnow, Santhanam, and Williams \cite{fsw}\ defined the complexity
class $\mathsf{ONP}$\ (``Oblivious $\mathsf{NP}$''),\ which is like
$\mathsf{NP}$\ except that the witness can depend only on the input length.
\ Independently, Aaronson \cite{aar:learn}\ defined the complexity class
$\mathsf{YP}$,\footnote{$\mathsf{YP}$\ stands for ``Yoda Polynomial-Time,''\ a
nomenclature that seems to make neither more nor less sense than
``Arthur-Merlin.''} which is easily seen to equal $\mathsf{ONP}\cap
\mathsf{coONP}$. \ We will adopt the ``$\mathsf{Y}$'' notation in this paper.

We now give a formal definition of $\mathsf{YP}$, as well as a slight variant
called $\mathsf{YP}^{\mathsf{\ast}}$.

\begin{definition}
A language $L$\ is in\ $\mathsf{YP}$\ if there exist polynomial-time
algorithms $A,B$ and a polynomial-time computable function $p(n) \leq \poly(n)$ such that:

\begin{enumerate}
\item[(i)] For all $n$, there exists an advice string $y_{n}\in\left\{
0,1\right\}  ^{p\left(  n\right)  }$\ such that $A\left(  x,y_{n}\right)
=1$\ for all $x\in\left\{  0,1\right\}  ^{n}$.

\item[(ii)] If $A\left(  x,y\right)  =1$, then $B\left(  x,y\right)  =L\left(
x\right)  $.
\end{enumerate}

$L$ is in $\mathsf{YP}^{\mathsf{\ast}}$\ if moreover $A$ ignores $x$,
depending only on $y$.
\end{definition}

Clearly $\mathsf{P}\subseteq\mathsf{YP}^{\mathsf{\ast}}\subseteq
\mathsf{YP}\subseteq\mathsf{P/poly}\cap\mathsf{NP}\cap\mathsf{coNP}$. \ Also,
Aaronson \cite{aar:learn}\ showed that $\mathsf{ZPP}\subseteq\mathsf{YP}%
$.\ \ We will be primarily interested in a quantum analogue of $\mathsf{YP}^{\mathsf{\ast}}$.  This analogue builds on Definition~\ref{def:adv_test}.  However, it also models a distinctively quantum ingredient: we consider two-phase protocols in which an untrusted quantum advice state is first tested in an input-oblivious fashion and, if accepted, is passed along \emph{in altered form} to be used in computation with the given input.  This model is natural, since quantum measurements unavoidably alter the measured states; the alterations performed by the initial testing are also crucial to the power of these protocols. (Roughly speaking, this works as follows: if the given quantum advice state is a mixture $\rho = t \rho_{1} + (1 - t) \rho_{2}$ of a ``good state'' $\rho_{1}$ which passes our test with high probability and is useful for computation, and a ``bad state'' $\rho_{2}$ which is useless for computation and passes the test with low probability, then conditioning on passing the test ``filters out'' the contribution of $\rho_2$, making the resulting state more useful.\footnote{Conversely, if our testing procedure did \emph{not} alter the advice state, as per our definitions in previous drafts (which essentially assumed the availability of two identical, independent copies of the state---one for testing and one for computation), and if $t = .5$, say, then $\rho$ as above will pass the test with probability close to .5, but $\rho$ cannot be useful for computation with correctness guarantee close to 1, due to the continuing presence of the useless $\rho_2$ in equal mixture with $\rho_1$.  This weakness necessitated the change in definitions.}  We emphasize, however, that the test involves various measurements that significantly alter even a state that passes with high probability.  The technical core of this procedure has already been given in Theorem~\ref{new_generalized}.)

\begin{definition}[$\mathsf{YQP}$ and $\mathsf{YQP}^{\ast}$]\label{def:yqp}
A language $L$\ is in\ $\mathsf{YQP}$\ if there exists a uniform (i.e., polynomial-time constructible) family of advice-testing quantum circuits $\{Y_n(x, \rho)\}_{n > 0}$ (as per Definition~\ref{def:adv_test}). Each $Y_n$ is of size $\poly(n)$ and takes as input an $x \in \{0, 1\}^n$ and a $p(n)$-qubit state $\rho$ (for some $p(n) \leq \poly(n)$).  We have the following properties:

\begin{enumerate}
\item[(i)] For all $n$, there exists a setting $\rho_{n}$ to the quantum advice register such that for any $x \in \{0,1\}^n$, in the execution of $Y$ on $(x, \rho_n)$ we have $\bE[b_{\adv } ] \geq 9/10$.

\item[(ii)] If for any settings $(x, \rho)$ to the input and advice registers we have $\bE[b_{\adv}] \geq 1/10$,
then $\Pr[b_{\out} = L(x) | b_{\adv} = 1] \geq 9/10$.
\end{enumerate}

$L$ is in $\mathsf{YQP}^{\ast}$ if the circuit family $\{Y_n\}_{n > 0}$ can be additionally be chosen to obey the input-oblivious property.

We define the corresponding non-uniform classes $\mathsf{YQP}\mathsf{/poly}, \mathsf{YQP}^{\mathsf{\ast}}\mathsf{/poly}$ by removing the requirement that the family $\{Y_n\}_{n > 0}$ be uniform.
\end{definition}

Clearly $\mathsf{BQP}\subseteq\mathsf{YQP}^{\mathsf{\ast}}\subseteq\mathsf{YQP}\subseteq \mathsf{BQP/qpoly}\cap\mathsf{QMA}\cap\mathsf{coQMA}$.

\subsection{Characterizing Quantum Advice\label{YQPSEC}}

We now prove the following characterization of $\mathsf{BQP/qpoly}$, which
immediately implies (and strengthens) Theorem \ref{mainthm1}:

\begin{theorem}
\label{yqpthm}$\mathsf{BQP/qpoly}=\mathsf{YQP^{\ast}/poly}$.
\end{theorem}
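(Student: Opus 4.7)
The direction $\mathsf{YQP/poly} \subseteq \mathsf{BQP/qpoly}$ is essentially immediate: given a $\mathsf{YQP/poly}$ protocol with verifier $A$, solver $B$, classical advice $a_n$, and honest quantum advice $\rho_n$, take $\rho_n$ (together with $|a_n\rangle\langle a_n|$) as the $\mathsf{BQP/qpoly}$ advice and run $B(x,\rho_n)$. Since $\rho_n$ passes $A$ with probability $\geq 2/3 \geq 1/3$, clause (ii) of the $\mathsf{YQP}$ definition yields $\Pr[B(x,\rho_n) = L(x)] \geq 2/3$.

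For the nontrivial direction $\mathsf{BQP/qpoly} \subseteq \mathsf{YQP/poly}$ (i.e.\ Theorem~\ref{mainthm1}), the plan is to combine Lemma~\ref{realmajcer}, Theorem~\ref{qlearn}, and Theorem~\ref{yqpplusthm}. Fix $L \in \mathsf{BQP/qpoly}$ decided by a polynomial-size circuit $Q$ with advice $\rho_n$, amplified so that $f^*(x) := \Pr[Q(x,\rho_n)=1] \in [0, 1/6] \cup [5/6, 1]$, and let $S_n := \{f_\rho : \rho \text{ a } p(n)\text{-qubit state}\}$ with $f_\rho(x) := \Pr[Q(x,\rho)=1]$. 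By Theorem~\ref{qlearn}, $\operatorname*{fat}_\gamma(S_n) = O(p(n)/\gamma^2)$, which is polynomial for constant $\gamma$. Applying Lemma~\ref{realmajcer} to $S_n$ with target $f^*$ and error $\varepsilon = 1/6$ produces $m = O(n)$ functions $f_i = f_{\rho_i}$, polynomially-sized sets $X_i \subseteq \{0,1\}^n$, and a tolerance $\alpha = 1/\operatorname*{poly}(n)$ such that whenever states $\sigma_1,\ldots,\sigma_m$ satisfy $\Delta_\infty(f_{\sigma_i}, f_i)[X_i] \leq \alpha$ for all $i$, their average $g(x) := \frac{1}{m}\sum_{i=1}^{m} f_{\sigma_i}(x)$ obeys $\Delta_\infty(g, f^*) \leq 1/6$.

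Package this as a $\mathsf{YQP}_+/\mathsf{poly}$ protocol. The classical advice $a_n$ lists the $X_i$ together with rational approximations $r_{i,x'}$ of the values $f_i(x')$ to within $\alpha/3$ (the unbounded advice-giver can compute these from an explicit description of $\rho_n$). The untrusted quantum advice is $\sigma = \sigma_1 \otimes \cdots \otimes \sigma_m$, with honest play $\sigma_i = \rho_i$. On input $x$, verifier $A$ outputs, for each $(i, x')$ with $x' \in X_i$, the circuit that applies $Q(x',\cdot)$ to register $i$ together with target $r_{i,x'}$; we choose the $\mathsf{YQP}_+$ precision parameter $q(n) = \Theta(1/\alpha)$ so that the $5/q(n)$ soundness gap combined with the $\alpha/3$ slack in the $r_{i,x'}$'s enforces $\Delta_\infty(f_{\sigma_i}, f_i)[X_i] \leq \alpha$ on every $\sigma$ that $A$ accepts. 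Solver $B$ picks $i \in [m]$ uniformly and outputs $Q(x, \sigma_i)$; its expected acceptance probability is $g(x) \in [0, 1/3] \cup [2/3, 1]$, so $B$ returns $L(x)$ with probability $\geq 2/3$. Theorem~\ref{yqpplusthm} (whose proof relativizes to the $/\mathsf{poly}$ setting) then converts this $\mathsf{YQP}_+/\mathsf{poly}$ protocol into a genuine $\mathsf{YQP}/\mathsf{poly}$ protocol. The main technical delicacy is orchestrating the three tolerance scales — the amplified $Q$-margin, Lemma~\ref{realmajcer}'s closeness $\alpha$, and the $\mathsf{YQP}_+$ soundness gap $5/q(n)$ — so that passing every check $A$ issues exactly implies the hypothesis of the Majority-Certificates Lemma; the polynomial fat-shattering bound from Theorem~\ref{qlearn} is what keeps $m$, $|X_i|$, and $1/\alpha$ all $\operatorname*{poly}(n)$.
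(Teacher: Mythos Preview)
Your proposal is correct and follows essentially the same route as the paper: apply Lemma~\ref{realmajcer} to the p-concept class $S_n$ of acceptance functions (using Theorem~\ref{qlearn} to bound the fat-shattering dimension), encode the resulting certificates and targets as classical advice in a $\mathsf{YQP}_{+}/\mathsf{poly}$ protocol whose verifier tests the $m$ registers and whose solver averages over them, and then invoke Theorem~\ref{yqpplusthm}. One small point worth making explicit (as the paper does) is that a dishonest Merlin may entangle the $m$ registers rather than send a product state $\sigma_1\otimes\cdots\otimes\sigma_m$; your linearity argument for $B$'s acceptance probability still goes through once $\sigma_i$ is read as the reduced state on register~$i$.
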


\begin{proof}
One direction ($\mathsf{YQP^{\ast}/poly}\subseteq\mathsf{BQP/qpoly}$) is obvious,
since untrusted quantum advice and trusted classical advice can both be
simulated by trusted quantum advice. \ We prove that $\mathsf{BQP/qpoly}%
\subseteq\mathsf{YQP^{\ast}/poly}$.  Let $L \in \mathsf{BQP/qpoly}$, and let $Q(x, \rho), \{\rho_n^*\}_{n > 0}$ be a polynomial-time quantum algorithm (given by a uniform circuit family $\{Q_n\}_{n > 0}$ for input length $n$) and polynomial-size quantum advice family defining $L$.  We insist that $Q$ enjoy completeness and soundness parameters $(99/100, 1/100)$ in place of $2/3, 1/3$ in Definition~\ref{def:bqppoly}; this can be achieved by standard soundness amplification by providing multiple copies of the trusted advice state.  We apply Theorem~\ref{new_generalized} to $Q_n(x, \rho)$ and $\{\rho_n^*\}_{n > 0}$ with $d := 1$, for each $n $.  We obtain a (non-uniform) family of input-oblivious advice-testing quantum circuits $\{Y_n\}_{n > 0}$, such that:
\begin{enumerate}
\item[(i)] For each $n$, there is a state $\sigma$ such that in the execution of $Y_n(x, \sigma)$ we have $\Pr[b_{\adv} = 1] \geq 1 - e^{-n}$;
\item[(ii)]  For any $n > 1$ and advice state $\sigma$, it holds that for each $x \in \{0, 1\}^n$, in the execution of $Y_n(x, \sigma)$,
\[ \Pr[b_{\adv} = 1] \geq n^{-1} \ \quad{} \Longrightarrow  \quad{}  \left|   \bE[b_{\out} |  b_{\adv} = 1]  \ - \  \bE[Q_n(x, \rho^*_n) ]    \right|  \ \leq \  n^{-1} \ . \]
\end{enumerate}
Now by the definitions of $Q_n$ and $\rho^*_n$, we have $|\bE[Q_n(x, \rho^*_n)] - L(x)| \leq 1/100$ for all $x \in \{0, 1\}^n$.  Thus, if $n$ is sufficiently large, we have
\begin{enumerate}
\item[(iii)]  For any advice state $\sigma$ for length $n$, it holds that for each $x \in \{0, 1\}^n$, in the execution of $Y_n(x, \sigma)$, if $\Pr[b_{\adv} = 1] \geq 1/10$, then we have 
\[ \left|   \bE[b_{\out} |  b_{\adv} = 1]  \ - \  L(x)     \right|  \ \leq \  n^{-1}  +  1/100   \ \leq \ 1/10 \ . \]
\end{enumerate}
Thus the family $\{Y_n\}_{n > 0}$ witnesses that $L \in \mathsf{YQP^{\ast}/poly}$.  This proves Theorem~\ref{yqpthm}.
\end{proof}

One interesting consequence of Theorem~\ref{yqpthm} is that $\mathsf{YQP/poly} = \mathsf{YQP^{\ast}/poly}$.  We do not know of an easier proof of this equality, and we leave as an open question whether, in the uniform setting, the corresponding equality $\mathsf{YQP} = \mathsf{YQP}^{\ast}$ holds.

Since we never critically used the assumption that
the $\mathsf{BQP/qpoly}$\ machine\ computes a \textit{language} (i.e., a total
Boolean function), a strengthening of Theorem~\ref{yqpthm} we can easily observe is the promise-class equality
$\mathsf{P{}romiseBQP/qpoly}=\mathsf{P{}romiseYQP^{\ast}/poly} = \mathsf{P{}romiseYQP/poly}$.

\subsection{Application to Quantum Communication}

We can also use our Theorem~\ref{new_generalized} to obtain a new positive result about the possibility of robust communication over fault-prone \emph{quantum communication channels} (augmented with a trustworthy classical channel).  Our result does not assume any particular error model for quantum channels.  Rather, it asserts that a successful outcome is achieved by the protocol under a perfect transmission, and that the protocol guards against a certain type of bad outcome under \emph{any} corruption of the transmitted quantum state.

\begin{theorem}
\label{channelthm}Suppose that Alice, who is computationally unbounded, has a
classical description of an $N$-qubit quantum state $\rho^*$. \ She wants to
send $\rho^*$\ to Bob, who is computationally bounded. \ Assume that Alice
has at her disposal a noiseless one-way classical channel to Bob, as well as a
noisy one-way quantum channel.  Bob holds a binary measurement $E$ for which he wishes to learn $\bE[E(\rho^*)]$ to within an accuracy $\eps > 0$.  We assume $E$ is implemented by a circuit with at most $m$ gates (under some fixed finite basis); here $m$ is known to Alice, but $E$ is known only to Bob.

Then for all $\varepsilon>0$, there
exists a protocol whereby

\begin{itemize}
\item Alice sends Bob a classical string $z$\ of $\operatorname*{poly}%
\left(  N,m,1/\varepsilon\right)  $\ bits, as well as a state $\sigma$\ of
$\operatorname*{poly}\left(  N,m,1/\varepsilon\right)  $\ qubits;

\item Bob receives $z$\ together with a possibly-corrupted version
$\widetilde{\sigma}$\ of $\sigma$, and performs a (non-binary) measurement $f_{z}\left(  E\right)  $\ on $\widetilde{\sigma}$, outputting a real value $\beta \in [0, 1]$ along with a ``success bit'' $b_{\suc} \in \{0, 1\}$.  This $f_{z}\left(  E\right)$ can be computed and performed in $\operatorname*{poly}\left(  N ,m,1/\varepsilon\right)  $ steps, given $z$\ together with a description of $E$.
\end{itemize}

The following properties hold:
\begin{enumerate}
\item[(i)] If $\widetilde{\sigma}=\sigma$, then with probability greater than $1 - 2^{- N}$ we have $\left\vert
 \beta - \bE[E(\rho^*)]  \right\vert \leq\varepsilon$
and $b_{\suc} = 1$; \ 
\item[(ii)] For every $\widetilde{\sigma}$ and every measurement $E$ as described above,
with probability at least $1-2^{-N}$, Bob either sets $b_{\suc} = 0$, or 
outputs a $\beta\in\left[  0,1\right]  $ such that $\left\vert \beta - \bE[E(\rho)]   \right\vert \leq\varepsilon$.
\end{enumerate}
\end{theorem}

\begin{proof}
We will apply Theorem\ \ref{new_generalized}\ to the
communication setting. The string $z$\ plays the role of the trusted
classical advice; the state $\widetilde{\sigma}$\ plays the role of the
untrusted quantum advice; the measurement $E$ plays the role of the input
$x$; Bob plays the role of the advice-testing algorithm $Y$.  We will perform multiple trials to increase our confidence.

We prove the result under the assumption that $\eps$ is at least inverse-polynomial in $N$, which allows us to apply our prior work more directly.  We will assume that $\eps \geq N^{-1}$; the general result will follow, since in our construction we may begin by padding the quantum register with $1/\eps$ dummy qubits.  The protocol will succeed for sufficiently large $N$---smaller values of $N$ can be handled by brute force.

Let $n > 0$ be a fixed description length adequate to describe any $m$-gate measurement $E$ that may be held by Bob in our communication scenario, for our specific values of interest $m, N$; here we can take $N \leq n \leq \poly(N, m)$.  Let $Q_n(E, \xi)$ be a quantum circuit which receives a description of a binary measurement $E$ of description length $n$, described by a circuit in our fixed finite basis.  $Q_n$ also receives a quantum state $\xi$ on $N$ qubits, and outputs the result of $E(\xi)$.  This $Q_n$ can be implemented in size $\poly(n, N) \leq \poly (N ,m)$.
Let $Y_n = Y_n(E, \ol{\sigma })$ be the input-oblivious advice-testing circuit of size $\poly(N, m)$ given by Theorem~\ref{new_generalized} for $(Q_n, \rho^*, d := 2)$.  

In our protocol, Alice sends a description of $Y_n$ as the reliable classical message $z$ to Bob, and for the fault-prone quantum state $\sigma$, Alice sends $T := n^4$ independent copies of the $P$-qubit advice state $\ol{\sigma}^*$ guaranteed to exist by item (i) of Theorem~\ref{new_generalized}; we have $|z| \leq \poly(N, m)$ and $\sigma$ is on $\poly(N, m)$ qubits, as needed.

Bob receives the (correct) string $z$, and a quantum state $\tilde{\sigma}$ on $T \cdot P$ qubits, where we consider this state to be defined over $T $ registers called the ``transmission registers.''  Bob acts as follows (these steps define the measurement $f_z(E)$): For $i = 1, 2, \ldots, T$, Bob executes $Y_n$ applied to input bitstring $E$, classical advice $z$, and with the $i^{th}$ transmission register used as the quantum advice state.  For each such application of $Y_n$ in turn, Bob measures the bits $b_{\adv, i}, b_{\out, i}$ (here, we use $b_{\adv, i}$ to denote the value of $b_{\adv}$ on the $i^{th}$ trial, and similarly for $b_{\out, i}$).  If $b_{\adv, i} = 0$ for any $i$, Bob sets $b_{\suc} := 0$ (and sets $\beta := 0$, say).  Otherwise, Bob sets $b_{\suc} := 1$ and outputs the value $\beta := \frac{1}{T}\sum_{i \in [T]} b_{\out, i}$.

Let us analyze this procedure.  First note that when Bob receives the same state $\ol{\sigma}^*$ sent by Alice, item (i) of Theorem~\ref{new_generalized} tells us that each $b_{\adv, i}$ equals 1 with probability at least $1 - e^{-n}$.  Then by a union bound over all $i$, for sufficiently large $N$, each of these bits equals 1 with probability at least $1 - 2^{-(n +1)}$.  So $\Pr[b_{\suc} = 1] \geq 1 - 2^{-(n+1)}$.  Also, item (ii) of Theorem~\ref{new_generalized} tells us that each $b_{\out, i}$ satisfies $|\bE[b_{\out, i}] -  \bE[E(\rho^*)] | = |\bE[b_{\out, i}] -  Q_n(E, \rho^*) |   \leq  n^{-2}$, and these bits are independent.  By Chernoff's bound, $\Pr[|\beta -   \bE[E(\rho^*)] | \leq n^{-1}] \geq 1 - 2^{-(n + 1)}$ for large $n$.  A union bound completes the proof of item (i) in the Theorem's statement.

For item (ii), consider any quantum state $\tilde{\sigma}$ on $T \cdot P$ qubits received by Bob.  Each execution of Bob's algorithm determines, for each $i \in [T]$, a mixed state $\xi_i$ on $P$ qubits that describes the reduced state on the $i^{th}$ transmission register, immediately after Bob has applied $Y_n$ to the first $(i - 1)$ transmission registers and measured $b_{\adv, 1}, b_{\out, 1}, \ldots, b_{\adv, i - 1}, b_{\out, i - 1}$.  We consider $\xi_i$ as a random variable determined by Bob's execution (acting on the pair $z, \tilde{\sigma}$).

Say that state $\xi$ on $P$ qubits is \emph{good}, if in the execution of $Y_n(x, \xi)$, we have $\Pr[b_{\adv} = 1]  \ \geq \ n^{-2}$.  Let $G \subseteq [T]$ be the (random) set $\{i: \xi_i$ is good$\}$.  Conditioned on any outcomes $b_{\adv, 1}, b_{\out, 1}, \ldots, b_{\adv, i - 1}, b_{\out, i - 1}$ which determine a state $\xi_i$ which is good, item (ii) of Theorem~\ref{new_generalized} tells us that the expected value of $b_{\out, i}$, \emph{conditioned} on $[b_{\adv, i} = 1]$, is within $\pm n^{-2}$ of $\bE[Q(E, \rho^*)] =  \bE[E(\rho^*)]$.

For $i \in [T]$, let the random variable $Z_i \in \{0, 1\}$ be defined by 
\[ Z_i := 
\begin{cases} b_{\out, i} & \text{if $i \in G$ and $b_{\adv, i} = 1$,}
\\
\text{an independent coin flip with bias $\bE[E(\rho^*)]$}  &\text{otherwise.}
\end{cases}
\]
Note that we have the relation $\left|\bE[Z_i | Z_1, \ldots, Z_{i - 1}] - \bE[E(\rho^*)] \right| \leq n^{-2}$.  By an application of Azuma's inequality,
\[  \Pr\left[  \left|  \frac{1}{T}\sum_{i \in T} Z_i  -    \bE[E(\rho^*)]  \right|   \geq  n^{-2} + .5n^{-1}    \right]  \ \leq \  \exp\left( - \Omega(  (.5 n^{-1})^2 \cdot T )\right) \ \leq \     e^{-n} \ , \]
for sufficiently large $N$.

%As a consequence, we have $\left|\bE[  \frac{1}{T}\sum_{i \in T} Z_i ]   -   \bE[E(\rho^*)] \right| \leq n^{-2}$.  Then by a standard concentration bound, 

%\textbf{[martingale-type; should we cite/explain?]}

Now, it is clear that $\Pr[|[T]\setminus G| > n \ \wedge \ b_{\suc} = 1] \leq (n^{-2})^n < e^{-n}$.  If $|[T]\setminus G| \leq n$ and $b_{\suc}= 1$, then we have $b_{\adv, i} = 1$ for all $i$ so that $\left|\frac{1}{T}\sum_{i \in T} Z_i  -\frac{1}{T}\sum_{i \in T}b_{\out, i}\right| \leq n/T$.
Combining this with our previous work, it follows that 
\[    \Pr\left[   b_{\suc} = 1 \  \wedge \   \left|  \frac{1}{T}\sum_{i \in T} b_{\out, i} -  \bE[E(\rho^*)]  \right|   \geq  \left(  n^{-2} + .5 n^{-1}  \right) +  n/T    \right]  \ \leq \  2 \cdot e^{-n} \ \leq 2^{-n} \ ,  \]
for large $N$; for such $N$ we have $n^{2} + .5n^{-1} + n/T  \leq n^{-1}$.  As $n^{-1} \leq N^{-1} \leq  \eps$, this gives item (ii).
 \end{proof}

% \ To
%get $1-1/\exp\left(  n\right)  $\ success probability, we amplify the protocol
%$O\left(  n\right)  $\ times, which just makes $y$\ and $\widetilde{\sigma}%
%$\ polynomially longer.

\section{Local Hamiltonians and the Complexity of Preparing Quantum Advice States\label{LOCALHAM}}

In this section we begin the proof of Theorem \ref{mainthm0} from the Introduction, which we will obtain from a slightly more general result.

Let $\mathcal{B}^{\otimes N}$ denote the $2^N$-dimensional complex Hilbert space whose unit ball consists of the $N$-qubit pure quantum states.  Recall that a \emph{Hamiltonian} on $N$-qubit states is a Hermitian operator $H: \mathcal{B}^{\otimes N} \rightarrow \mathcal{B}^{\otimes N}$.  (We will only discuss the action of Hamiltonians on pure states.)  $H$ is called a \emph{$k$-local Hamiltonian} if it can be written as $H = \sum_{i = 1}^{s} H_i$, where each $H_i$ is a Hermitian operator acting on at most $k$ qubits.

If we combine Theorem \ref{new_generalized}\ with known $\mathsf{QMA}$-completeness
reductions (and some further analysis of these reductions), we can obtain a striking consequence for quantum complexity theory.
\ Namely, \textit{the preparation of quantum advice states can always be
reduced to the preparation of ground states of 2-local Hamiltonians}---despite
the fact that quantum advice states involve an exponential number of
constraints, while ground states of local Hamiltonians involve only a
polynomial number.\textit{ \ }(In particular, if ground states of local
Hamiltonians can be prepared by polynomial-size circuits, then we have not
only\textit{ }$\mathsf{QMA}=\mathsf{QCMA}$, but also $\mathsf{BQP/qpoly}%
=\mathsf{BQP/poly}$.) \ 
Our objective in Sections~\ref{sec:5local} and~\ref{sec:2local} is to prove the following result:

\begin{theorem}
\label{localthm}Let $C^{*}(z, \rho)$ be a quantum circuit of $T$ gates (each 2-local) taking an input string $z \in \{0, 1\}^N$ and a quantum state $\rho$ on $\ell$ qubits (we may assume $\ell \leq 2T$).  Let $\rho^*$ be a distinguished state on $\ell$ qubits.  For all $\delta>0$, there exists a second quantum
circuit $C^{\prime}$ and a $2$-local Hamiltonian $H$ acting on
$\ell' \leq \poly\left( T,  N,1/\delta\right)  $ qubits, such that for
any ground state $\ket{\psi}$\ of $H$ and any input
$z\in\left\{  0,1\right\}^{N}$,%
\[
\left\vert \bE\left[  C^{\prime}(z, \ketbra{\psi}{\psi})\right]  -\bE\left[  C^*(z,\rho^*)\right]  \right\vert
\leq \ \delta \ .
\]
While a description of $H$ may not be efficiently computable, $C^{\prime}$\ can be constructed in (classical, deterministic) time $\poly(T, N, 1/\delta)$, given $\delta$ and descriptions of $C^*$ and $H$.
\end{theorem}

Our proof of Theorem~\ref{localthm} combines Theorem~\ref{new_generalized} with the following result on the expressive power of ground states of 2-local Hamiltonians.

\begin{theorem}\label{thm:new_key}
Let $V(\xi)$ be a quantum ``verifier'' circuit of $T$ gates (each 2-local), which acts on an $m$-qubit quantum state $\xi$ and an ancilla register of $N - m$ qubits (we may assume $N \leq 2T$), with the ancilla register initially in the all-zero state.  Suppose that $V$ defines a binary measurement on $\xi$.  Fix any $\eps > 0$, and assume that $\max_{\rho}\bE[V(\rho)] \geq 1 - \eps$. Then there exists
\begin{itemize}
\item A 2-local Hamiltonian $H_{V, \eps}$ acting on $N'$-qubit states, for some $N' \leq \poly(T, 1/\eps)$, expressed as a sum of 2-local terms $H_i$ with operator norm $\frac{1}{\poly(T, 1/\eps)}  \leq ||H_i|| \leq \poly(T, 1/\eps)$; and
\item A quantum operation $R_{V, \eps}$ mapping $N'$-qubit states to $m$-qubit states,\footnote{The state output by $R_{V, \eps}$ may be mixed, even if its input state is pure.} implemented by a quantum circuit with $\poly(T, 1/\eps)$ gates,
\end{itemize}
for which the following property holds: if $\ketbra{\psi}{\psi}$ is any ground state of $H_{V, \eps}$, then for $\xi := R_{V, \eps}(\ketbra{\psi}{\psi})$ we have
\[   \bE[ V(\xi) ] \ \geq \ 1 -   \kappa \cdot T^{\kappa }\eps^{1/\kappa} \ , \]
where $\kappa > 1$ is an absolute constant.
Furthermore, $H_{V, \eps}$ and $R_{V, \eps}$ can be constructed in (classical, deterministic) time $\poly(T, 1/\eps)$, given a description of $V$.
\end{theorem}

We will obtain Theorem~\ref{thm:new_key} by a detailed analysis of known $\mathsf{QMA}$-completeness reductions.  We defer the proof.

%As a ``free byproduct'' of Theorem~\ref{localthm}, we get that
%$\left\vert \bE\left[  C^{\prime}(z,\xi )\right]  -\bE\left[  C^*(z,\rho^*)\right]  \right\vert
%\leq  2\delta$
%for all $\xi$\ that are $\delta$\textit{-close}
%in trace distance to a ground state of\ $H$. \    THIS IS SUPERCEDED BY SOMETHING BETTER...
Theorem~\ref{mainthm0} is now easily obtained:

\begin{proof}[Proof of Theorem~\ref{mainthm0}]
Define a circuit $C^*(E, \rho)$ which takes as input a circuit $E$ of size $n^c$ defining a binary measurement, and a quantum state $\rho$ on $n$ qubits, and executes $C(\rho)$.  The circuit $C^*$ can be implemented in size $\poly(n)$ using 2-local gates, and we have $\bE[C^*(E, \rho)] = \bE[E(\rho)]$ for all inputs $(E, \rho)$ to $C^*$.  The result follows by an application of Theorem~\ref{localthm} to $C^*$ and $\rho^*$.
\end{proof}

\begin{proof}[Proof of Theorem~\ref{localthm}]  We may (by a padding argument as in the proof of Theorem~\ref{channelthm}) assume that $\delta \geq 2/N$.  We may also assume that $N \geq 2$ and $\delta < .5$.
Let $n$ be a value such that for any $z \in \{0, 1\}^N$, a description of length exactly $n$ can be given for the specialized circuit $C^{*}(z, \cdot)$; here, we can take $N \leq n \leq \poly(T, N)$.

Let $P(C, \xi)$ be a polynomial-time quantum algorithm which receives a description of a circuit $C$, of description length $n$, defining a binary measurement, and applies $C$ to an $\ell$-qubit input state $\xi$ (where $\ell$ is as in the statement of Theorem~\ref{localthm}), outputting the result.

Let $Y_n = Y_n(C, \ol{\sigma })$ be the input-oblivious advice-testing circuit provided by Theorem~\ref{new_generalized} for $(P, \rho^*, d := 2)$. 
The number of gates in $Y_n$ is at most $\poly(n) \leq \poly(T, N)$.  Let $p$ be the number of qubits in the quantum advice register for $Y_n$.
Let $C' = C'(z, \ol{\sigma})$ be the circuit which executes $Y_n(C^*(z, \cdot), \ol{\sigma})$ and outputs the measured bit $b_{\out}$.  

Next we will define $H$ as in the Theorem statement, using Theorem~\ref{thm:new_key}.  The circuit $Y_{n}$ has two subcircuits $A, B$, following Definition~\ref{def:adv_test}.  
Let $V(\ol{\sigma})$ be the circuit which executes $A(\ol{\sigma})$, and outputs the measured bit $b_{\adv}$. 
By item (i) of Theorem~\ref{new_generalized}, there exists a state $\ol{\sigma}^*$ on $p$ qubits for which $\bE[V(\ol{\sigma}^*)] \geq 1 - 2^{-n}$.  For large enough $N$ this is greater than $1 - \eps$, where $ \eps := (\delta/(2\kappa T^{\kappa}))^{\kappa}$ for the constant $ \kappa > 1$ from Theorem~\ref{thm:new_key}.

Theorem~\ref{thm:new_key} now gives us a Hamiltonian $H = H_{V, \eps}$ and quantum operation $R = R_{V, \eps}$.  These have the property that for any ground state $\ketbra{\psi}{\psi}$ of $H$, for $\xi := R(\ketbra{\psi}{\psi})$ we have 
\[ \bE[V(\xi)] \geq 1 - \delta/2 \ > \ n^{-2} \]
(the first inequality holding by of our choice of $\eps$).  By definition of $V$, this means that in the execution of $Y_n(C(z, \cdot), \xi)$, we have $\Pr[b_{\adv} = 1] > n^{-2}$.
(This holds for any $z \in \{0, 1\}^N$; the expectation above is independent of $z$ since $Y_{n}$ has the input-oblivious testing property.)
By our guarantee for $Y_{n}$ given in Theorem~\ref{new_generalized}, item (ii), it follows that in the execution of $Y_{n}(C(z, \cdot), \xi)$ on any circuit $C(z, \cdot)$ of description length $n$,
\[  \left|  \bE[b_{\out}| b_{\adv} = 1] -  \bE[P(C(z, \cdot), \rho_{n})]    \right|   \ \leq \   n^{-2} \ .   \]
Recall from our definition that the output bit of $C'(z, \xi)$ is distributed as $b_{\out}$ in the execution of $Y_{n}(C^*(z, \cdot), \xi)$.
Thus,
\[   \left|\bE[ C'(z, \xi) ]-  \bE[P(C^*(z, \cdot), \rho^*)]    \right|  \ \leq \ n^{-2} + \Pr[b_{\adv} = 0] \ ,     \]
where $b_{\adv}$ is as in the execution of $Y_{n}(C(z, \cdot), \xi)$.  We have seen that in this execution $\Pr[b_{\adv} = 1] > 1 - \delta/2$, so the right-hand side above is at most $n^{-2} + \delta/2 \leq \delta$.   Also, by our definitions, $\bE[P(C(z, \cdot), \rho^*)] =  \bE[C(z, \rho^*)]$.
This proves the Theorem.
\end{proof}

\section{Reduction to 5-local Hamiltonians}\label{sec:5local} %555

In Sections~\ref{sec:5local} and~\ref{sec:2local}, we prove Theorem~\ref{thm:new_key}. The proof is achieved by a sequence of reductions.  Each reduction was defined previously, but we need to establish facts about these reductions not found in previous references~\cite{ksv, an, kkr, OT}.  This requires careful work.

For a Hamiltonian $H$, we use $\lambda_1(H) \leq \ldots \leq \lambda_{M}(H)$ to denote the real eigenvalues of $H$, counted according to their geometric multiplicity\footnote{That is, an eigenvalue $\lambda$ appears $p$ times in the list, where $p$ is the dimension of the eigenspace for $\lambda$.  By the spectral theorem we have $M = $ dim$(\mc{B}^{\ot N}) = 2^N$.} and sorted in nondecreasing order.  We will use $||H||$ to denote the operator norm of $H$. 

The \emph{energy} of a pure state $\ket{\psi}$ with respect to $H$ is defined as $\bra{\psi}H\ket{\psi}$.
It is a basic fact that for all vectors $\ket{\psi}$ we have $\bra{\psi}H\ket{\psi} \geq  \lambda_1(H) \cdot ||\ket{\psi}||$, and the ground states of $H$ are precisely those unit vectors for which equality holds.
In proving Theorem~\ref{thm:new_key} a key role will be played by \emph{nearly-minimal-energy} states---those unit vectors $\ket{\psi}$ for which $\bra{\psi}H\ket{\psi} \approx  \lambda_1(H)$.

In this section, we will use the original $\mathsf{QMA}$-completeness reduction, due to Kitaev~\cite{ksv}, to prove Theorem~\ref{thm:variant} below, a variant of Theorem~\ref{thm:new_key}.  This variant is weaker, in that the Hamiltonian $H$ produced is only required to have locality 5, rather than 2; but it is stronger in that the reduction $R$ is required to produce a useful state given any nearly-minimal-energy state for $H$ (not just any ground state).  This ``robust'' guarantee will be important in our subsequent construction of 2-local Hamiltonians.  Theorem~\ref{thm:variant} is also stronger in that $H, R$ are chosen independent of $\eps$, although this property is not essential for our work.

\begin{theorem}\label{thm:variant}
Let $V(\xi)$ be a quantum ``verifier'' circuit of $T$ gates (each 2-local), which acts on an $m$-qubit quantum state $\xi$ and an ancilla register of $N - m$ qubits (we may assume $N \leq 2T$), with the ancilla register initially in the all-zero state.  Suppose that $V$ defines a binary measurement on $\xi$.  Then there exists
\begin{itemize}
\item A 5-local Hamiltonian $H_{V}$ acting on $N'$-qubit states, for some $N' \leq O(T)$, expressed as a sum of 5-local terms $H_i$ of operator norm $\frac{1}{\poly(T, 1/\eps)} \leq || H_i || \leq \poly(T, 1/\eps)$, and
\item A quantum operation $R_{V}$ mapping $N'$-qubit states to $m$-qubit states, implemented by a quantum circuit with $\poly(T)$ gates,
\end{itemize}
for which the following property holds for any $\eps > 0$: if $\max_{\rho}\bE[V(\rho)] \geq 1 - \eps$, and if $\ket{\psi}$ is any $N'$-qubit state such that
\[     \bra{\psi}H_{V}\ket{\psi} \ < \  \lambda_1(H_{V}) +  \eps   \ , \]
then for $\xi := R_{V}(\ketbra{\psi}{\psi})$ we have
\[   \bE[ V(\xi) ] \ \geq \ 1 -   c \cdot T^{c}\eps^{1/c} \ , \]
where $c > 1$ is an absolute constant.
Furthermore, $H_{V}$ and $R_{V}$ can be constructed in time $\poly(T)$, given a description of $V$.
\end{theorem}

%Here, in our proof of the weaker result, both of $H_{V, \eps}, R_{V, \eps}$ will actually be chosen independent of $\eps$.  

Theorems~\ref{localthm} and~\ref{thm:new_key} can be similarly strengthened, so that their guarantees hold for nearly-minimal-energy states of the local Hamiltonian as well as for ground states.  The dependence of the output Hamiltonian upon the choice of error parameters appears necessary in these results, however.

Similarly to Kitaev's work, it turns out to be convenient to first prove a weakened form of Theorem~\ref{thm:variant} in which the Hamiltonian is only required to be $O(\log T)$-local.  This forms the bulk of our work in this section. It will then be a simple step to reduce the locality to 5.

\subsection{The $O(\log T)$-Local Reduction}\label{ss:log_reduction}

%Throughout the note, we specify integer intervals by the notation $[T] = \{1, \ldots, T\}$ and $[t, T] = \{t, t + 1, \ldots, T\}$. 

\paragraph{The Hamiltonian:} Say that $V$, which expects a proof state $\xi$ on $m$ qubits, acts upon the ``proof register'' containing $\xi$ and an $(N- m)$-qubit ``ancilla register,'' initialized to the all-zero state, by the sequence $U_1, \ldots, U_T$ of unitary transformations, each of which is 2-local.  Here we may assume (by padding, if necessary) that $T + 1$ is a power of 2.  
The transformation performed by $V$, applied to a pure input state $\ketbra{\psi}{\psi}$, produces the state
\[  U_T  \ldots U_1 \cdot \left( \ket{\psi} \ot \ket{0^{N - m}}\right) \ .  \]
Afterward, we assume that the first qubit is measured in the standard basis; $V$ outputs the measured value.  We use $V$ to define a Hamiltonian $H = H_V$ acting on $N' := N + D$ qubits, where $D :=  \log_2 (T+1)$, as follows.  We speak of the first $N$ qubits (consisting of the proof and ancilla registers) jointly as the ``circuit register,'' and the last $N$ qubits as a ``clock register.''  The local unitaries $U_1, \ldots, U_T$ will be regarded as operators on the Hilbert space of the circuit register. We identify the computational basis states of the clock register with the integers $\{0, 1, \ldots, T\}$, and we write these basis states as $\ket{t}$ for $0 \leq t \leq T$.  

To specify projective operators acting on the circuit register, we use the notation
$\ketbra{b}{b}_i$ for $b \in \{0, 1\}$, $i \in [N]$ to denote the projection onto the subspace spanned by all computational basis vectors whose $i^{th}$ coordinate is $b$.  Formally,
\[   \ketbra{b}{b}_i  \ :=  \  I_{i - 1} \ot \ketbra{b}{b} \ot I_{N - i} \ .   \]

We define a Hamiltonian operator $H = H_{V}$ having three terms, $H_{\inn}, H_{\out},$ and $H_{\prop}$.  For our analysis we will depart slightly from~\cite{kkr} in our definitions; however, each of the three terms will be a positive scalar multiple of the corresponding term in~\cite{kkr}.  We define
\begin{equation}\label{eq:Hdef}
H \ := \ H_{\inn}  +   H_{\out} +  H_{\prop} \ ,
\end{equation}
where
\begin{equation}\label{eq:Hindef}
\hin \ := \ \frac{1}{2} \sum_{i = m + 1}^N \ketbra{1}{1}_i  \ot \ketbra{0}{0} \ 
\end{equation}
(here the rightmost projector $\ketbra{0}{0}$ is onto the basis vector $\ket{t = 0}$ for the clock register),
\begin{equation}\label{eq:Houtdef}
\hout \ := \  \frac{1}{2}  \ketbra{0}{0}_1 \ot  \ketbra{T}{T} \ ,
\end{equation}
and
\begin{equation}\label{eq:Hpropdef}
\hprop \ := \  \sum_{t = 1}^T H_{\prop, t} \ ,
\end{equation}
where the operators $H_{\prop, t}$ are defined for $t \in [T]$ by
\begin{equation}\label{eq:Hprop_tdef}
H_{\prop, t} \ := \  \frac{1}{2} \left(   I_N \ot \ketbra{t}{t} + I_N \ot \ketbra{t- 1}{t-1}  -   U_t \ot \ketbra{t}{t - 1}  -  U^{\dag}_{t} \ot \ketbra{t - 1}{t}  \right)   \ .
\end{equation}

Note immediately that the operator norms of the individual $O(\log T)$-local terms of $H$ are each $\Theta(1)$.

One can verify that $H_{\prop, t}$ is Hermitian.  More strongly, $H_{\inn}, H_{\out}$, and the terms $H_{\prop, t}$ are all positive semidefinite (PSD).  For the first two this is obvious: $H_{\inn}, H_{\out}$ are orthogonal projectors.  
To see that $H_{\prop, t}$ is PSD, it is clearly enough to show that $\bra{w} H_{\prop, t} \ket{w} \geq 0$ for any $\ket{w}$ of form $\ket{w} = \ket{w_{t - 1}}\ot \ket{t - 1} + \ket{w_t} \ot \ket{t}$.  We compute
\begin{align*} 2 \cdot \bra{w} H_{\prop, t} \ket{w}  \ &= \   \braket{w_t}{w_t}  + \braket{w_{t-1}}{w_{t-1}} - \bra{w_t}U_t \ket{w_{t - 1}}  - \bra{w_{t - 1}}U^\dag_t \ket{w_t} \\
&= \  || \ket{w}||^2 -  \bra{w_t}U_t \ket{w_{t - 1}} - \overline{\bra{w_t}U_t \ket{w_{t - 1}} } \\
&\text{(a real value, so $H_{\prop, t}$ is Hermitian)} \\
&\geq \  || \ket{w}||^2 -  2|| \ket{w_t} || \cdot || U_t \ket{w_{t - 1}} || \\
&= \ || \ket{w}||^2 -   2|| \ket{w_t} || \cdot ||  \ket{w_{t - 1}} || \\
&\geq \  || \ket{w}||^2 - ||\ket{w_t}||^2  -  ||\ket{w_{t - 1}}||^2 \\
&= \ 0 \ ,
\end{align*}
as needed.  Thus $H$, a sum of PSD operators, is itself PSD (and $\lambda_1(H) \geq 0$).  This will be important for our analysis.

\paragraph{The transformation of quantum states:} For our transformation $R = R_V$ of quantum states as in Theorem~\ref{thm:new_key}, we use the operation which first measures the clock register, observing some value $t \in [0, T]$, and then applies $U_1^{\dagger}\ldots U_T^{\dagger}$ to the circuit register, outputting the resulting $m$-qubit reduced state on the proof register alone (eliminating the ancilla and clock registers).  This transformation is implementable in size $\poly(T)$, since an inverse unitary operation $U^\dag$ is $k$-local whenever $U$ is $k$-local.

\paragraph{Objective of the analysis:} It is shown in~\cite{ksv, an} that, if $\max_{\rho} \bE[V(\rho)] \geq 1 - \eps$, then the minimal eigenvalue $\lambda_1(H)$ is at most $O(\eps)$.  (This fact is unaffected by our scalar-multiple adjustments to the definitions of $H_{\inn}, H_{\out}, H_{\prop}$.)  In our analysis, we will assume that $\lambda_1(H) < .01\delta/T$, where $\delta > 0$ will be defined as a sufficiently small inverse-polynomial in $T$.  This smallness assumption is without loss of generality, since our sought-after bound in Theorem~\ref{thm:variant} allows a $\poly(T)$ slack factor.  We will then show that if $\ket{\psi}$ is any state satisfying $\bra{\psi}H\ket{\psi} < .02\delta/T$, the $m$-qubit (mixed) state $\xi := R(\ketbra{\psi}{\psi})$ satisfies $\bE[V(\xi)] \geq 1 - \delta^{\Omega(1)}$.  This suffices to prove the weakened version of Theorem~\ref{thm:variant} in which $H$ is only required to be $O(\log T)$-local.

\subsection{Describing the Action of $H$ on a State}\label{ss:action}

%: it implies that if $H$ has an eigenvector $\ket{\psi}$ of unit norm, with eigenvalue less than some positive value $\alpha$, then in fact $|| H \ket{\psi} || < \alpha$.  
%\textbf{[SHARPEN/REDIRECT ABOVE STATEMENT...]}

Here we introduce notation and derive some useful expressions which describe the action of $H$ on an arbitrary pure state.

Consider an $(N + \log_2 (T+1))$-qubit state $\ket{\psi}$, given by
\[  \ket{\psi} \ = \ \sum_{y \in \{0, 1\}^N, t \in \{0, 1, \ldots, T\}} \alpha_{y, t} \ket{y} \ot \ket{t} \ ,    \]
with $\sum_{y, t} |\alpha_{x, t}|^2 = 1$.  We may write
\[   \ket{\psi} \ = \ \sum_{t \in \{0, 1, \ldots, T\}} \ket{\psi_t} \ot \ket{t}    \ ,  \]
where
\[  \ket{\psi_t} \ := \ \sum_{y \in \{0, 1\}^N} \alpha_{y, t} \ket{y} \ . \]
is a state on the circuit register.  Note, $\ket{\psi_t}$ is not in general a unit vector; we have $\sum_t ||\ket{\psi}_t ||^2 = 1$.
We define vectors $\ket{\xi_0}, \ldots, \ket{\xi_T}$ by the relation
\begin{equation}\label{eq:xi_def}   H \ket{\psi} \ = \ \sum_{t = 0}^T  \ket{\xi_t} \ot \ket{t} \ ,   \end{equation}
noting that the $\ket{\xi_t}$ will also not in general be unit vectors (nor will $H \ket{\psi}$ be).

Now for $t \in [0, T]$ define
\[   \ket{\phi_t} \ := \   U^{\dag}_1 U^{\dag}_2 \ldots U^{\dag}_t  \ket{\psi_t} \ ,  \]
so that
\[   \ket{\psi_t} \ = \   U_t U_{t - 1} \ldots U_1  \ket{\phi_t} \ .    \]
(Here, $\ket{\phi_0} = \ket{\psi_0}$ and $\ket{\phi_1} = U_1^{\dag}\ket{\psi_1}$.)  Note that $|| \ket{\phi_t}|| = ||\ket{\psi_t}||$ and
\[
\sum_{t= 0}^T || \ket{\phi_t} ||^2 \ =  \ \sum_t || \ket{\psi_t} ||^2  \ = \ 1 \ ,
\]
as the $U_t$ are unitary.  

With these definitions, we first examine the action of $H_{\prop}$ on $\ket{\psi}$.  
For $t \in [T]$, the operator $H_{\prop, t}$ acts as
\begin{equation}\label{eq:Hpropt_action}
H_{\prop, t} \ket{\psi} \ = \  \frac{1}{2} \left(  \ket{\psi_t}  \ot \ket{t}   +  \ket{\psi_{t-1}}  \ot \ket{t - 1} \  -  \   U_t \ket{\psi_{t-1}}  \ot \ket{t}   -   U^{\dag}_t \ket{\psi_{t}} \ot \ket{t - 1}  \right) \ ,
\end{equation}
which we can express as
\begin{equation}\label{eq:Hpropt_action2}
H_{\prop, t} \ket{\psi} \ = \  \frac{1}{2} \bigg( U_t \ldots U_1 \left( \ket{\phi_t} - \ket{\phi_{t - 1}}  \right)  \ot \ket{t} \ + \  U_{t-1} \ldots U_1 \left(   \ket{\phi_{t - 1}}   -   \ket{\phi_t}    \right)  \ot \ket{t - 1}    \bigg) \ .
\end{equation}
Next, observe that $H_{\inn}$ only outputs vectors in the span of the basis vectors with clock-register equal to 0, i.e., in the span of $\{\ket{y} \ot \ket{0}\}_y$, and that $H_{\out}$ outputs vectors in the span of $\{\ket{y} \ot \ket{T}\}_y$.  Thus for $t \in [ T - 1]$, the only contribution of terms of form $\ket{y}\ot \ket{t}$ to the output of $H \ket{\psi}$ comes from $H_{\prop, t}$ and $H_{\prop, t + 1}$, and we compute that for such $t$, 
\begin{equation}\label{eq:xi_interm}
\ket{\xi_t} \ = \  U_t \ldots U_1 \left(  \ket{\phi_t} -  .5 \ket{\phi_{t -1}} - .5 \ket{\phi_{t + 1}}     \right) \ . 
\end{equation}       
In particular, as $(U_t \ldots U_1)$ is unitary we have
\begin{equation}\label{eq:xi_norm_interm}
|| \ket{\xi_t}\ot \ket{t} || \ = \ || \ket{\xi_t}  || \ = \  ||  \ket{\phi_t} -  .5 \ket{\phi_{t -1}} - .5 \ket{\phi_{t + 1}}   || \ . 
\end{equation} 
Next we examine the terms in $H \ket{\psi}$ on clock-value $t = 0$, which come solely from the actions of $H_{\inn}$ and $H_{\prop, 1}$.  Define the orthogonal projector $\Pi_{\inn}$ acting on the $N$-qubit circuit register by 
\[  \Pi_{\inn}  \ := \ \sum_{i = m + 1}^N \ketbra{1}{1}_i \ ; \] 
the operator 
\[ \Pi'_{\inn} \ :=\  (I_N - \Pi_{\inn}) \]
 is also an orthogonal projection.
 We have
\begin{align*}\label{eq:xi_0}
\ket{\xi_0} \ &= \  .5 \ket{\psi_{0}} - .5 U^{\dag}_1 \ket{\psi_{1}}  +  .5\Pi_{\inn}\ket{\psi_0}  \\ 
&= \ .5 \ket{\phi_{0}} - .5  \ket{\phi_{1}}  +   .5\Pi_{\inn}\ket{\phi_0} \\
&= \ \ket{\phi_0} - .5(  \ket{\phi_0} -  \Pi_{\inn}\ket{\phi_0} )   - .5 \ket{\phi_1}   \\
&= \   \ket{\phi_0} - .5 \Pi'_{\inn}\ket{\phi_0} - .5 \ket{\phi_1} \ . 
\end{align*} 
Thus,
\begin{equation}\label{eq:xi_norm_0}
|| \ket{\xi_0}\ot \ket{0} || \ = \ || \ket{\xi_0}  || \ = \  || \ket{\phi_0} - .5 \Pi'_{\inn}\ket{\phi_0} - .5 \ket{\phi_1}    || \ . 
\end{equation} 
Finally we examine the terms in $H \ket{\psi}$ on clock-value $T$, which come solely from the actions of $H_{\out}$ and $H_{\prop, T}$.  Define the projector $\Pi_{\out}$ acting on the circuit register by $\Pi_{\out} :=  \ketbra{0}{0}_1$; define the operators 
\[ \Phi_{\out} \ := \ U^{\dag}_1 \ldots U^{\dag}_T \Pi_{\out} U_T \ldots U_1 \]
 and 
 \[ \Phi'_{\out} \ := \ I_N - \Phi_{\out}\]
 acting on $N$ qubits.  Then we have
\begin{align}\label{eq:xi_T}
\ket{\xi_T} \ &= \  .5 \ket{\psi_{T}} - .5 U_T \ket{\psi_{T - 1}}  +  .5\Pi_{\out}\ket{\psi_T} \\ 
&= \   \ket{\psi_{T}}  -  .5 U_T \ket{\psi_{T - 1}}  - .5\ket{\psi_T} + \Pi_{\out}\ket{\psi_T}  \\
&= \  U_T \ldots U_1 \bigg( \ket{\phi_{T}}  -  .5  \ket{\phi_{T - 1}} - .5 \ket{\phi_T} \bigg)  + .5\Pi_{\out}\ket{\psi_T} \\
&= \ U_T \ldots U_1 \bigg( \ket{\phi_{T}}  -  .5  \ket{\phi_{T - 1}} - .5 \ket{\phi_T} \bigg)  +  .5 \Pi_{\out} U_T \ldots U_1 \ket{\phi_T} \\
&= \ U_T \ldots U_1 \bigg( \ket{\phi_{T}}  -  .5  \ket{\phi_{T - 1}} - .5 \ket{\phi_T}   +  .5 U^\dag_1 \ldots U^\dag_T \Pi_{\out} U_T \ldots U_1 \ket{\phi_T}  \bigg) \\
&= \ U_T \ldots U_1 \bigg( \ket{\phi_{T}}  -  .5  \ket{\phi_{T - 1}} - .5\Phi'_{\out} \ket{\phi_T}  \bigg) \ .
\end{align} 
Thus,
\begin{equation}\label{eq:xi_norm_T}
|| \ket{\xi_T}\ot \ket{T} || \ = \ || \ket{\xi_T}  || \ = \  ||   \ket{\phi_{T}} -  .5 \Phi'_{\out} \ket{\phi_T} - .5 \ket{\phi_{T - 1}}    || \ . 
\end{equation}

\subsection{Analyzing Low-Energy States of $H$}\label{ss:small_delta}

%Recall that Hermitian operators such as $H$ are diagonalizable with purely real eigenvalues.

Here we argue that if $\ket{\psi}$ is any state for which the energy $\bra{\psi}H\ket{\psi}$ is sufficiently small, then our operation $R = R_V$, when applied to $\ket{\psi}\bra{\psi}$, produces a state accepted with high probability by $V$.  No corresponding result is needed or established in Kitaev's original work~\cite{ksv}, which analyzed the minimal eigenvalue of $H$, but not the structure of ground states themselves.  Subsequent works, including~\cite{kkr, OT}, have provided more detailed information about the low-energy subspaces of several local-Hamiltonian reductions (although these works do not immediately yield the conclusions we seek).  We will make crucial use of results from~\cite{kkr, OT} in Section~\ref{sec:2local}.

We first describe the idea of our analysis.  Suppose $\ket{\psi}$ is any unit vector for which $|| H \ket{\psi}||$ is ``very small.''  We have
\[  || H \ket{\psi}||^2 = \sum_{t} || \ket{\xi_t} \ot \ket{t}||^2 \ = \ \sum_{t} || \ket{\xi_t}||^2 \ ,\]
so each $\ket{\xi_t}$ is a very small vector.  If $t \in [T - 1]$, then Eq.~(\ref{eq:xi_norm_interm}) tells us that $\ket{\phi_t} = U^\dag_1 \ldots U^\dag_t \ket{\psi_t}$ is nearly equal to the average of $\ket{\phi_{t - 1}}$ and $\ket{\phi_{t +1}}$.  For $t = 0$, Eq.~(\ref{eq:xi_norm_0}) tells us that $\ket{\phi_0}$ is nearly the average of $\Pi'_{\inn} \ket{\phi_0}$ and $\ket{\phi_1}$; and for $t = T$, Eq.~(\ref{eq:xi_norm_T}) tells us that $\ket{\phi_T}$ is nearly the average of $\Phi'_{\out} \ket{\phi_T}$ and $\ket{\phi_{T - 1}}$. 
Thus, the sequence
\begin{equation}\label{eq:ap}  \Pi'_{\inn} \ket{\phi_0}, \ \ket{\phi_0}, \ \ket{\phi_1}, \ \ldots,  \ \ket{\phi_T}, \ \Phi'_{\out} \ket{\phi_T}   \end{equation}
is very nearly an \emph{arithmetic progression} within the $N$-qubit Hilbert space of the circuit register.  

Now there are essentially two possibilities.  In the first, ``good'' case, the terms in this near-arithmetic progression are all nearly equal to $\ket{\phi_0}$, so that each $\ket{\psi_t}$ is nearly equal to $U_t \ldots U_1 \ket{\psi_0}$.  Inspecting the definitions of $\Pi'_{\inn}$ and $\Phi'_{\out}$, we then find that $\Pi_{\inn}\ket{\psi_0}$ and $\Pi_{\out}\ket{\psi_T}$ are both $\approx 0$.  This implies that $\ket{\psi_0}$, after normalization, is close to a legal input state (i.e., with the ancilla register in the all-zero state) causing the verifier $V$ to accept with high probability.  Moreover, we may obtain a near-perfect copy of $\ket{\psi_0}$ by the operation $R_V$ defined earlier. 

In the second, ``bad'' case, our near-arithmetic progression has some nontrivial step size, and its terms are close to being $(T + 3)$ equally-spaced points along some line in Hilbert space.  Now in such an arrangement, it is an intuitive fact that the furthest of these points from the origin will be either the first or the last point along the line.  Thus, either $\Pi'_{\inn} \ket{\phi_0}$ or $ \Phi'_{\out} \ket{\phi_T} $ will have the largest norm from among the vectors in our sequence.  However, one easily verifies that $\Pi'_{\inn}$ and $\Phi'_{\out}$ each have operator norm at most 1, so that $||\Pi'_{\inn} \ket{\phi_0} || \leq ||\ket{\phi_0}||$ and $|| \Phi'_{\out} \ket{\phi_T} || \leq || \ket{\phi_T} ||$.  So in fact the bad case cannot occur.

%HERE IS WHERE \eps was introduced.  Changing to \delta

With this informal sketch in mind, we begin.  Fix any $\delta > 0$ satisfying 
\begin{equation*}
\label{eq:eps_small} \delta \ < \   \frac{1}{8^8 (T + 3)^{18}} \ .      
\end{equation*}
As discussed in Section~\ref{ss:log_reduction}, we will assume that there is some unit vector $\ket{\psi} = \sum_{t = 0}^T \ket{\psi_t}\ot\ket{t}$ such that
\[      \bra{\psi}H\ket{\psi}  \  \leq \   .02 \delta/T  \ ,  \]
and will show that $\bE[V(\xi)] \geq 1 - \delta^{\Omega(1)}$, where $\xi := R(\ketbra{\psi}{\psi})$.   First, we claim that the vector $H\ket{\psi}$ has small norm.  To see this, first use the spectral theorem to write
\[  H  \ = \ \sum_{\ell \in [2^{N'}]} \lambda_{\ell} \ketbra{\ell}{\ell} \ ,  \]
where $\{ \ket{\ell} \}$ is an orthonormal eigenbasis for $H$ and $\{\lambda_{\ell} = \lambda_{\ell}(H)\}$ are the corresponding eigenvalues.  We have $0 \leq \lambda_1 \leq \ldots \leq \lambda_{2^{N'}}$.  Write $\ket{\psi} = \sum_{\ell \in [2^{N'}]} \beta_{\ell} \ket{\ell}$, with $\beta_{\ell} \in \bC$ and $\sum_{\ell \in [2^{N'}]} |\beta_{\ell}|^2 = 1$.
We have the expressions
\[ H \ket{\psi} \ = \  \sum_{\ell \in [2^{N'}]} \beta_{\ell}  \lambda_\ell \ket{\ell}    \ ,  \quad{} \quad{}   || H \ket{\psi} ||^2  \ = \   \sum_{\ell \in [2^{N'}]} |\beta_{\ell}|^2  \lambda_\ell^2       \ ,   \quad{}\quad{}  \bra{\psi}  H \ket{\psi} \ = \  \sum_{\ell \in [2^{N'}]} |\beta_{\ell}|^2  \lambda_\ell   \ .   \]
Thus $  || H \ket{\psi} ||^2 \leq  \lambda_{2^{N'}} \cdot \bra{\psi}  H \ket{\psi}  =  ||H||\cdot \bra{\psi}  H \ket{\psi}$.  We have the crude operator-norm bound $||H|| \leq 10 T$, which follows by summing bounds on the norms of each term of $H$.
Thus,
\begin{equation}\label{eq:low_penalty}
|| H \ket{\psi}||^2 \ = \   \sum_{t = 0}^T || \ket{\xi_t} ||^2  \ \leq \ \delta \ ,
\end{equation}
where we again define $\{\ket{\xi_t}\}_t$ by the relation $H \ket{\psi} = \sum_t \ket{\xi_t} \ot \ket{t}$.  

%Thus, we are assuming that $H$ has minimal eigenvalue at most $\sqrt{\delta}$, and analyzing any state $\ket{\psi}$ witnessing this fact; $\ket{\psi}$ need not be a ground state for the analysis.\footnote{In the completeness analysis of Kitaev's reduction~\cite{kkr}, which we will not discuss here, one verifies that if some proof state causes $V(\cdot)$ to accept with probability $1 - \delta$, then there is a unit vector $\ket{\psi}$ for which $|| H\ket{\psi}|| \leq \Theta(\sqrt{\delta}/T)$.}  

For each $t \in \{0, 1, \ldots, T\}$, let $\delta_t := ||\ket{\xi_t}||^2$.  Define 
\begin{equation*}\ket{\Delta_0} \ :=  \ \ket{\phi_0} - \Pi'_{\inn}\ket{\phi_0}
\end{equation*}
as the difference between the first two terms in the sequence from Eq.~(\ref{eq:ap}).  For $t \in [T]$, define 
\[\ket{\Delta_t} \ := \ \ket{\phi_t} - \ket{\phi_{t - 1}}  \ , \]
and define
\[\ket{\Delta_{T + 1}} \ := \ \Phi'_{\out} \ket{\phi_T} - \ket{\phi_{T}}  \ .  \]
By Eq.~(\ref{eq:xi_norm_0}), we have
\begin{align}\label{eq:appd_norm_0}
\delta_0 \ &= \ ||\ket{\xi_0}||^2 \nonumber \\  
&= \  || .5(\ket{\phi_0} - \Pi'_{\inn}\ket{\phi_0}) - .5(\ket{\phi_1} - \ket{\phi_0}) ||^2 \nonumber \\
&= \ .25 ||  \ket{\Delta_0} -\ket{\Delta_1} ||^2 \ .
\end{align}
Similarly, by Eq.~(\ref{eq:xi_norm_interm}), for $t \in [T - 1]$ we have
\begin{align}\label{eq:appd_norm_interm}
\delta_t \ &= \ ||\ket{\xi_t}||^2 \nonumber \\  
&= \  || .5(\ket{\phi_t} - \ket{\phi_{t - 1}}) - .5(\ket{\phi_{t+1}} - \ket{\phi_{t}}) ||^2 \nonumber \\
&= \ .25  ||  \ket{\Delta_t} -\ket{\Delta_{t + 1}} ||^2 \ .
\end{align}
Finally, by Eq.~(\ref{eq:xi_norm_T}) we have
\begin{align}\label{eq:appd_norm_T}
\delta_T \ &= \ ||\ket{\xi_T}||^2   \nonumber  \\  
&= \  || .5(\ket{\phi_T} - \ket{\phi_{T - 1}}) - .5(\Phi'_{\out}\ket{\phi_{T}} - \ket{\phi_{T}}) ||^2   \nonumber  \\
&= \ .25  ||  \ket{\Delta_T} -\ket{\Delta_{T + 1}} ||^2 \ .
\end{align}
Combining our work, we find that for each $t \in \{0, 1, \ldots, T\}$ we have
\begin{equation}\label{eq:appd_norm_all}
  || \ket{\Delta_{T + 1}}  - \ket{\Delta_T} || = 2 \sqrt{\delta_t} \ .  
\end{equation}
At this point, for notational convenience we define
\[\ket{\phi_{-1}} \ := \ \Pi'_{\inn}\ket{\phi_0} \ , \quad{}\quad{}  \ket{\phi_{T + 1}} \ := \ \Phi'_{\out}\ket{\phi_T} \ .  \]
From the definitions of $\Pi'_{\inn}, \Phi'_{\out}$ one can verify that their operator norms are each at most 1, so that
\begin{equation*}\label{eq:norms_low_0} ||\ket{\phi_{-1}}|| \ \leq  \ || \ket{\phi_{0}}|| \ \leq \ || \ket{\psi}|| \ = \ 1
\end{equation*}
and 
\begin{equation*}\label{eq:norms_low_T} ||\ket{\phi_{T + 1}}|| \ \leq  \ ||\ket{\phi_{T}}|| \ \leq \ ||\ket{\psi}|| \ = \ 1 \ .
\end{equation*}

By our definitions, for each $t \in [T + 1]$ we have
\begin{align}
\ket{\phi_t} \ &= \ \ket{\phi_{-1}} + \sum_{t' = 0}^t \ket{\Delta_{t'}}  \nonumber  \\
&= \ \ket{\phi_{-1}} + \sum_{t' = 0}^t \left( \ket{\Delta_0}  + \sum_{t'' = 0}^{t' - 1}  (\ket{\Delta_{t'' + 1}} - \ket{\Delta_{t'' }})   \right)  \nonumber \\
&= \ \ket{\phi_{-1}}  + (t + 1)\ket{\Delta_0} + \sum_{s = 0}^{t-1}   (t - s) \cdot (\ket{\Delta_{s+1}} - \ket{\Delta_{s }}) \ .   \nonumber 
\end{align}
Using the triangle inequality and Eq.~(\ref{eq:appd_norm_all}), we find that for $t \in [T + 1]$,
\begin{align}\label{eq:good_approx}
||\ket{\phi_t} -  (  \ket{\phi_{-1}}  + (t + 1)\ket{\Delta_0}    ) || \ &= \ \left| \left| \sum_{s = 0}^{t-1}   (t - s ) \cdot (\ket{\Delta_s} - \ket{\Delta_{s - 1}})  \right| \right|  \nonumber \\
&\leq  2 \cdot \sum_{s = 0}^{t-1} (t - s )  \sqrt{\delta_s} \ .
\end{align}
In particular, this implies that
\begin{equation*}\label{eq:small}
||\ket{\phi_T}||  \  \leq  \  ||  \ket{\phi_{-1}}  + (T + 1)\ket{\Delta_0}     ||  +    2 \cdot \sum_{s = 0}^{T-1} (T - s )  \sqrt{\delta_s}
\end{equation*}
and
\begin{equation*}\label{eq:big}
||\ket{\phi_{T + 1}}||  \  \geq  \    ||  \ket{\phi_{-1}}  + (T + 2)\ket{\Delta_0}     ||  -    2 \cdot \sum_{s = 0}^{T} (T - s  + 1)  \sqrt{\delta_s} \ .
\end{equation*}
To understand these bounds, consider the linear function $\ell: \bR^{T + 1} \rightarrow \bR$ given by
\[ \ell(x_0, \ldots, x_T) \ := \ \sum_{s = 0}^T  (T - s + 1)x_s  \ . \]
It is a standard fact that the maximum value of $\ell$ in the disk $B_{0, r} \:= \{\overline{x} \in \bR^{T + 1}: \sum_s x_s^2 \leq r^2\}$ is attained at the point
\[  \overline{x}^* \ = \ (x^*_0, \ldots, x^*_t) \ := \ r \cdot \frac{\nabla \ell}{||\nabla \ell ||} \ ,  \] 
where the gradient function $\nabla \ell$ is defined as
\[ \nabla \ell \ := \ \left(\frac{\partial \ell}{\partial x_0} , \ldots, \frac{\partial \ell}{\partial x_T} \right) \ . \]
In our case, the gradient is the constant vector $\nabla \ell = (T + 1, T, T - 1, \ldots, 1)$.

Now recall that $\sum_{s = 0}^T \delta_s \leq \delta$.  It follows that
\begin{align}\label{eq:sum_bd}
\sum_{s = 0}^T (T - s + 1)\sqrt{\delta_s} \ &= \ \ell( \sqrt{\delta_0}, \sqrt{\delta_1}, \ldots , \sqrt{\delta_T}  ) \nonumber \\
&\leq \ \ell \left(  \sqrt{\delta} \cdot \frac{\nabla \ell}{|| \nabla \ell||}   \right) \nonumber \\
&= \ \frac{\sqrt{\delta}}{|| \nabla \ell ||} \cdot  \ell(\nabla \ell)  \nonumber \\
&= \   \frac{\sqrt{\delta}}{\sqrt{\sum_{s = 0}^T (T - s + 1)^2}} \cdot \left( \sum_{s = 0}^T (T - s + 1)^2 \right) \nonumber   \\ 
&= \ \sqrt{\delta \cdot \sum_{s = 0}^T (T - s + 1)^2   } \nonumber  \\
&= \  \sqrt{\frac{\delta (T + 2)(T + 3)(2T + 3)}{6} }        \nonumber  \\
&< \ \sqrt{\frac{\delta (T + 3)^3}{3}} \ .
\end{align}
Combining this with Eq.~(\ref{eq:good_approx}), we find that for $t \in [T + 1]$,
\begin{equation}\label{eq:useful_approx}
||\ket{\phi_t} -  (  \ket{\phi_{-1}}  + (t + 1)\ket{\Delta_0}    ) || \ \leq \  2 \sqrt{\frac{\delta (T + 3)^3}{3}} \ . 
\end{equation}
In particular, we have
\begin{equation}\label{eq:small2}
||\ket{\phi_T}||  \  \leq  \  ||  \ket{\phi_{-1}}  + (T + 1)\ket{\Delta_0}     ||     + 2\sqrt{\frac{\delta (T + 3)^3}{3}}
\end{equation}
and
\begin{equation}\label{eq:big2}
||\ket{\phi_{T + 1}}||  \  \geq  \    ||  \ket{\phi_{-1}}  + (T + 2)\ket{\Delta_0}     ||  -    2\sqrt{\frac{\delta (T + 3)^3}{3}} \ .
\end{equation}
Next, we claim that the quantity 
\[Q_{\ext} \ := \  || \ket{\phi_{-1}} ||^2 + ||\ket{\phi_{-1}} + (T + 2)\ket{\Delta_0}  ||^2\] is slightly larger than 
\[Q_{\intt} \ := \  || \ket{\phi_{0}} ||^2 + ||\ket{\phi_{-1}} + (T + 1)\ket{\Delta_0}||^2 \ , \]
if $\ket{\Delta_0}$ is of noticeable size.  This will be a useful way to quantify our intuition that the largest point in an arithmetic progression should be one of the endpoints.  Recall that $\ket{\phi_0} = \ket{\phi_{-1}} + \ket{\Delta_0}$.  We have
\[  \! \! \!\! \! \!\! \! \!\! \! \!\! \!\! \! \!\! \! \!\! \! \!\! \! \!\! \! \!\! \! \!\! \! \!\! \! \!\! \! \!\! \! \!\! \! \!\! \! \!\! \! \!\! \! \!\! \! \!\! \! \!\! \! \!\! \! \!\! \! \!\! \! \! \! \! \!\! \! \!\! \! \!\! \! \! Q_{\intt}  = \    \left( \braket{\phi_{-1}}{\phi_{-1}} + \braket{\Delta_0}{\Delta_0} +  \braket{\phi_{-1}}{\Delta_0} + \overline{\braket{\phi_{-1}}{\Delta_0}} \right)  \] 
 \[ + \   \left(  \braket{\phi_{-1}}{\phi_{-1}} +  (T+ 1)^2  \braket{\Delta_0}{\Delta_0}  
+  (T + 1)\left( \braket{\phi_{-1}}{\Delta_0} + \overline{\braket{\phi_{-1}}{\Delta_0}}  \right)   \right)    \]
\[   \!\! \! \!\! \! \!\! \! \!\! \! \!\! \! \!\! \! \!\! \! \! = \   2 ||\ket{\phi_{-1}}||^2  +   (T^2 + 2T + 2) ||\ket{\Delta_0} ||^2   + (T + 2) \left( \braket{\phi_{-1}}{\Delta_0} + \overline{\braket{\phi_{-1}}{\Delta_0}}  \right)  \ . \]
By a similar calculation,
\[ Q_{\ext}\ = \    2 ||\ket{\phi_{-1}}||^2  +   (T^2 + 2T + 4) ||\ket{\Delta_0} ||^2       + (T + 2) \left( \braket{\phi_{-1}}{\Delta_0} + \overline{\braket{\phi_{-1}}{\Delta_0}}  \right)   \ ,      \]
so that
\begin{equation}\label{eq:q_diff} Q_{\ext} - Q_{\intt} =  2 \cdot ||\ket{\Delta_0} ||^2 \ . \end{equation}
We next define
\[  Q'_{\ext} \ :=  \  || \ket{\phi_{-1}} ||^2  +   ||\ket{\phi_{T + 1}} ||^2  \]
and
\[  Q'_{\intt} \ :=  \  || \ket{\phi_{0}} ||^2  +   ||\ket{\phi_{T }} ||^2 \ . \]
Using Eq.~(\ref{eq:big2}), we have
\begin{align}  \label{eq:q_ext_diff}    
Q_{\ext} - Q'_{\ext} \ &= \   ||\ket{\phi_{-1}} + (T + 2)\ket{\Delta_0}  ||^2   -   ||\ket{\phi_{T + 1}} ||^2  \nonumber     \\
&\leq \    \left(  ||\ket{\phi_{T + 1}} || +  2\sqrt{\frac{\delta (T + 3)^3}{3}} \right)^2        -   ||\ket{\phi_{T + 1}} ||^2 \nonumber     \\
&\leq \  \frac{4\delta (T + 3)^3}{3}   +  4\sqrt{\frac{\delta (T + 3)^3}{3}}  \nonumber    \\
&\leq \ 8\sqrt{\frac{\delta (T + 3)^3}{3}}
\end{align}
where in the last two steps we used the fact that $|| \ket{\phi_{T + 1}}||\leq 1$ and our smallness assumption on $\delta$.

Similarly, using Eq.~(\ref{eq:small2}),
\begin{align}  \label{eq:q_int_diff}   
Q'_{\intt} - Q_{\intt} \ &= \   ||\ket{\phi_{T }} ||^2  -   ||\ket{\phi_{-1}} + (T + 1)\ket{\Delta_0}  ||^2   \nonumber   \\
&\leq \   ||\ket{\phi_{T }} ||^2  -    \left(  ||\ket{\phi_{T }} || -  2\sqrt{\frac{\delta (T + 3)^3}{3}} \right)^2       \nonumber   \\
&\leq \  4\sqrt{\frac{\delta (T + 3)^3}{3}} \ ,
\end{align}
where in the last step we used that $|| \ket{\phi}_T|| \leq 1$.

Combining Eqs.~(\ref{eq:q_diff}),~(\ref{eq:q_ext_diff}), and~(\ref{eq:q_int_diff}), we compute that
\begin{align}\label{eq:q_pos}
Q'_{\ext} - Q'_{\intt} \ &= \ (Q'_{\ext} - Q_{\ext}) + (Q_{\ext} - Q_{\intt}) + (Q_{\intt} - Q'_{\intt}) \nonumber \\
&\geq \   -8\sqrt{\frac{\delta (T + 3)^3}{3}}  +  2 || \Delta_0||^2 -4\sqrt{\frac{\delta (T + 3)^3}{3}}   \nonumber \\
&= \ 2 \left( || \Delta_0||^2    -  \sqrt{12\delta (T + 3)^3}   \right) \ .
\end{align}
On the other hand, recall that $|| \ket{\phi_{-1}}|| \leq || \ket{\phi_0} ||$ and $||\ket{\phi_{T + 1}} || \leq ||\ket{\phi_T} ||$.  Thus, $Q'_{\ext} - Q'_{\intt} \leq 0$.  With Eq.~(\ref{eq:q_pos}), this implies that
\begin{equation}\label{eq:delta_small}
|| \ket{\Delta_0} || \ \leq   \left(  12 \delta \right)^{1/4} (T +3)^{3/4} \ .
\end{equation}
Informally, this tells us that $\ket{\Delta_0}$ is small, so that we are not in the ``bad case'' described earlier.

Now, Eqs.~(\ref{eq:useful_approx}) and~(\ref{eq:delta_small}) combine to show us that $\ket{\phi_1}, \ket{\phi_2}, \ldots, \ket{\phi_{T+1}}$ are all close to $\ket{\phi_{-1}}$: for $t \in [T + 1]$,
\begin{align}\label{eq:all_vectors_close}
||  \ket{\phi_t} -  \ket{\phi_{-1}} || \ &\leq \  \bigg|\bigg|\ket{\phi_t} -  (  \ket{\phi_{-1}}  + (t + 1)\ket{\Delta_0}    ) \bigg|\bigg|   +\bigg|\bigg|(  \ket{\phi_{-1}}  + (t + 1)\ket{\Delta_0}    )  - \ket{\phi_{-1}}  \bigg|\bigg|   \nonumber
 \\
&\leq \   2 \sqrt{\frac{\delta (T + 3)^3}{3}}  \  + \    (t + 1)|| \ket{\Delta_0} ||   \nonumber  \\
&\leq \  2 \sqrt{\frac{\delta (T + 3)^3}{3}}  \ + \   \left(  12 \delta \right)^{1/4}  (T +3)^{7/4} \nonumber \\
&\leq \  4 \delta^{1/4}   (T +3)^{7/4}  \ ,
 \end{align}
using our smallness assumption on $\delta$ for the last step.  This also implies that
\begin{equation}\label{eq:pi_out_close}   
 ||\ket{\phi_{T + 1}} -  \ket{\phi_T} || \ \leq \  8 \delta^{1/4}   (T +3)^{7/4}  \ .
\end{equation} 
Also, using Eq.~(\ref{eq:delta_small}) again, we have
\begin{equation}\label{eq:pi_in_close}   
 || \ket{\phi_0} - \ket{\phi_{-1}}  || \ = \  || \ket{\Delta_0}|| \ \leq \ 2 \delta^{1/4}  (T +3)^{3/4} \ .
\end{equation}
Next we argue that for each $t \in [0, T]$, the vector $\ket{\phi_t}$ has norm close to $T^{-1/2}$.  Recall that $|| \ket{\phi_t} || = || \ket{\psi_t}||$ for $t \in [0, T]$, and that $\sum_{t = 0}^T|| \ket{\psi_t}||^2 = 1$.  Thus there is at least one value $t^* \in [0, T]$ for which $|| \ket{\phi_{t^*}}|| \geq (T + 1)^{-1/2}$.  Then, using Eq.~(\ref{eq:all_vectors_close}), for any $t \in [0, T + 1]$ we have
\begin{align}\label{eq:vecs_big}  || \ket{\phi_t} || \ &\geq \  ||\ket{\phi_{t^*}}|| -  ||\ket{\phi_{t^*}} - \ket{\phi_{-1}}|| - ||\ket{\phi_{-1}} - \ket{\phi_t}|| \nonumber \\    &\geq \ (T + 1)^{-1/2} -  8 \delta^{1/4}   (T +3)^{7/4} \nonumber \\ 
&\geq \ (1 - \delta^{1/8}) (T + 1)^{-1/2}  \ ,  \end{align}
where in the last step we again used our smallness assumption on $\delta$.  Similarly, using Eqs.~(\ref{eq:all_vectors_close}) and~(\ref{eq:pi_in_close}) we obtain
\begin{equation*}\label{eq:bignorm_in}
||\ket{\phi_{-1}} || \ \geq \ (1 - \delta^{1/8}) (T + 1)^{-1/2} \ .
\end{equation*}
On the other hand, there is also a $t_* \in [0, T]$ for which $|| \ket{\phi_{t_*}}|| \leq (T + 1)^{-1/2}$.  By modifying the above arguments only slightly, we find that for each $t \in [-1, T]$,
\begin{equation}\label{eq:smallnorm_all}
||\ket{\phi_{t}} || \ \leq \ (1 + \delta^{1/8}) (T + 1)^{-1/2} \ .
\end{equation}

For $t \in [-1, T + 1]$, define the normalized vector
\[ \widehat{\ket{\phi_t}} \ :=  \    \frac{\ket{\phi_t}}{||  \ket{\phi_t}||} \ . \]
Also, for $t \in [0, T]$, similarly define
\[ \widehat{\ket{\psi_t}} \ :=  \    \frac{\ket{\psi_t}}{||  \ket{\psi_t}||} \ . \]
Next, we define $\gamma_t$ by the relation
\[   \widehat{\ket{\phi_t}} \ =   (1 + \gamma_t) \sqrt{T + 1}\cdot \ket{\phi_t}     \ ;  \]
by Eqs.~(\ref{eq:vecs_big})-(\ref{eq:smallnorm_all}), we have $\gamma_t \in [- \delta^{1/8}, + \delta^{1/8}]$.
Then for $t \in [0, T + 1]$ we have
\begin{align}\label{eq:normal_dists}   ||  \widehat{\ket{\phi_t}} - \widehat{\ket{\phi_{-1}}}  ||  \ &\leq \   \sqrt{T + 1} \cdot  \left(  ||  \ket{\phi_t} - \ket{\phi_{-1}} ||  +  |\gamma_t | \cdot || \ket{\phi_t} || + |\gamma_{-1}| \cdot ||\ket{\phi_{-1}} ||  \right)     \nonumber \\
&\leq \  \sqrt{T + 1}\cdot \left(  4 \delta^{1/4}   (T +3)^{7/4}   + 2 \delta^{1/8}(1 + \delta^{1/8})(T + 1)^{-1/2}  \right) \nonumber \\
&\leq \   4\delta^{1/4}(T + 3)^{9/4}  + 3\delta^{1/8} \nonumber \\
&\leq \  4 \delta^{1/8}  \ .
 \end{align}
This in particular implies
\begin{equation} \label{eq:nice_relation}
4 \delta^{1/8} \ \geq \ || U_T \ldots U_1 ( \widehat{\ket{\phi_T}} - \widehat{\ket{\phi_{-1}}} ) || \ = \ \left| \left|  \widehat{  \ket{\psi_T} } - U_T \ldots U_1 \left( \frac{ \Pi'_{\inn} \ket{\psi_0}}{|| \Pi'_{\inn} \ket{\psi_0}||}   \right) \right|   \right|   \ .
\end{equation}

Next, expanding the definitions of terms in Eq.~(\ref{eq:pi_out_close}), we find that
\begin{align}\label{eq:output}   
8 \delta^{1/4}   (T +3)^{7/4} \ &\geq \  ||\Phi'_{\out}\ket{\phi_T} -  \ket{\phi_T} || \nonumber \\ 
&= \  ||\Phi_{\out}\ket{\phi_T} || \nonumber \\
&= \  || U^\dag_1 \ldots U^\dag_T \Pi_{\out} U_T \ldots U_1 (U^\dag_1 \ldots U^\dag_T \ket{\psi_T})|| \nonumber \\
&= || U^\dag_1 \ldots U^\dag_T \Pi_{\out} \ket{\psi_T}|| \nonumber \\
&= || \Pi_{\out}\ket{\psi_T} || \ .
\end{align} 
Now Eq.~(\ref{eq:vecs_big}), applied with $t := T$, tells us that $|| \ket{\psi_T} || = || \ket{\phi_T} || \geq (1 - \delta^{1/8})(T + 1)^{-1/2}$.  Combining this with Eq.~(\ref{eq:output}), we have
\begin{equation} \label{eq:penalty_small}
 ||\Pi_{\out}\widehat{\ket{\psi_T}} || \ \leq \  2\sqrt{T + 1} \cdot 8 \delta^{1/4}   (T +3)^{7/4} \ \leq \ 16 \delta^{1/4}(T + 3)^{9/4}   \ . 
\end{equation}
$\Pi_{\out}$ is an orthogonal projection and has operator norm 1.  This fact, combined with Eqs.~(\ref{eq:nice_relation}) and~(\ref{eq:penalty_small}), allows us to infer that
\begin{equation*}\label{eq:oh_yeah} 
\left| \left| \Pi_{\out} \left(  U_T \ldots U_1  \left(  \frac{ \Pi'_{\inn} \ket{\psi_0}}{|| \Pi'_{\inn} \ket{\psi_0}||}  \right)  \right)  \right| \right|  \ \leq \  4\delta^{1/8} +  16 \delta^{1/4}(T + 3)^{9/4}  \ \leq \ 8 \delta^{1/8} \ .
\end{equation*}
 
This shows that the quantum state $\widehat{\ket{\phi_{-1}}} = \frac{ \Pi'_{\inn} \ket{\psi_0}}{|| \Pi'_{\inn} \ket{\psi_0}||} $, when set as the initial state of the circuit register of the verifier $V$, causes $V$ to accept with probability $\geq 1 - \delta^{\Omega(1)}$.  Also, $\widehat{\ket{\phi_{-1}}}$ lies in the kernel of the orthogonal projector $\Pi_{\inn} = I_N - \Pi'_{\inn}$, so its final $N - m$ qubits are in the all-zero state.

%  That is, $\widehat{\ket{\phi_{-1}}}$ places nonzero amplitude only on basis states where the ancilla qubits are initialized to 0; it is of form $\widehat{\ket{\phi_{-1}}} = \ket{\nu} \ot \ket{0^{N - m}}$, where $\ket{\nu}$ is a state on the first $m$ qubits.
 
Finally, we claim that being given the state $\ket{\psi}$ allows us to recover a close approximation to $\widehat{\ket{\phi_{-1}}}$ by applying our quantum operation $R = R_V$.  This procedure first measures the clock register.  If $t \in [T]$ is observed, the post-measurement circuit register state is $\widehat{\ket{\psi_t}}$; the transformation
\[   \widehat{\ket{\psi_t}}  \ \longrightarrow \ U^\dag_1 \ldots U^\dag_t\widehat{\ket{\psi_t}} \ = \ \widehat{\ket{\phi_t}} \ ;     \]
is then performed.  (If the value $t = 0$, the post-measurement state on the circuit register is $\widehat{\ket{\psi_0}} = \widehat{\ket{\phi_0}}$ and $R$ applies none of these unitaries.)  Eq.~(\ref{eq:normal_dists}) tells us that the resulting state $\widehat{\ket{\phi_t}}$ on the circuit register is $4\delta^{1/8}$-close to the desired state $\widehat{\ket{\phi_{-1}}}$.
Thus, the reduced state of $\widehat{\ket{\phi_t}}$ on the $m$-qubit proof register (which $R$ outputs) causes $V$ to accept with probability $\geq 1 - \delta^{\Omega(1)}$. 
We have established the variant of Theorem~\ref{thm:variant} which requires $H$ only to be $O(\log T)$-local.

%FFF

 \subsection{Reduction to Locality 5}
 
 Following~\cite{ksv, an}, we now describe a small alteration of the above $O(\log T)$-local reduction that produces a 5-local Hamiltonian.
 
\paragraph{The modified reduction:} The Hilbert space used still consists of an $N$-qubit along with a ``clock register.''  This time, however, the clock register consists of $T$ qubits; informally, its ``intended purpose'' is to store a time-index $t \in [0, T]$ by the unary encoding $\ket{1^t 0^{T - t}}$.  A clock-register basis state of this form is called \emph{valid}; basis states not of this form are said to be \emph{invalid}, and will be penalized by our Hamiltonian.
For $t \in [0, T - 2]$ and bits $a, b, c, a', b', c'$, we let 
\[  \ketbra{a' b' c'}{a b c}_{\cl(t)}    \]
denote the 3-local operator $\ketbra{a' b' c'}{a b c}$ applied to the $t^{th}$, $(t+1)^{st}$, and $(t+2)^{nd}$ clock register qubits.  Similarly, $ \ketbra{a' b'}{a b}_{\cl(t)}$ denotes $ \ketbra{a' b'}{a b}$ applied to the $t^{th}$ and $(t+1)^{st}$ clock qubits.

We modify the Hamiltonian $H = H_V : \mc{B}^{\ot(N + D)} \rightarrow \mc{B}^{\ot(N + D)}$ from our previous work to produce a new Hamiltonian $H' = H'_V$ acting on the new Hilbert space $\mc{B}^{\ot(N + T)}$.  
First, in each tensor term appearing in $H_{\inn}, H_{\out}$, we replace the clock-register projectors $\ketbra{ 0}{0}, \ketbra{ T}{T}$ with $\ketbra{00}{00}_{\cl(0)}, \ketbra{11}{11}_{\cl(T - 1)}$ respectively to get modified operators $H'_{\inn}, H'_{\out}$ acting on our new Hilbert space:
\[   H'_{\inn}  \ := \   \frac{1}{2}\sum_{i = m + 1}^N  \ketbra{1}{1}_i \ot \ketbra{00}{00}_{\cl(0)}    \ , \quad\quad  H'_{\out}  \ := \  \frac{1}{2}\ketbra{0}{0}_1 \ot  \ketbra{11}{11}_{\cl(T - 1)} \ .  \]
Similarly, we define $H'_{\prop} := \sum_{t = 1}^T H'_{\prop, t}$ as follows. In each tensor-product term defining $H_{\prop, t}$, if $t \in [2, T - 1]$ then we replace the clock-register projectors
\[  \ketbra{t}{t} \ , \  \ketbra{t-1}{t-1} \  , \  \ketbra{t}{t-1} \ , \  \ketbra{t - 1}{t}   \]
with, respectively,
\[ \ketbra{110}{110}_{\cl(t - 1)} \ , \   \ketbra{100}{100}_{\cl(t - 1)} \ , \  \ketbra{110}{100}_{\cl(t - 1)} \ , \ \ketbra{100}{110}_{\cl(t - 1)} \ ,    \]
to obtain $H'_{\prop, t}$.
Finally, we introduce a new ``clock term'' $H'_{\cl} :=  \sum_{t =1}^T I_N \ot \ketbra{01}{01}_{\cl(t - 1)}$, penalizing invalid clock-register states.  We let $H' := H'_{\inn} + H'_{\out} + H'_{\prop} + H'_{\cl}$.  The operator norms of the individual 5-local terms of $H'$ are $\Theta(1)$, satisfying the norm requirement in Theorem~\ref{thm:variant}'s statement.

The modified quantum operation $R'$ is defined in close analogy to $R$ from our previous reduction. The only difference is that when $R'$ first measures the clock register (now on $T$ qubits), a measurement outcome $1^t 0^{T - t}$ is interpreted as the time-index $t$, and an outcome not of this form is interpreted (arbitrarily) as seeing the time-index $t = 0$.

%consider the Hilbert spaces $\mc{B}^{N + D}$, $\mc{B}^{N + T}$ acted on by $H$ and $H'$ respectively, and

\paragraph{The analysis:} Following previous works, we make several observations about $H'$.  First, $H'$ is PSD by the same argument as for $H$, and its operator norm still satisfies the crude upper-bound $||H'|| \leq 10 T$ used previously.  Next, define the subspace $S_{\val} \leq \mc{B}^{\ot(N + T)}$ as all vectors which place amplitude 0 on invalid clock-register basis states.  Note that $H'(S_{\val}) \subseteq S_{\val}$, and therefore (as $H'$ is Hermitian) also $H'(S_{\val}^{\perp}) \subseteq S_{\val}^{\perp}$.

 Let $L: S_{\val} \rightarrow \mc{B}^{\ot(N + D)}$ be the linear mapping defined on basis states by
\[ L(\ket{x} \ot \ket{1^t 0^{T - t}} ) \ := \   \ket{x} \ot \ket{t}  \   \quad{}\quad{} \text{for $x \in \{0, 1\}^N$, $t \in [0, T]$.}         \]
Then we observe that for any $\ket{\phi} \in S_{\val}$, we have the relation 
\begin{equation}\label{eq:iso}
H' (\ket{\phi}) \ = \ H(L(\ket{\phi})) \ .
\end{equation}
Moreover, $L$ is surjective; it follows that $\lambda_1(H') \ \leq \ \lambda_1(H)$.  We claim, however, that for any $\ket{\phi}$ in the orthogonal complement $S^{\perp}_{\val}$ (consisting of vectors which place zero amplitude on valid clock-register states), we have $\bra{\phi}H'\ket{\phi} \geq 1$.  To see this, just note that $\bra{\phi}H'_{\cl}\ket{\phi} \geq 1$, and that $\bra{\phi}(H'_{\inn} + H'_{\out} + H'_{\perp})\ket{\phi} \geq 0$ (since each of the three inner summands is PSD).  Thus $S_{\val}$ is spanned by eigenvalues of $H'$ all of which are $\geq1$.
 %We also note that $||H'_{\cl}|| \leq T$.

Following the discussion at the end of Section~\ref{ss:log_reduction}, let us once more assume that $\max_{\xi}\bE[V(\xi)] \geq 1 - \gamma$, where $\gamma = \Theta(\delta/T)$ is sufficiently small that $\lambda_1(H) < .001 \delta / T$, where $\delta$ is as in Eq.~(\ref{eq:low_penalty}).
Let $\ket{\phi} \in \mc{B}^{\ot(N + T)}$ be any unit vector satisfying $\bra{\phi} H'\ket{\phi} < .002\delta/T$ (some such $\ket{\phi}$ must exist, since $\lambda_1(H') \leq \lambda_1(H)$).  Decompose $\ket{\phi} = \alpha\ket{\phi}_{\val} + \beta \ket{\phi}_{\inval}$ into its components in $S_{\val}, S_{\val}^{\perp}$ respectively (where $\ket{\phi}_{\val}, \ket{\phi}_{\inval}$ are normalized).  $H'\ket{\phi}_{\inval}$ is contained in $S_{\val}^{\perp}$ and has inner product at least $1$ with $ \ket{\phi}_{\inval}$, so we must have $|\beta|^2 \leq .002\delta/T$.  Thus, if we define the unit vector $\ket{\phi'} := \frac{\alpha}{|\alpha|}\ket{\phi}_{\val} \in S_{\val}$, we have
\begin{equation}\label{eq:close_enuff}  \left|\left| \ket{\phi}  -    \ket{\phi'}   \right|\right| \ \leq \ O(\sqrt{\delta/T}) \ . \end{equation}
$\ket{\phi'}$ also satisfies $\bra{\phi'} H'\ket{\phi'} \leq   \frac{1}{|\alpha|^2} \cdot \bra{\phi} H'\ket{\phi} < .02 \delta / T$.  Eq.~(\ref{eq:iso}) and our analysis of the Hamiltonian $H$ from previous sections then imply that the state $\xi := R(L(\ketbra{\phi'}{\phi'}))$ satisfies $\bE[V(\xi)] \geq 1 - \delta^{\Omega(1)}$.  Now observe that, by our definition of $R'$, the state $R'(\ketbra{\phi'}{\phi'})$ is identically distributed to $\xi$ (over the randomness in the measurement of the clock register).  Thus $\bE[V(R'(\ketbra{\phi'}{\phi'}))] \geq 1 - \delta^{\Omega(1)}$.  Combining this with Eq.~(\ref{eq:close_enuff}), we conclude that $\bE[V(R'(\ketbra{\phi}{\phi}))] \geq 1 - \delta^{\Omega(1)}$.  This proves Theorem~\ref{thm:variant}.

%\[   H_{\prop, t} \ := \  \frac{1}{2} \left(   I_N \ot \ketbra{110}{110}_{\cl(t - 1)} + I_N \ot \ketbra{100}{100}_{\cl(t - 1)}  -   U_t \ot \ketbra{110}{100}_{\cl(t - 1)}  -  U^{\dag}_{t} \ot \ketbra{100}{110}_{\cl(t - 1)}  \right)     \]
%for $t \in [2, T - 1]$,

\section{Reduction to 2-local Hamiltonians}~\label{sec:2local}  %ZZZ

\vspace{-1.5 em}

\subsection{Goals of the Section, and Proof of Theorem~\ref{thm:new_key}}

In this section, we complete the proof of Theorem~\ref{thm:new_key}.  The following definition will be of central importance.  Informally speaking, it gives a notion of ``witness-preserving reductions'' between two problems in $\mathsf{QMA}$, where the ``witnesses'' here are quantum states (the precise definition given here is specific to the setting of Local Hamiltonian problems).\footnote{We note that most natural $\mathsf{NP}$-hardness reductions are easily seen to have a witness-preserving property: for example, in Karp's reduction mapping a 3-SAT instance $\psi$ to a Hamiltonian Path instance $G$, the two instances are not only equivalent with respect to their underlying decision problems, but any Hamiltonian path for $G$ can also be used to efficiently obtain a satisfying assignment for $\psi$.}

\begin{definition}\label{def:agpr}  Let $k > k' > 1$ be integers.  A \emph{$(k, k')$-approximate ground-space-preserving reduction (AGPR)} is a (classical, deterministic) algorithm $A$ of the following form. $A$ takes as input a tuple $(H, W, \beta)$, where $H$ is a description of a $k$-local Hamiltonian $H = \sum_{i \in [s]} H_i$ acting on some number $n$ of qubits; $W \geq 1$ is an integer; and $\beta \in (0, 1)$ is an accuracy parameter . The $s \geq n$ terms $H_1, \ldots, H_s$ are each expected to have operator norm $||H_i||$ in the range $[W^{-1}, W]$---if not, $A$ may behave arbitrarily.  
$A$ runs in time $\poly(s, W, 1/\beta)$ and outputs a pair $(H', R)$, where: 
\begin{itemize}
\item $H'$ is a $k'$-local Hamiltonian acting on some number $n' \leq \poly(s, W, 1/\beta)$ of qubits.  Each term in the expression for $H'$ has operator norm in the range $[1/W', W']$, for some $W' \leq \poly(s, W, 1/\beta)$; 
\end{itemize}
\begin{itemize}
\item $R$ is a quantum operation involving one or more measurements, that maps a pure $n'$-qubit pure state $\ket{\psi}$ to a pure $n$-qubit state under every possible set of measurement outcomes (the resulting pure state depends on the outcomes).  $R$ is implemented by a quantum circuit of size $\poly(s, W, 1/\beta)$.   %s could be n here, but easier to keep track of this way...
%\item $\delta$ is a value satisfying $0 < \delta < \beta^{\Omega(1)} \cdot \poly(W, n)$.
\end{itemize}
 Letting $\lambda_1, \lambda_1' \in \bR$ denote the minimal eigenvalues of $H, H'$ respectively, the pair $(H', R)$ are required to obey the following property: there is a $\delta \leq  \beta^{\Omega(1)} \cdot \poly(W, s) $ such that, if $\ket{\psi} \in \mathcal{B}^{\ot n'}$ is any pure state such that
 \[   \bra{\psi}H'\ket{\psi} \ < \ \lambda'_1 + \beta \ ,     \]
then the state $\ket{\phi}$ outputted by $R(\ket{\psi})$ satisfies
 \[   \bra{\phi}H\ket{\phi} \ < \ \lambda_1 +  \delta      \]
with probability at least $1 -  \delta$ over the randomness in $R$.
\end{definition}

We will prove:

\begin{theorem}\label{thm:app_gs} For each of $k \in \{5, 4, 3\}$, there exists a $(k, k - 1)$-AGPR.
\end{theorem}

In fact, in the reductions we construct are able to take $\delta \leq O(\beta)$ in Definition~\ref{def:agpr}, although this is not crucial to our work.
We defer the proof of Theorem~\ref{thm:app_gs} to subsequent sections.  AGPRs also compose nicely, as we prove next:

\begin{lemma}\label{lem:compose} Let $k > k' > k'' > 1$ be integers.  Suppose there exists a $(k, k')$-AGPR, call it $A$, and a $(k', k'')$-AGPR $A'$.  Then there also exists a $(k, k'')$-AGPR.
\end{lemma}

%
%Proof of Lemma~\ref{lem:compose}

%

\begin{proof}
We will compose $A$ and $A'$ with suitably chosen parameters.  At the outset we note that, by the polynomial slack factor allowed in Definition~\ref{def:agpr}, we may assume that $\beta <\frac{1}{D(s + W)^D}$ for some fixed constant $D > 1$.  We will indicate where this assumption is used.

Consider the reduction $A^*$ which takes as input: a $k$-local Hamiltonian $H^{(k)}$ (of $s$ terms, acting on $n$ qubits); a bound $W$ as in the definition; and an $\beta > 0$. $A^*$ works as follows.  First, we choose $\gamma := \beta^{c}$, with $c > 0$ a small value to be determined later.
We apply our $(k, k')$-AGPR $A$ to $(H^{(k)}, W, \gamma)$ to obtain a pair $(H^{(k')}, R)$ each acting on $n' \leq \poly(s, W, 1/\gamma)$ qubits, with $H^{(k')}$ expressed by $s' \leq \poly(s, W, 1/\gamma)$ terms.  By subdividing terms if necessary, we can assume $s' \geq n'$.  Associated with $H^{(k')}$ is a second norm-bounding value $W' \leq \poly(s, W, 1/\gamma)$ as in Definition~\ref{def:agpr}.  Let $\delta\leq  \gamma^{\Omega(1)} \cdot \poly(s, W) $ be as in the guarantee for the pair $(H^{(k)}, H^{(k')})$.

Next, we apply our $(k', k'')$-AGPR $A'$ to $( H^{(k')} , W', \beta)$.
We get a pair $(H^{(k'')}, R')$ each acting on $n'' \leq \poly(s', W', 1/\beta)$ qubits.  Let $\delta' \leq \beta^{\Omega(1)}\cdot\poly(s', W') \leq \beta^{\Omega(1)} \cdot \poly(s, W, 1/\gamma)$ be the value in the associated guarantee for the pair $(H^{(k')}, H^{(k'')})$. 

We have $\delta' \leq C (s + W)^C \beta^{1/C}/\gamma^{C}$ for some constant $C  > 1$ (independent of our choice for $\gamma$).  We choose $\gamma := \beta^{1/(3C^2)}$. It follows that $\delta' \leq C (s + W)^C \beta^{2/(3C)}$.  Now using our aforementioned slack, we require that $\beta$ is a sufficiently small inverse-polynomial in $(s + W)$ that the above also implies $\delta' \leq \gamma$.

Our reduction $A^*$ outputs $H^{(k'')}$ and the composed reduction $R^* :=R  \circ R'$, which (by the assumed properties of $R, R'$) maps pure $n''$ qubit-states to pure $n$-qubit states, and is implemented by a circuit of size $\poly(s, W, 1/\beta)$.
$H^{(k'')}$ is $k''$-local as needed, and is expressed by $s'' \leq \poly(s, W, 1/\beta)$ terms whose operator norms are each in $[1/W'', W'']$ for some $W'' \leq \poly(s, W, 1/\beta)$.

Now suppose $\ket{\psi} \in \mc{B}^{\ot n'' }$ is any state satisfying $\bra{\psi}H^{(k'')}\ket{\psi} < \lambda_1(H^{(k'')}) + \beta$.  
Let $\ket{\phi} := R'(\ket{\phi})$, where $\ket{\phi}$ is determined by the measurement outcomes in $R'$.  By the AGPR property of $R'$, with probability at least $1 - \delta'$ over $R'$ we have $\bra{\phi}H^{(k')}\ket{\phi} < \lambda_1(H^{(k')}) + \delta' \leq  \lambda_1(H^{(k')}) + \gamma$.  Condition on this event, and let $\ket{\nu} := R(\ket{\phi})$.  Then with probability at least $1 - \delta$ over $R$, we have $\bra{\nu}H^{(k)}\ket{\nu} < \lambda_1(H^{(k)}) + \gamma \leq \lambda_1(H^{(k)}) + \beta^{\Omega(1)}$.  Thus our reduction $R^*$ satisfies the desired AGPR guarantee, for the value $\delta^* := \delta + \gamma \leq \beta^{\Omega(1)}\cdot \poly(s, W)$.
\end{proof}

Theorem~\ref{thm:new_key} now follows readily from our assembled results.

\begin{proof}[Proof of Theorem~\ref{thm:new_key}]  Let $V(\xi)$ be a verifier circuit as in Theorem~\ref{thm:new_key}'s statement, and let $\eps > 0$ be given such that $\max_{\xi}\bE[V(\xi)] \geq 1 - \eps$.  We apply Theorem~\ref{thm:variant} to $V$ to obtain an $5$-local Hamiltonian $H$ on $N^* = O(T)$ qubits, with $s \leq \poly(T)$ terms of operator norm in the range $[W^{-1}, W]$ for some $W \leq \poly(T)$, and a quantum operation $R$.  

%(For convenience, we use superscripts to indicate the locality of the associated Hamiltonians.)

Next, it follows from the combination of Theorem~\ref{thm:app_gs} and Lemma~\ref{lem:compose} (applied twice) that there exists a $(5, 2)$-AGPR $A$.  We apply $A$ to $(H, W, \beta)$, with $\beta \geq \eps^{O(1)}/\poly(T)$ a small value to be determined later.  We obtain a 2-local Hamiltonian $H'$ and associated quantum operation $R'$ (both acting on $\mc{B}^{\ot N'}$, for some $N' \leq \poly(s, 1/\beta) \leq  \poly(T, 1/\eps)$), and a termwise operator norm bound $W' \leq \poly(T, 1/\eps)$ for $H'$.  

For the Hamiltonian $H_{V, \eps}$, we choose the Hilbert space $\mc{B}^{\ot N'}$ and let $H_{V, \eps} := H'$.  For the operation $R_{V, \eps}$, we take the composed measurement $R_{V, \eps} := R \circ R'$.
The efficient constructibility claims in Theorem~\ref{thm:new_key} are satisfied for our choice, by the efficiency properties of Theorem~\ref{thm:variant} and Definition~\ref{thm:app_gs} and the requirement $\beta \geq \eps^{O(1)}/\poly(T)$.  Similarly, the termwise operator-norm bound in Theorem~\ref{thm:new_key} is satisfied.  

Now let $\ket{\psi} \in \mc{B}^{\ot N'}$ be any ground state of $H' = H_{V, \eps}$.  Let $\ket{\phi} := R'(\ket{\psi}) \in \mc{B}^{\ot N^*}$ be the pure state determined by the measurement outcomes in $R'$ applied to $\ket{\psi}$.  By the AGPR property of $R'$, for some $\delta \leq \beta^{\Omega(1)}\cdot \poly(T)$, we have $\Pr_{R'}[\bra{\phi} H \ket{\phi} < \lambda_1(H) + \delta] \geq 1- \delta$. 
We choose $\beta \geq \eps^{O(1)}/\poly(T)$ sufficiently small so that $\delta \leq \eps$.

Consider conditioning on any outcome to $\ket{\phi}$ above such that $\bra{\phi} H \ket{\phi} < \lambda_1(H) + \delta \leq \lambda_1(H) + \eps$.
It follows from the guarantee in Theorem~\ref{thm:variant} that for $\xi := R(\ketbra{\phi}{\phi})$ the verifier satisfies $\bE[V(\xi)]  \geq  1 -  \eps^{\Omega(1)} \cdot \poly(T) $.  Thus, under no conditioning on $\ket{\phi}$ we have
\[ \bE[V(\xi)] \ \geq \ 1 -    \eps^{\Omega(1)} \cdot \poly(T)  - \delta \ \geq \ 1 -   \eps^{\Omega(1)} \cdot \poly(T)    \ .   \]
This proves Theorem~\ref{thm:new_key}.
\end{proof}

\subsection{Proof of Theorem~\ref{thm:app_gs}}

%222

In our proof of Theorem~\ref{thm:app_gs}, we use the perturbative gadgets and analysis ideas of Oliveira and Terhal~\cite{OT}, who build upon work of Kempe, Kitaev and Regev~\cite{kkr}.  Our main effort will be to show that, for any $k \geq 4$, there exists a $(k,\lceil k/2 \rceil)$-AGPR.  This will imply Theorem~\ref{thm:app_gs} the cases $k = 5, 4$.
Then, a slightly different reduction from~\cite{OT} gives a $(3, 2)$-AGPR; this will complete the proof.

\subsection{The Locality-Halving Reduction}\label{sec:construction}

\paragraph{The initial setup:} Fix a constant $k \geq 4$.  As the input to our $(k,\lceil k/2 \rceil)$-AGPR, we are given a tuple $(H_{\targ}, W, \beta)$, where $H_{\targ}$ (which we will call the ``target Hamiltonian'') is a $k$-local Hamiltonian expressed as the sum of some number $s$ of $k$-local terms over an $n$-qubit Hilbert space $\mc{H}_{\comp} \cong \mc{B}^{\ot n}$.  All $k$-local terms of $H$ have operator norms $||H_i|| \in [W^{-1}, W]$.

By standard preprocessing steps, we can and will assume the following:
\begin{itemize}
\item $H_{\targ}$ is a sum of $s' \leq \poly(s)$ terms of form $H_{i} = H_{i, 1} H_{i, 2}\ldots H_{i, k}$, where each $H_{i, a}$ is 1-local\footnote{Here, each $H_{i, a}$ denotes an operator over all of $\mc{H}_{\comp}$, which is the tensor product $H_{i, a} = Y_{i, a} \otimes I_{\mathrm{rest}}$ of an operator $Y_{i, a}$ on the Hilbert space of a single qubit with the identity operator $I_{\mathrm{rest}}$ on the other $n - 1$ qubits.} and $||H_{i, a}|| \leq \poly(s + W)$, and $H_{i, 1}, \ldots, H_{i, k}$ act on distinct qubits (hence they commute).  In the sequel we write $s$ in place of $s'$;
\item For each $i$, we assume $\min \left( || H_{i, 1} H_{i, 2}\ldots H_{i, \lceil k/2 \rceil }||, || H_{i, \lceil k/2 \rceil + 1} \ldots H_{i, k}||\right) \in [1, K]$, for some $K \leq \poly(W)$.  (The lower bound is easily achieved by scaling $H_{\targ}$ by a $\poly(W)$ factor.)
\end{itemize}

To satisfy Definition~\ref{def:agpr}, we will create a $\lceil k/2\rceil$-local derived Hamiltonian $H' = \tilde{H}$ on the larger Hilbert space $\mc{H} = \mc{H}_{\comp} \otimes \mc{H}_{\anc}$.  We refer to $\mc{H}_{\comp}, \mc{H}_{\anc}$ as the \emph{computational} and \emph{ancilla registers}, respectively.  For our quantum operation $R$ as in Definition~\ref{def:agpr}, we will take the operation which simply measures the ancilla register in the standard basis.

%We will achieve the following properties:
%\begin{itemize}
%\item $\mc{H}_{\anc}$ consists of $\leq \poly(n)$ qubits;
%\item Each $||\tilde{H}|| \leq \poly(n)$, and each term in $\tilde{H}$ will have operator norm in $[1/W', W']$ for some $W' \leq \poly(s, W, 1/\eps)$;
%\item From any unit-vector state $\ket{\psi}$ that is ``nearly-minimal energy'' state for $\tilde{H}$, applying $R$ to $\ket{\psi}$ yields (with high probability) a state that has nearly-minimal energy for $H_{\targ}$.  
%
%Here, when we say that $\ket{\psi}$ ``nearly-minimal energy'' for $\tilde{H}$, we mean that 
%\begin{equation}
%\bra{\psi}\tilde{H}\ket{\psi} \ < \ \lambda_1(\tilde{H}) + \gamma \ , 
%\end{equation}
%for some specified error $\gamma = 1/\poly(n)$.  We wish to obtain a state $\ket{\psi'}$ over $\mc{H}_{\comp}$ for which
%\begin{equation}\label{eq:nearly_minimal}
%\bra{\psi'}H_{\targ}\ket{\psi'} \ < \ \lambda_1(H_{\targ}) + \gamma' \ , 
%\end{equation}
%allowing for a somewhat larger $\gamma'$.  It will be important that we do not assume here that $\lambda_1(H_{\targ}) \geq 0$.

%
%
%Here, we will be satisfied to obtain a state $\ket{\psi'}$ on $\mc{H}_{\comp}$ with $||H_{\targ}\ket{\psi'}|| \leq p(n)$, for some small inverse-polynomial $p(n) = 1/\poly(n)$.

%\end{itemize}

\paragraph{Further preprocessing:}  First, we replace $H_{\targ}$ with $H^{\star}_{\targ} := H_{\targ} - M \cdot I$, for some $0 < M \leq \poly(s + W)$ chosen large enough to ensure that $\lambda_1(H^{\star}_{\targ})$ is less than $-1$.  For any $j, \ket{\psi}$ we have
\begin{equation}
\lambda_j(H^{\star}_{\targ}) \ = \ \lambda_j(H_{\targ}) - M   \quad{} \quad{} \text{and}  \quad{} \quad{}   \bra{\psi} H^{\star}_{\targ} \ket{\psi}  \ = \   \bra{\psi} H_{\targ} \ket{\psi}  - M \ .
\end{equation}
$M \cdot I$ can be implemented 1-locally, so $H^{\star}_{\targ}$ is $k$-local.

In the remainder of our work, we will use $H_{\targ}$ to denote $H^{\star}_{\targ}$, so that $\lambda_1(H_{\targ})$ is now assumed to be less than $-1$. 

\vspace{1 em}

\paragraph{The components of $H_{\targ}$:}  For each $i \in [s]$, write $H_{i} = A_{i}B_{i}$, where we have grouped the $k$ factors of $H_{i}$ into a $\lceil k/2\rceil$-local part $A_i$ and a $\lfloor k/2 \rfloor$-local part $B_{i}$.  By our assumption, $||A_{i}||, ||B_{i}|| \geq 1$.

Exploiting cancellations and the fact that $A_i, B_i$ commute, we may write
\begin{equation}
  H_{i}  \ = \ (A_i^2 + B_i^2)/2 - ( - A_i + B_i)^2/2  \ = \  - ( - A_i + B_i)^2/2  + H_{i, \els}  \ ,
\end{equation}
where $H_{i, \els}$ is a sum of $\lceil k/2\rceil$-local and $\lfloor k/2 \rfloor$-local terms.  Let
\begin{equation}
 H_{\els} \ := \   \sum_{i \in [s]} H_{i, \els} \ .   
 \end{equation}

\paragraph{The ancilla register:}  For each index $i \in [s]$ corresponding to a term in $H_{\targ}$, we introduce an ancilla qubit that we refer to as $w(i)$.  Thus $\mc{H}_{\anc}$ consists of $s$ qubits.  For a Hamiltonian $E$ acting on the space of a single qubit, we use $E_{w(i)}$ to denote the application of $E$ to $w(i)$ (tensored with the identity on the rest of $\mc{H}_{\anc}$).  Similarly, for a Hamiltonian $F$ on $s$ qubits we use $F_{\overline{w}}$ to indicate operator on  $\mc{H}_{\anc}$ which applies $F$ to the ordered qubit-set $(w(1), \ldots, w(s))$.

\paragraph{The derived Hamiltonian $\tilde{H}$:}  The construction takes a parameter $0 < \Delta \leq \poly(s, W, 1/\beta)$, to be chosen later as a sufficiently large value.  We will take
\begin{equation}
\tilde{H} \ = \ H_0  + V \ , 
\end{equation}
where
\begin{equation}
H_0 \ := \   \Delta \sum_{i \in [s]}  \ketbra{1}{1}_{w(i)}  \ ,
\end{equation}
and where 
\begin{equation}
V \ := \  H_{\els}  +    \sqrt{\Delta/2} \cdot \sum_{i \in [s]}  (-A_i + B_i)\otimes X_{w(i)}  \ .
\end{equation} 
Here, $X_{w(i)}$ is the Pauli $X$ operator applied to $w(i)$.

When we choose a large value $\Delta$, we will have $||H_0|| \gg ||V||$.  In the analytical framework of~\cite{kkr, OT}, $H_0$ is referred to as the ``unperturbed'' reference Hamiltonian; $V$ as the ``perturbation'' operator, regarded as ``small;'' and $\tilde{H}$ as the ``perturbed'' Hamiltonian, thought of as a slightly deformed version of $H_0$.

\subsection{Some Tools for the Analysis}

\paragraph{The effective Hamiltonian:}  For future use we define
\begin{equation}\label{eq:eff_def}
H_{\eff}\ = \  H_{\targ} \otimes \ketbra{0^s}{0^s}_{\overline{w}}   \ .
\end{equation}
We will show that $\tilde{H}$ ``behaves like'' $H_{\eff}$ in an appropriate sense, hence $H_{\eff}$ is referred to as the ``effective Hamiltonian'' for $\tilde{H}$.

The eigenvalues of $H_{\eff} = H_{\targ} \otimes \ket{0}\bra{0}_{\overline{w}}$ are the same as those of $H_{\targ}$, along with 0.  The introduction of this ``unwanted'' 0 eigenvalue is why we initially applied a global shift to $H_{\targ}$ to assume its eigenvalues are negative, to ensure that the ``lowest-energy part'' of $H_{\targ}$ is preserved.  In particular, we have
\begin{equation}\label{eq:preserve_eig}
\lambda_1(H_{\eff}) \ = \ \lambda_1(H_{\targ}) \ < \ -1 \ ,  \quad{}\quad{} 1 \ < \  ||H_{\eff}||  \ < \ \poly(s + W) \ .
\end{equation}

\paragraph{The eigenspaces of $H_0$, their projectors, and some notation:}  In our analysis, we will use the derived Hamiltonian $H_0$ as a ``reference'' with which we decompose our Hilbert space $\mc{H} = \mc{H}_{\comp} \otimes \mc{H}_{\anc}$.  First, it is obvious from the construction that $H_0$ has only nonnegative eigenvalues, including $0$ and $\Delta$, and with no eigenvalues in $(0, \Delta)$.  We define the subspaces
\begin{equation}
\mc{L}_- \ , \ \mc{L}_+  \ \leq \ \mc{H} \ ,
\end{equation}
where $\mc{L}_-$ is the 0 eigenspace of $H_0$, and $\mc{L}_+ := \mc{L}_-^{\perp}$.  We define $\Pi_-, \Pi_+$ as the projectors onto $\mc{L}_-$ and $\mc{L}_+$; we have the expressions
\begin{equation}
\Pi_- \ = \ \ketbra{0^s}{0^s}_{\overline{w}} \ , \quad{}\quad{}   \Pi_+ \ = \ \sum_{x \in \{0, 1\}^s \setminus 0^s}\ketbra{x}{x}_{\overline{w}} \ .
\end{equation}
Now for \emph{any} operator $A$ on $\mc{H}$, following~\cite{kkr, OT} we define
\begin{equation}
A_{++} \ := \Pi_+ A \Pi_+ \ , \quad{} \quad{} A_{--} \ := \Pi_- A \Pi_- \ , \quad{} \quad{}  A_{+-} \ := \Pi_+ A \Pi_- \ , \quad{} \quad{} A_{-+} \ := \Pi_- A \Pi_+ \ .
\end{equation}
Also define
\begin{equation}
A_{+} \ := A_{++} \ , \quad{} \quad{} A_{-} \ := A_{--} \ .
\end{equation}
The $A_+$ notation will be used when $A(\mc{L}_+) \subseteq \mc{L}_+$, and similarly for $A_-, \mc{L}_-$.

\paragraph{Some perturbation theory definitions:} We will not introduce perturbation theory, only some definitions used here.  The terms we introduce will be defined with reference to the ``unperturbed'' derived Hamiltonian $H_0$, explicitly and through the notation $A_{\pm \pm}$ introduced previously. In one definition we will also make reference to the perturbation operator $V$.

We define three functions 
\[ G, \ \tilde{G} \ , \  \Sigma_- \ , \] 
each of which takes as input a value $z \in \mathbb{C}$ and outputs an operator over $\mc{H}$; the definitions involve matrix inversion and for some values $z$ the output may be undefined.
We define $\tilde{G}$, the \emph{resolvent} of $\tilde{H}$, by
\begin{equation*}  
 \tilde{G}(z) \ := \  (zI - \tilde{H})^{-1} \ . 
\end{equation*}  %G(z) \ := \  (zI - H)^{-1}  \ , \quad{}\quad{} 
Define the \emph{self-energy} $\Sigma_-(z)$ by
\begin{equation}
\Sigma_-(z) \ := \ z I_- - \tilde{G}_{--}^{-1}(z) \ .
\end{equation}

\paragraph{The perturbation theorems:} 
Here we state a result from~\cite{OT} that expresses the sense in which $\tilde{H}$ approximates $H_{\eff}$.  First we introduce one piece of helpful notation. For an operator $A$ over Hilbert space $\mc{H}$ and a subspace $S \leq \mc{H}$, we will use
\[  || A ||_S  \ := \ \max_{\ket{v} \in S \setminus 0} \frac{||A\ket{v}||}{||\ket{v}|| }   \]
to denote the ($\ell_2$) operator norm of $A$ with inputs restricted to $S$.

\begin{theorem}[Special case of~\cite{OT}, Theorem A.1]\label{thm:ot_perturb}  Say we are given Hamiltonians $H_0, \tilde{H}, V, H_{\eff}$ and real values $\Delta > b > 0$, satisfying the following assumptions:
\begin{enumerate}
\item $\tilde{H} = H_0 + V$;
\item $|| V|| <  \Delta/2$;  %taking strict ineq. here because kkr asks for that...
\item $H_0$ has the eigenvalues $\{0, \Delta\}$,\footnote{Here, in~\cite{OT}, Theorem A.1 we are fixing the setting $\lambda_* := \Delta/2$, as per the discussion in~\cite[p. 19-20]{OT}.} with $\mc{L}_-, \mc{L}_+$ defined as above relative to $H_0$, and with operators $A_{\pm \pm}$ defined relative to these subspaces;
\item All eigenvalues of $H_{\eff}$ are contained in $[-b, b]$;\footnote{We are setting $a := -b$ in Theorem A.1 of~\cite{OT}.}
\item $H_{\eff} = \Pi_- H_{\eff} \Pi_-$.
\end{enumerate}
Next, fix $r, \eps > 0$, and let $D_r := \{z \in \bC: |z| \leq r\}$ be the disk of radius $r$ in the complex plane, centered at the origin.  Assume that
\begin{equation} b + \eps \ < \   r \ <  \ \Delta/2 \ . \end{equation}
Now our central assumption is that for all $z \in D_r$, the resolvent $\Sigma_-(z)$ is a good approximation to $H_{\eff}$:
\begin{equation}\label{eq:central}
|| \Sigma_-(z) - H_{\eff} || \ \leq \ \eps \ .
\end{equation}
Let
\begin{equation}
\tilde{S} \ \leq \ \mc{H}
\end{equation}
denote the ``low-energy subspace'' of $\tilde{H}$, namely, the subspace generated by the eigenvectors of $\tilde{H}$ whose eigenvalues are less than $\Delta/2$.  Then $\tilde{S}$ has dimension at least 1.  Moreover, it holds that $H_{\eff}$ is well-approximated by $\tilde{H}$ on $\tilde{S}$:
\begin{equation}
|| \tilde{H} -  H_{\eff} ||_{\tilde{S} } \ \leq \     \frac{3(||H_{\eff}|| + \eps)|| \cdot ||V||}{\Delta - || H_{\eff}|| - \eps }  +  \frac{r(r + z_0)\eps}{(r - b)(r - b - \eps)}   \ .
\end{equation} %I'm suppressing the $\tilde{H}_{< \lambda^*} notation...
\end{theorem}

We will also use the following theorem from~\cite{kkr} relating the spectrum of $\tilde{H}$ to that of $H_{\eff}$:

\begin{theorem}[Special case of \cite{kkr}, Thm. 3; see also~\cite{OT}, Thm. 7]\label{thm:kkr_perturb}
Under the same assumptions as in Theorem~\ref{thm:ot_perturb}, we have the following.  For every index $j$ for which $\lambda_j(\tilde{H}) < \Delta/2$ (in particular, this must include $j = 1$), we have
\begin{equation}
| \lambda_j(\tilde{H})  -  \lambda_j(H_{\eff}) | \ \leq \ \eps \ .
\end{equation}

\end{theorem}

Theorem~\ref{thm:kkr_perturb} is also used in the proof of Theorem~\ref{thm:ot_perturb}.

\subsection{Application of the Perturbation Theorems}

For the construction of $H_0, \tilde{H}, V$ described in Section~\ref{sec:construction}, it is immediate that conditions 1, 3, and 5 in Theorem~\ref{thm:ot_perturb} are satisfied.  Condition 2, asking that $||V|| < \Delta/2$, is satisfied for sufficiently large $\Delta \leq \poly(s, W, 1/\beta)$; this follows by crudely bounding the norms of all terms used to define $V$, using our initial norm-bound assumptions on $H_{\targ}$.

As noted, the eigenvalues of $H_{\eff}$ are the same as those of $H_{\targ}$, along with 0.  Thus we have $||H_{\eff}|| \leq \poly(s + W)$, independent of $\Delta$, and if we take $b := ||H_{\eff}||$, condition 4 in Theorem~\ref{thm:ot_perturb} is satisfied.

Now, to satisfy the last requirement of that Theorem, Eq.~(\ref{eq:central}), we first set $r := 2b + \eps$, with 
\[ \eps := \beta/20\ . \] 
(Recall that $\beta > 0$ is an input parameter to our desired AGPR.)
Thus $D_r$ is a disk of radius $2|| H_{\eff}|| + \eps$ in the complex plane, centered at the origin.

Our key tool is a bound shown in~\cite[p. 11, Eq. (25)]{OT}: for $|z| < \Delta$,
\begin{equation}
\Sigma_-(z) \ = \left( H_{\els} + \frac{\Delta}{2(z - \Delta)}\sum_{i \in [s]} (-A_i + B_i)^2 \right) \otimes \ketbra{0^s}{0^s}_{\overline{w}} + O \left( \frac{||V||^3}{(z - \Delta)^2}\right)  \ .
\end{equation}
Note that for $\Delta \gg z$ the left-hand term approaches $H_{\eff}$ (as defined in Eq.~(\ref{eq:eff_def})), and the right-hand error term approaches 0.  Indeed, following the discussion in~\cite[pp. 11, 20]{OT}, by taking a sufficiently large $\Delta \leq \poly(s + W)/\eps^2$ we obtain
\begin{equation}
||  \Sigma_-(z) - H_{\eff} || \ \leq \ \eps \ ,  \quad{}\quad{} \text{for all } z \in D_r \ .
\end{equation}
Thus all requirements of Theorem~\ref{thm:ot_perturb} are satisfied for our settings, and we conclude that
\begin{align}
|| \tilde{H} -  H_{\eff} ||_{\tilde{S} } \ &\leq \     \frac{3(||H_{\eff}|| + \eps)|| \cdot ||V||}{\Delta - || H_{\eff}|| - \eps }  +  \frac{r(r + z_0)\eps}{(r - b)(r - b - \eps)}  \\ 
&\leq \   \eps  +   4\eps  \ = \ 5 \eps  \ ,  \label{eq:closeness}
\end{align}
with the last inequality valid if we choose $\Delta$ large enough compared to $|| H_{\eff}||$. For future work, we also stipulate that $\Delta$ be chosen large enough to satisfy
\begin{equation}\label{eq:stipulate}
 \frac{1 }{  \Delta } \  \leq \  \frac{\eps}{ 2 ||H_{\eff}||}  \ .
 \end{equation}
All this only requires $\Delta \leq \poly(s, W, 1/\beta)$.

% In particular, the approximation error can be made less than any desired inverse-polynomial in $n$, with $\Delta \leq \poly(s + W)$.  

Under the same settings to our parameters, it is immediate that we also obtain the conclusions of Theorem~\ref{thm:kkr_perturb}.  In particular, using Eq.~(\ref{eq:preserve_eig}) we have
\begin{equation}\label{eq:lambdas}
 |\lambda_1(\tilde{H}) -  \lambda_1(H_{\targ})|  \ = \   |\lambda_1(\tilde{H}) -  \lambda_1(H_{\eff})|  \ \leq \ \eps \ .
\end{equation}
For future work, we note that $\tilde{S}$ is a \emph{proper} subspace of $\mc{H}$, since $||\mc{H}|| \geq ||H_0|| - ||V|| \geq \Delta - \Delta/2$.

%Now let $\beta \in (0, 1)$ be the parameter given as part of the input to our AGPR.  We assume that
%\begin{equation}
%10\eps + \beta < 1 \  .
%\end{equation}

\paragraph{Consequences for nearly-minimal-energy states:}    Consider any nearly-minimal-energy state $\ket{\psi} \in \mc{H}$ for the Hamiltonian $\tilde{H}$, satisfying
\begin{equation}\label{eq:braket_small}
\bra{\psi}\tilde{H}\ket{\psi} \ < \ \lambda_1(\tilde{H}) + \beta \ < \  \lambda_1(H_{\eff}) + \beta + \eps \ . 
\end{equation}
We will upper-bound $\bra{\psi}H_{\eff}\ket{\psi}$ to show that $\ket{\psi}$ is also nearly-minimal-energy for this second Hamiltonian.  

A small complication for our analysis is that $\ket{\psi}$ may not lie within $\tilde{S}$.  Decompose $\ket{\psi}$ as
\begin{equation}
\ket{\psi} \ = \  \alpha_1 \ket{\psi_{\tilde{S}}} +   \alpha_2 \ket{\psi_{\tilde{S}^{\perp}}}   \ ,
\end{equation}
according to its components in $\tilde{S}$ and its orthogonal complement $\tilde{S}^\perp$ (so, we have $|\alpha_1|^2 + |\alpha_2|^2 = 1$ and $\braket{\psi_{\tilde{S}}}{\psi_{\tilde{S}^{\perp}}  } = 0$).  Recall that both of these spaces have dimension at least 1.  We assume that $a$ is real and positive; this assumption is without loss of generality, by applying a phase factor $\overline{\alpha}_1/|\alpha_1|$ to the state if necessary, and just simplifies our expressions slightly.

By the definition of $\tilde{S}^\perp$, we see that it is spanned by eigenvectors of $\tilde{H}$ with eigenvalues $\geq \Delta/2$.  Thus,
\begin{align}
\bra{\psi}\tilde{H}\ket{\psi} \ &= \  |\alpha_1|^2 \bra{\psi_{\tilde{S}}}\tilde{H}\ket{\psi_{\tilde{S}}}   + |\alpha_2|^2 \bra{\psi_{\tilde{S}^\perp}}\tilde{H}\ket{\psi_{\tilde{S}^\perp}}       \\ 
&\geq \ \lambda_1(\tilde{H}) +  |\alpha_2|^2 \Delta / 2 \ . 
\end{align}
Combining this with Eqs.~(\ref{eq:braket_small}) and~(\ref{eq:lambdas}), we find
\begin{equation}\label{eq:derived_small}
|\alpha_2|^2 \ \leq \ \    \frac{ 2 \beta }{  \Delta }  \  \leq \ \frac{ \eps }{ ||H_{\eff}||  }  \ ,   
\end{equation} %added frac
where the last step follows from our prior largeness requirement on $\Delta$ in Eq.~(\ref{eq:stipulate}).  It also follows that $|\alpha_1 - 1|^2 \leq |\sqrt{1 - \eps / ||H_{\eff}||} - 1|^2 \leq  |1 - \eps/||H_{\eff}|| - 1|^2 \leq  \eps^2 / ||H_{\eff}||^2$ (using here that $\alpha_1 \in \mathbb{R}^+$). For analysis purposes, define the (non-normalized) state
\begin{equation}
\ket{v} \ := \   (\alpha_1 - 1)\ket{\psi_{\tilde{S}}}  +  \alpha_2 \ket{\psi_{\tilde{S}^\perp}}  \ .
\end{equation}
We have 
\begin{equation}\label{eq:vec_small}
||\ket{v}||^2 \  = \  \braket{v}{v} \ = \   |\alpha_1 - 1|^2 +   |\alpha_2|^2 \ \leq \frac{ 2 \eps }{ ||H_{\eff}||   }   \ .
\end{equation} %added frac

Now note that, using the definition of $\ket{v}$ and Eq.~(\ref{eq:vec_small}), we have
\begin{align}
\bra{\psi}H_{\eff}\ket{\psi} \ &=    \bra{\psi_{\tilde{S}}}H_{\eff}\ket{\psi_{\tilde{S}}}   + \bra{v}H_{\eff}\ket{v}    \\
&\leq \    \bra{\psi_{\tilde{S}}}H_{\eff}\ket{\psi_{\tilde{S}}}      +  ||H_{\eff}|| \cdot ||\ket{v}||^2 \\
&\leq \    \bra{\psi_{\tilde{S}}}H_{\eff}\ket{\psi_{\tilde{S}}}     +   4\eps \ .  \label{eq:low_yo}
\end{align}
Next, applying Eq.~(\ref{eq:closeness}) and the fact that $\ket{\psi_{\tilde{S}}} \in \tilde{S}$, we obtain
\begin{align}
\bra{\psi_{\tilde{S}}}H_{\eff}\ket{\psi_{\tilde{S}}} \ &\leq \  \left( \bra{\psi_{\tilde{S}}}\tilde{H}\ket{\psi_{\tilde{S}}}   + ||\tilde{H} - H_{\eff} ||\cdot || \ket{\psi_{\tilde{S}}}||^2\right)  + \eps \\ 
&\leq  \   \bra{\psi_{\tilde{S}}}\tilde{H}\ket{\psi_{\tilde{S}}}     +   6 \eps   \\
&\leq \ \bra{\psi}\tilde{H}\ket{\psi} + 6\eps \\
&\leq \ \lambda_1(H_{\eff}) + 6\eps + \beta \ .   \label{eq:small_yo}
\end{align}
(In the third inequality, we used the definition of $\tilde{S}$ as a low-energy subspace for $\tilde{H}$, and the fact that $\ket{\psi_{\tilde{S}}}$ is the component of $\ket{\psi}$ in $\tilde{S}$.  In the last step, we used Eq.~(\ref{eq:braket_small}).) 
Combining Eqs.~(\ref{eq:low_yo}) and~(\ref{eq:small_yo}), we conclude that
\begin{equation}\label{eq:nearly}
\bra{\psi}H_{\eff}\ket{\psi} \ \leq \    \lambda_1(H_{\eff}) + 10\eps + \beta  < \   \lambda_1(H_{\eff}) + 2\beta \ .
\end{equation}
Thus $\ket{\psi}$ is also nearly-minimal-energy for $H_{\eff} = H_{\targ} \otimes \ketbra{0^s}{0^s}_{\overline{w}}$.

\paragraph{Obtaining a nearly-minimal-energy state for $H_{\targ}$:}  Recall that $\mc{L}_-$ is the subspace of $\mc{H}$ in which the ancilla qubits are all-zero.  Any computational basis state in which the ancillas are not all-zero vanishes under the action of $H_{\eff}$.
For our state $\ket{\psi}$ as above, write 
\begin{equation}
\ket{\psi} \ = \  w \ket{\psi_-} + z \ket{\psi_+} \ ,
\end{equation}
where $\ket{\psi_{-}} \in \mc{L}_-, \ket{\psi_+} \in \mc{L}_+$ are unit vectors.  Re-expressing our inner product in this basis, we have $\bra{\psi}H_{\eff}\ket{\psi} \geq |w|^2 \cdot \lambda_1(H_{\eff}) + 0$, so by Eq.~(\ref{eq:nearly}), and using the facts that $\lambda_1(H_{\eff}) < -1$ and $10\eps + \beta < 1$, we have
\begin{equation}
|w|^2 \ \geq \ 1 - \frac{10\eps + \beta}{|\lambda_1(H_{\eff})|} \ > \ 1 - 2\beta \ .
\end{equation}
Recall that the quantum operation $R$ measures the ancilla register of $\ket{\psi}$.  By the above, with probability $ > 1 - 2\beta$ this measurement yields the all-zero outcome, and the post-measurement state is $\ket{\psi_-}$.  Identifying $\mc{L}_-$ with the Hilbert space $\mc{H}_{\comp}$, on which $H_{\targ}$ acts, we have
\begin{align}
\bra{\psi_-}H_{\targ}\ket{\psi_-} \ &= \  \frac{1}{|w|^2} \bra{\psi}H_{\eff}\ket{\psi}   \\ 
&\leq  \   \lambda_1(H_{\eff}) + 10\eps + \beta   \\
&= \ \lambda_1(H_{\targ}) + 10\eps + \beta \\
&< \ \lambda_1(H_{\targ}) + 2\beta
\end{align}
using Eq.~(\ref{eq:preserve_eig}) in the penultimate step.  Thus $(H' = \tilde{H}, R)$ have the required AGPR properties (where we may take $\delta := 2\beta$ in Definition~\ref{def:agpr}).  We have proved Theorem~\ref{thm:app_gs} for the cases $k = 5, 4$.

%So the post-measurement state is nearly-minimal-energy for $H_{\targ}$.

%
%\subsection{Iterative application}  
%
%To reduce the locality below $\lceil k/ 2\rceil$, we repeat the whole reduction, applied this time to the new ``target'' Hamiltonian
%\begin{equation}
%H^{(2)}_{\targ} \ := \ \tilde{H} \ 
%\end{equation}
%on the new host space $\mc{H}_{\comp}^{(2)} := \mc{H}$.  The initial assumptions on $H_{\targ}$ continue to hold for $H^{(2)}_{\targ}$, with a polynomial blowup in the norm, and with only a possible constant-factor normalization required to ensure the technical condition on the operators $H_{i, a}$.  We produce a derived $\lceil \lceil k/2\rceil / 2 \rceil$-local Hamiltonian $\tilde{H}^{(2)}$ on an enlarged Hilbert space $\mc{H}^{(2)} = \mc{H}_{\comp} \otimes \mc{H}_{\anc} \otimes \mc{H}^{(2)}_{\anc}$.  
%
%Given a nearly-minimal-energy state $\psi$ for $\tilde{H}^{(2)}$, upon measuring this state on $\mc{H}^{(2)}_{\anc}$ one sees all-zeros with high probability; the post-measurement state is nearly minimal-energy for $\tilde{H}$.  Then after measuring $\mc{H}_{\anc}$, with high probability we see all-zeros and get a nearly-minimal-energy state for $H_{\targ}$.
%
%And so on for $\leq\log k$ rounds.  Thus we are able to reduce the locality down to 3.
%

\subsection{The 3-local-to-2-local Reduction}  

Given a 3-local target Hamiltonian $H_{\targ}$, we can use a different gadget construction in~\cite[p. 11-12]{OT}.  The construction uses the same (1-local) unperturbed Hamiltonian $H_0 := \Delta\sum_{i \in [s]}\ketbra{1}{1}_{w(i)}$ and the same effective Hamiltonian $H_{\eff} := H_{\targ} \otimes \ket{0^s}\bra{0^s}_{\overline{w}}$, with a different perturbation Hamiltonian $V$ (this time 2-local), which again satisfies $||V|| < \Delta/2$ for sufficiently large $\Delta \leq \poly(s, W, 1/\beta)$.  As described in~\cite{OT}, for large enough $\Delta \leq \poly(s, W, 1/\beta)$ one can ensure $||\Sigma_-(z) - H_{\eff}|| \leq \eps$ for $\eps := \beta/20$ and for $z$ in a disk of appropriately chosen radius.  This allows us to apply Theorems~\ref{thm:ot_perturb} and~\ref{thm:kkr_perturb} in the same fashion as before.  This yields the required $(3, 2)$-AGPR, completing the proof of Theorem~\ref{thm:app_gs}.

%Thus for any $k = O(1)$, a $k$-local Hamiltonian $H_{\targ}$ satisfying the mild assumptions we described, can be used to derive (in classical polynomial time) a 2-local $H_{\mathrm{final}}$, with terms of $\poly(n)$ norm, such that from a ground state or nearly-minimal-energy state of $H_{\mathrm{final}}$ one obtains (with high probability $1 - 1/\poly(n)$, after measuring some ancilla qubits) a nearly-minimal-energy state for $H_{\targ}$ (within a prescribed $1/\poly(n)$ additive error from the ground energy).
%

\section{Further Implications for Quantum Complexity Theory\label{NERD}}

In this section, we use the $\mathsf{BQP/qpoly}=\mathsf{YQP^*/poly}$\ theorem to
harvest two more results about quantum complexity classes. \ The first is an
``exchange theorem''\ stating that $\mathsf{QCMA/qpoly}\subseteq
\mathsf{QMA/poly}$: in other words, \textit{one can always simulate quantum
advice together with a classical witness by classical advice together with a
quantum witness}. \ This is a straightforward generalization of Theorem
\ref{yqpthm}. \ The second result is a ``Quantum Karp-Lipton Theorem,''\ which
states that if $\mathsf{NP}\subset\mathsf{BQP/qpoly}$\ (that is, $\mathsf{NP}%
$-complete\ problems are efficiently solvable by quantum computers with
quantum advice), then $\mathsf{\Pi}_{\mathsf{2}}^{\mathsf{P}}\subseteq
\mathsf{QMA}^{\mathsf{P{}romiseQMA}}$, which one can think of as ``almost as
bad''\ as a collapse of the polynomial hierarchy. \ This result makes essential
use of Theorem \ref{yqpthm}, and is a good illustration of how that theorem
can be applied in quantum complexity theory.

\begin{theorem}
[Exchange Theorem]\label{qcmaqpoly}$\mathsf{QCMA/qpoly}\subseteq
\mathsf{QMA/poly}$.
\end{theorem}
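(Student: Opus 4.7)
The plan is to mimic the proof of Theorem \ref{yqpthm} ($\mathsf{BQP/qpoly} = \mathsf{YQP/poly}$), treating the classical $\mathsf{QCMA}$ witness $w$ as an additional part of the input. Let $L \in \mathsf{QCMA/qpoly}$, fix a polynomial-time quantum verifier $A(x, w, \xi)$ with $p(n)$-qubit advice $\rho_n$ and $q(n)$-bit classical witness $w$, and WLOG amplify so that completeness/soundness errors are $1/10$. Define a p-concept class $S$ on the input space $\{0,1\}^{n+q(n)}$ by
\[
f_\xi(x, w) := \Pr[A(x, w, \xi) \text{ accepts}],
\]
ranging over all $p(n)$-qubit mixed states $\xi$. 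Since $A$ is simply a polynomial-size quantum circuit with advice register $\xi$, Theorem \ref{qlearn} gives $\operatorname{fat}_\gamma(S) = O(p(n)/\gamma^2)$ exactly as in the proof of Theorem \ref{yqpthm}; the extra $w$ just lengthens the input string.

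Next, apply Lemma \ref{realmajcer} with a small constant $\varepsilon$ (say $\varepsilon = 1/10$) and target function $f^\ast = f_{\rho_n}$. This produces $m = O(n + q(n))$ advice states $\rho[1], \ldots, \rho[m]$, test sets $X_1, \ldots, X_m \subseteq \{0,1\}^{n+q(n)}$ of polynomial size, and a gap $\alpha = 1/\operatorname{poly}(n)$, with the property that any states $\sigma[1], \ldots, \sigma[m]$ satisfying $\Delta_\infty(f_{\rho[i]}, f_{\sigma[i]})[X_i] \leq 5\alpha$ yield $\Delta_\infty(f_{\rho_n}, f_\sigma) \leq 1/10$, where $\sigma := \frac{1}{m}\sum_i \sigma[i]$. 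The $\mathsf{QMA/poly}$ protocol is now: the classical $\mathsf{/poly}$ advice encodes $X_1, \ldots, X_m$ along with rational approximations $r_{i,(x,w)}$ to $f_{\rho[i]}(x, w)$ for $(x, w) \in X_i$; Merlin's quantum witness consists of $m$ registers (purporting to be $\rho[1] \otimes \cdots \otimes \rho[m]$) together with a classical string $w^\ast$ (sent as a computational-basis component of the witness). Arthur's verification has two stages, packaged in the $\mathsf{QMA}_+$ framework of Definition~\ref{qmaplusthm}: first, a constraint $|\Pr[A(x',w',\sigma[i])\text{ accepts}] - r_{i,(x',w')}| \leq 5\alpha$ for each $(x',w') \in X_i$; second, a separate circuit that runs $A(x, w^\ast, \sigma[i])$ for a random $i$ and outputs the result.

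Completeness is immediate: in the honest case $\sigma[i] = \rho[i]$ and $w^\ast$ is a valid $\mathsf{QCMA}$ witness, so all constraints hold and $A$ accepts with probability $\geq 9/10$. For soundness, suppose $x \notin L$. If the constraints are violated by more than $5\alpha$, then the $\mathsf{QMA}_+$ test rejects; otherwise Lemma \ref{realmajcer} guarantees $|f_\sigma(x, w) - f_{\rho_n}(x, w)| \leq 1/10$ simultaneously for \emph{every} $w \in \{0,1\}^{q(n)}$, and in particular for the $w^\ast$ Merlin chose. Since $f_{\rho_n}(x, w) \leq 1/10$ for all $w$ by soundness of the $\mathsf{QCMA/qpoly}$ protocol, running $A(x, w^\ast, \sigma[i])$ for a uniformly random $i$ (which, by linearity of expectation, outputs $1$ with probability $f_\sigma(x, w^\ast)$ regardless of any entanglement between registers) accepts with probability at most $1/5$. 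Invoking Theorem \ref{qmaplusthm} to convert the $\mathsf{QMA}_+$ protocol into a standard $\mathsf{QMA}$ protocol, and amplifying, places $L$ in $\mathsf{QMA/poly}$.

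The main conceptual point, and the only nontrivial step beyond Theorem \ref{yqpthm}, is the $\forall w$ quantifier in the soundness case: Merlin is free to pick $w^\ast$ adversarially after seeing what $\sigma$ he managed to pass through the constraints. The key is that Lemma \ref{realmajcer}'s $\Delta_\infty$ guarantee is over the \emph{entire} input domain $\{0,1\}^{n+q(n)}$, so the $w^\ast$ Merlin chooses cannot escape the uniform closeness $f_\sigma \approx f_{\rho_n}$. The remaining details---encoding a classical string as part of a quantum witness, handling the $\mathsf{QMA}_+$ tolerance gaps, and averaging over $i$ with possibly entangled registers---all carry over verbatim from the proof of Theorem \ref{yqpthm}.
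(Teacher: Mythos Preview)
Your proposal is correct and follows essentially the same approach as the paper: both treat the classical witness $w$ as part of the input, so that the $\Delta_\infty$ guarantee from Lemma~\ref{realmajcer} holds uniformly over all $(x,w)$, and then have Merlin supply $w^\ast$ alongside the quantum state. The only difference is packaging: the paper phrases this as a black-box reduction to $\mathsf{PromiseBQP/qpoly}=\mathsf{PromiseYQP/poly}$ (defining an auxiliary promise problem ``estimate $\Pr[Q(x,w,\rho_n)\text{ accepts}]$'' and invoking Theorem~\ref{yqpthm}), whereas you unpack that black box and reapply Lemma~\ref{realmajcer} directly on the domain $\{0,1\}^{n+q(n)}$---but the underlying argument is identical.
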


\begin{proof}
The proof is almost the same as that of Theorem \ref{yqpthm}. \ Let
$L\in\mathsf{QCMA/qpoly}$. \ Then there exists a polynomial-time quantum
verifier $Q$, a family of polynomial-size advice states $\left\{  \rho
_{n}\right\}  _{n}$, and a polynomial $p$ such that for all inputs
$x\in\left\{  0,1\right\}  ^{n}$:

\begin{itemize}
\item $x\in L\implies\exists w\in\left\{  0,1\right\}  ^{p\left(  n\right)
}~\bE\left[  Q\left(  x,w,\rho_{n}\right) \right]  \geq2/3.$

\item $x\notin L\implies\forall w\in\left\{  0,1\right\}  ^{p\left(  n\right)
}~\bE\left[  Q\left(  x,w,\rho_{n}\right) \right]  \leq1/3.$
\end{itemize}

Now consider the following promise problem: given $x$ and $w$\ as input (regarded as two parts of the classical input string), as
well as a constant $c\in\left[  0,1\right]  $, decide whether $\bE\left[
Q\left(  x,w,\rho_{n}\right)  \right]  $\ is at most
$c-1/10$\ or at least $c+1/10$, promised that one of these is the case.
\ (Equivalently, \textit{estimate} the probability within an additive error
$\pm1/10$.) \ This problem is clearly in $\mathsf{P{}romiseBQP/qpoly}$, since
we can take $\rho_{n}$\ as the advice. \ So by Theorem \ref{yqpthm}, the
problem is in $\mathsf{P{}romiseYQP^*/poly}$\ as well, as witnessed by an input-oblivious advice-testing algorithm $Y((x, w), \sigma, a)$ and a classical advice string family $\{a_n\}_{n > 0}$.  (By slight abuse of index notation, the advice string $a_n$ is taken to possess the correctness guarantee in Theorem~\ref{yqpthm} for inputs $(x, w) \in \{0, 1\}^{n + p(n)}$ obeying the promise.)

Our $\mathsf{QMA/poly}%
$\ verifier takes the $\mathsf{P{}romiseYQP^*/poly}$\ advice string $a_{n}$
as its trusted classical advice, and a state of the form $\sigma\otimes\left\vert w\right\rangle
\left\langle w\right\vert $\ as its untrusted witness state. \ It acts as follows:

\begin{enumerate}
\item[(1)] Execute $Y((x, w), \sigma, a_n)$, rejecting if the advice-testing bit $b_{\adv} = 0$;

\item[(2)] If $b_{\adv} = 1$, measure the bit $b_{\out}$ from the same execution of $Y$ and output this bit. 
\end{enumerate}

The protocol is polynomial-time, since $Y$ is a polynomial-time quantum algorithm, and the completeness and soundness properties follow directly from the guarantees of Theorem~\ref{yqpthm}.
\end{proof}

Indeed, let $\mathsf{YQ}$\textsf{\textperiodcentered}$\mathsf{QCMA}$ denote
the complexity class where a $\mathsf{BQP}$ verifier receives a classical untrusted
witness that depends on the input, as well as an untrusted quantum witness that depends
only on the input size $n$. \ Then we can \textit{characterize}
$\mathsf{QCMA/qpoly}$ as equal to $\mathsf{YQ}$\textsf{\textperiodcentered
}$\mathsf{QCMA/poly}$, similarly to how we characterized $\mathsf{BQP/qpoly}$
as equal to $\mathsf{YQP/poly}$.%

%RESTORE:
\begin{figure}[ptb]%
\centering
\includegraphics[
trim=0in 0in 0in 0in,
scale=.38
]%
{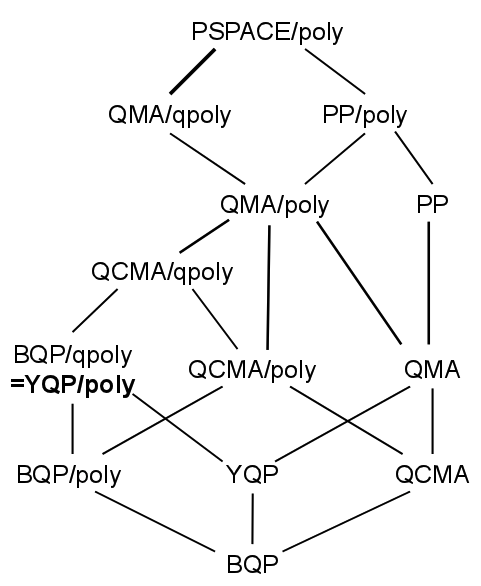}
\caption{Containments among complexity classes related to quantum proofs and
advice, in light of this paper's results. \ The containments
$\mathsf{QMA/qpoly}\subseteq\mathsf{PSPACE/poly}$\ and $\mathsf{QCMA/qpoly}%
\subseteq\mathsf{PP/poly}$\ were shown previously by Aaronson
\cite{aar:qmaqpoly}. \ This paper shows that $\mathsf{BQP/qpoly}%
\subseteq\mathsf{QMA/poly}$, and indeed $\mathsf{BQP/qpoly}=\mathsf{YQP/poly}%
$, where $\mathsf{YQP}$\ is like $\mathsf{QMA}$\ except that the untrusted quantum
witness can depend only on the input length $n$. \ It also shows that
$\mathsf{QCMA/qpoly}\subseteq\mathsf{QMA/poly}$.}%
\label{yqpnew}%
\end{figure}

We now use Theorem \ref{yqpthm}\ to prove an analogue of the Karp-Lipton
Theorem for quantum advice.  

Recall that a promise problem is a pair $\Pi = (\Pi_{yes}, \Pi_{no})$ of disjoint subsets of $\{0, 1\}^*$.
We say that a language $A$ \emph{solves} $\Pi$ if for all $x \in \Pi_{yes} \cup \Pi_{no}$, we have $x \in A \Leftrightarrow x \in \Pi_{yes}$.  We say that a language $L$ is in $\mathsf{QMA}^{\Pi}$ if there is a single $\mathsf{QMA}$ verifier $V^A$ with oracle access, that witnesses the membership $L \in \mathsf{QMA}^{A}$ for \emph{any} language $A$ solving $\Pi$.  We let $\mathsf{QMA}^{\mathsf{P{}romiseQMA}} := \bigcup_{\Pi \in \mathsf{P{}romiseQMA}} \mathsf{QMA}^{\Pi}$.  This model of oracle access to promise problems, in which the machine may query strings violating the promise $\Pi$ (and for which the oracle may give arbitrary responses), is fairly standard; see, e.g.,~\cite{BF99}.

\begin{theorem}
[Quantum Karp-Lipton Theorem]\label{qkl}If $\mathsf{NP}\subset
\mathsf{BQP/qpoly}$, then $\mathsf{\Pi}_{\mathsf{2}}^{\mathsf{P}}%
\subseteq\mathsf{QMA}^{\mathsf{P{}romiseQMA}}$.
\end{theorem}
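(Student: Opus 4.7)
The plan is to mimic the classical Karp-Lipton theorem, with Theorem~\ref{yqpthm} playing the role that $\mathsf{NP}\subseteq\mathsf{P/poly}$ plays classically. Given $L\in\mathsf{\Pi}_{\mathsf{2}}^{\mathsf{P}}$, I rewrite $x\in L \iff \forall y\in\{0,1\}^{p(|x|)}:\phi(x,y)\in\mathsf{SAT}$, where $\phi(x,y)$ is an efficiently constructible $\mathsf{SAT}$ instance encoding the inner existential quantifier. By the hypothesis and Theorem~\ref{yqpthm}, $\mathsf{NP}\subseteq\mathsf{BQP/qpoly}=\mathsf{YQP/poly}$, so I fix polynomial-time quantum algorithms $(A,B)$ together with trusted classical advice $\{a_m\}$ and untrusted quantum advice $\{\rho_m\}$ witnessing $\mathsf{SAT}\in\mathsf{YQP/poly}$.

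The $\mathsf{QMA}^{\mathsf{P{}romiseQMA}}$ protocol will have Merlin send Arthur the pair $(a,\rho^{\otimes K})$ as his quantum witness---$a$ presented in the computational basis, $K$ a suitable polynomial---and then Arthur queries a $\mathsf{P{}romiseQMA}$ oracle for the existence of a ``bad'' $y$. I define the promise problem $\Pi$ with instance $(x,a)$ by: YES if there exist $y$ and a quantum state $\sigma$ such that $A(\phi(x,y),a,\sigma)$ accepts and $B(\phi(x,y),a,\sigma)$ rejects (each with a constant gap), and NO if for every $(y,\sigma)$, $A$-acceptance forces $B$-acceptance. Using the Aharonov-Regev multi-register trick exactly as in Theorem~\ref{qmaplusthm}, $\Pi\in\mathsf{P{}romiseQMA}$ with witness $(y,\sigma)$. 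Arthur's decision rule is to accept iff the oracle returns NO on $(x,a)$.

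Completeness is immediate from the $\mathsf{YQP/poly}$ soundness of $(A,B,a_m,\rho_m)$: for $x\in L$, the honest witness $(a_m,\rho_m^{\otimes K})$ forces every $\sigma$ passing $A$ to make $B$ output $\mathsf{SAT}(\phi(x,y))=1$, so no YES-witness for $\Pi$ exists and the oracle returns NO. For soundness when Merlin supplies a \emph{valid} classical advice $a$, if $x\notin L$ then a bad $y^\ast$ exists, and the pair $(y^\ast,\rho_m)$ serves as a YES-witness, so the oracle correctly returns YES. The real obstacle is soundness against a Merlin who ships an \emph{invalid} classical advice $a$, for which the $\mathsf{YQP/poly}$ soundness condition need not hold and the oracle's verdict on $(x,a)$ need not reflect $L$-membership.

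I plan to handle the invalid-$a$ case by augmenting Arthur's procedure with an additional $\mathsf{P{}romiseQMA}$ consistency query on $a$: roughly, one checks whether there exist an input $w$ and two quantum states $\sigma_1,\sigma_2$ both passing $A(w,a,\cdot)$ while yielding contradictory $B$-outputs on $w$, and one cross-checks $B$'s ``accept'' outputs against directly verifiable $\mathsf{NP}$ witnesses for satisfiability. A valid $a$ passes both checks trivially by $\mathsf{YQP/poly}$ soundness, whereas any $a$ that would otherwise cause the main oracle query to err is forced to exhibit such an internal inconsistency. Pinning down this consistency argument---so that no cheating classical advice can survive and fool Arthur---is the technical crux of the proof.
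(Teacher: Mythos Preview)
Your proposal has a genuine gap in the soundness direction, and the patch you sketch does not close it. Consider the simplest attack: on input $x\notin L$, Merlin sends a classical string $a$ for which $A(\,\cdot\,,a,\sigma)$ rejects \emph{every} state $\sigma$ with high probability, on every instance. Then your promise problem $\Pi$ is vacuously a NO-instance (there is no $(y,\sigma)$ with $A$ accepting), so Arthur accepts. Your proposed consistency check---searching for $w,\sigma_1,\sigma_2$ that both pass $A$ yet yield contradictory $B$-outputs---cannot detect this, since nothing passes $A$ at all. Likewise, ``cross-checking $B$'s accept outputs against NP witnesses'' is inert here, because $B$ is never invoked with a state that clears $A$. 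More generally, a malicious $a$ need not create any internal inconsistency in the $(A,B)$ pair; it can simply make the pair uninformative or uniformly biased, and your checks do not force the claimed ``accept'' verdicts to be grounded in actual satisfiability.

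The paper closes this gap with two ideas that are absent from your plan. First, it uses the stronger characterization $\mathsf{BQP/qpoly}=\mathsf{YQP}^{\ast}\mathsf{/poly}$, in which the verifier $A$ is \emph{input-independent}; this lets Arthur run $A(a,\rho)$ once on Merlin's actual state $\rho$, so Merlin cannot get away with an $a$ that rejects everything. (Your $A$ takes $\phi(x,y)$ as an argument, so Arthur would have to test it on exponentially many $y$.) Second---and this is the classical Karp--Lipton move you are missing---the paper uses $\mathsf{NP}$ self-reducibility so that the advice is used not merely to decide satisfiability but to \emph{produce} a purported satisfying assignment $z$, which is then checked directly via the polynomial-time predicate $R(x,y,z)$. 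With this, soundness for $x\notin L$ becomes unconditional in $a$: choosing $y^{\ast}$ with no valid $z$, the direct check $R(x,y^{\ast},z)$ fails with probability $1$ regardless of $a$ and $\sigma$. The paper packages these two tests into a single routine $U(a,\sigma,x,y)$ that flips a fair coin between ``accept iff $A(a,\sigma)$ rejects'' and ``accept iff $R(x,y,z)$ holds for the self-reduced $z$''; the $\mathsf{P{}romiseQMA}$ oracle is then asked whether some $(\sigma,y)$ drives $U$'s acceptance below a threshold, and crucially the oracle call involves only the classical data $(a,x)$, not Merlin's quantum state. Your proposal sends $\rho^{\otimes K}$ but never uses it, and never invokes self-reducibility; both are essential to make the argument go through.
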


In this result we use the model of oracle access to a promise problem which allows the algorithm to query inputs not obeying the promise; in such cases the allows the oracle to answer such queries arbitrarily.  This model is fairly standard, see e.g.~\cite{BF99}.

Previously, Aaronson \cite{aar:subtle} showed that if $\mathsf{PP}%
\subset\mathsf{BQP/qpoly}$, then the counting hierarchy $\mathsf{CH}%
$\ collapses. \ However, he had been unable to show that $\mathsf{NP}%
\subset\mathsf{BQP/qpoly}$\ would have unlikely consequences in the uniform world.

\begin{proof}[Proof of Theorem~\ref{qkl}]
By Theorem \ref{yqpthm}, the hypothesis implies $\mathsf{NP}\subset
\mathsf{YQP/poly}=\mathsf{YQP^{\mathsf{\ast}}/poly}$. \ So let $Y$ be a
$\mathsf{YQP^{\mathsf{\ast}}/poly}$ algorithm for $SAT$, which takes an
input $x \in \{0, 1\}^n$ (representing a CNF formula), a trusted classical nonuniform advice string $a \in \{0, 1\}^{\ell(n)}$ for some $\ell(n) \leq \poly(n)$, and an untrusted advice state $\rho$ on $q(n)\leq \poly(n)$ qubits.  By inspecting the proof of Theorem~\ref{new_generalized}, we see that the completeness and soundness parameters $.9, .1$ in Definition~\ref{def:yqp} can easily be strengthened to $(1 - e^{-n}, n^{-100})$; we assume that this holds for $Y$. 
Let $\{a_n\}_{n > 0}$\ be the associated family of classical advice strings of length $\ell(n)$. 

Now consider an arbitrary language $L\in\mathsf{\Pi}_{\mathsf{2}}^{\mathsf{P}%
}$.  As such, $L$ is defined by a deterministic polynomial-time predicate $R\left(  x,y,z\right)
$: 
\[ x\in L \ \Longleftrightarrow \ \forall y\exists z: \ R\left(  x,y,z\right) =1   \ , \] 
where we expect $|y| = |z| = p(n)$ for some $p(n) \leq \poly(n)$ on inputs $x \in \{0, 1\}^n$.

Using $Y$ and Cook's theorem applied to the predicate $R$, we can create a polynomial-time input-oblivious advice-testing algorithm $Y'\left(x, y, \rho, a\right)$ producing output bits $b_{\adv}, b_{\out}$ (we use the notation $Y'_{\adv}, Y'_{\out}\left(x, y, \rho, a\right)$ to denote the values of these two bits in an execution of $Y'$ on $\left(x, y, \rho, a\right)$, noting that $\bE[b_{\adv}]$ depends only on $\rho, a$), which has the following properties:

\begin{itemize}
\item[(P1)] There exists a $\rho\ $such that $\bE\left[  Y'_{\adv}\left(x, y, \rho, a_n\right)\right]  \geq 1 - 2^{-n}$ for all $x, y$.

\item[(P2)] For any $\rho$, if $\bE\left[  Y'_{\adv}\left(x, y, \rho, a_n \right)\right]  \geq  n^{-3} $, we have $\bE\left[ Y'_{\out}\left(x, y, \rho, a_n\right) |  b_{\adv} = 1  \right]  \geq 1 - 1/(n \cdot p(n))$ if there exists a $z$\ such
that $R\left(  x,y,z\right)  $ holds, and $\bE\left[  Y'_{\out}\left(x, y, \rho, a_n\right) | b_{\adv} = 1 \right]  \leq 1/(n \cdot p(n))$\ otherwise.
\end{itemize}

Using the standard search-to-decision reduction for $SAT$, we can then
strengthen property (P2) to the following, for some polynomial-time quantum algorithm
$Y''\left(x, y,   \rho, a\right)$ outputting a bit $b_{\adv}$ (denoted $Y''_{\adv}(x, y, \rho, a )$) and a string $z \in \{0, 1\}^{p(n)}$.\footnote{This reduction requires repeated use of the advice state $\rho$ to obtain the bits of a lexicographically first such $z$; these measurements may alter $\rho$.  This is not a serious obstacle, however, by the principle that a measurement whose outcome is nearly information-theoretically certain has small expected effect on the measured state.}  Here as before, the bit $b_{\adv}$ has expectation determined by $\rho, a$ alone.  The algorithm $Y''$ satisfies:

\begin{itemize}
\item[(P1')] There exists a $\rho\ $such that $\bE\left[  Y''_{\adv}\left(x, y, \rho, a_n\right)\right]  \geq 1 - 2^{-n}$ for all $x, y$.
\item[(P2')] For all $x, y$ pairs for which some $z$ satisfies $R(x, y, z) =1$, and for all states $\rho$, we have the following. If $\bE\left[ Y''_{\adv}\left(x, y, \rho, a_n \right) \right]  \geq  .01$, and if we condition on $[b_{\adv} = 1]$ in this execution, then with probability at least $.99$, $Y''\left(x, y, \rho, a_n \right)$\ outputs a $z$ such that $R(x, y, z) = 1$.
\end{itemize}

Now let $U\left(  x,y, \rho, a\right)  $\ be a quantum algorithm outputting a single bit, and expecting $y, \rho, a$ of size determined by $n = |x|$ exactly as with $Y''$. The algorithm $U$ executes $Y''\left(x, y, \rho, a \right)$ and does one
of the following, both with equal probability:

\begin{itemize}
\item Outputs $\neg b_{\adv}$;

\item Outputs 1 if and only if the string $z$ outputted by $Y''$ satisfies $R\left(  x,y,z \right) =1  $.
\end{itemize}

$U$ is polynomial-time, and we claim that

\begin{itemize}
\item[(A1)] $x\in L\Longrightarrow\exists a,\rho: \ \left[  \bE\left[
Y''_{\adv}\left( x, y,  \rho, a\right)  \right]  \geq 9/10 \right]
\wedge\left[  \forall\sigma,y:  \bE\left[  U\left(  a,\sigma,x,y\right)  \right]  \geq 1/5 \right]  $.

\item[(A2)] $x\notin L\Longrightarrow\forall a,\rho: \ \left[  \bE\left[
Y''_{\adv}\left( x, y,  \rho, a\right)  \right]  \leq 2/3\right]
\vee\left[  \exists\sigma,y~\bE\left[  U\left(  a,\sigma,x,y\right)  \right]  \leq1/6\right]  $.
\end{itemize}

With reference to the machine $U$, we define the promise problem $\Pi = (\Pi_{yes}, \Pi_{no})$ by
\begin{align*}   \Pi_{yes} \ &= \ \{( x, a ) \in \{0, 1\}^{n + \ell(n)} : \exists \rho, y \text{ such that } \bE[U(x, y, \rho, a)] \leq 1/6\} \ ,  \\
   \Pi_{no} \ &= \ \{( x, a ) \in \{0, 1\}^{n + \ell(n)} : \forall \rho, y \text{ we have }  \bE[U(x, y, \rho, a)] \geq 1/5\} \ ,  
   \end{align*}
and note that $\Pi \in \mathsf{P{}romiseQMA}$ by standard techniques.
Also, it is clear that (A1) and (A2) together imply $L\in \mathsf{QMA}^{\Pi} \subseteq \mathsf{QMA}^{\mathsf{P{}romiseQMA}}$.
(The crucial point here is that $U$\ does \textit{not} take the
existentially-quantified advice state $\rho$\ as input in our query to $\Pi$---and therefore, the
$\mathsf{QMA}$\ machine does not need to pass a quantum state to the
$\mathsf{P{}romiseQMA}$\ oracle, which would be illegal. \ This is why we
needed the\ $\mathsf{BQP/qpoly}=\mathsf{YQP^{\ast}/poly}$\ result here.  Note also that in the case where $x \in L$, our claim gives no control over the relevant acceptance probabilities of $Q_1$ and $U$ for settings to $a$ other than the ``correct'' setting; this necessitates the use of a $\mathsf{P{}romiseQMA}$\ oracle---which is allowed to behave arbitrarily on inputs not obeying the promise $\Pi$---rather than a $\mathsf{QMA}$ oracle.)

We now prove (A1) and (A2). \ First suppose $x\in L$. \ Then there exists an
advice string $a_n$\ with the following properties:

\begin{itemize}
\item[(B1)] There exists a $\rho_n$\ such that $\bE\left[  Y''_{\adv}\left(x, y,
\rho_n, a_n\right) \right]  \geq 9/10$ for all $y$. \ (By (P1').)

\item[(B2)] For all $\sigma,y$\ pairs, either $\bE\left[  Y''_{\adv}\left(x, y,
\sigma, a_n\right) \right]  \leq 1/2$, or for the string $z$ outputted by this execution of $Y''$, we have $\Pr\left[ R\left(
x,y, z \right)  \text{ holds}\right]
\geq (.5) \cdot (.99) > 2/5$. \ (By (P2') and the assumption $x\in L$.)
\end{itemize}
By (B2), we have $\forall\sigma,y~\bE\left[  U\left(  a,\sigma,x,y\right)
\right]   \geq 1/5$. \ This proves (A1).

Next suppose $x\notin L$. \ Then given an advice string $a$, suppose there
exists a pair $\rho, y$\ such that $\bE\left[  Y''_{\adv}\left(x, y, \rho, a\right)  \right]  >2/3$.  (Then this relation holds for all $y$, since $\bE[b_{\adv}]$ is a function of $\rho, a$ alone.) \ Set $\sigma:=\rho$, and choose a $y$ for which there
is \textit{no} $z$ such that $R\left(  x,y,z\right)  $\ holds. \ Then for the random string $z$ as produced by $Y''(x, y, \sigma, a)$ we have
$\Pr\left[  R\left(  x,y,z \right)= 1\right]  =0$, since $x\notin L$.

It follows from the above that $\Pr\left[  U\left(  a,\sigma,x,y\right)  \text{
accepts}\right]  < \frac{1}{2}(1/3 + 0) = 1/6$.\ \ This proves (A2), and completes the proof of the Theorem.
\end{proof}

\section{Open Problems\label{OPEN}}

%; we mention one application (to ``untrusted oracles'') in Appendix \ref{UORACLE}

One open problem is simply to find more applications of the
majority-certificates lemma, which seems likely to have uses outside of
quantum complexity theory. Can we improve the parameters of the
majority-certificates lemma (the size of the certificates or the number
$O\left(  n\right)  $\ of certificates), or alternatively, show that the
current parameters are essentially optimal? \ Also, can we prove the
real-valued majority-certificates lemma with an error tolerance $\alpha$\ that
depends only on the desired accuracy $\varepsilon$ of the final approximation,
not on $n$ or the fat-shattering dimension of $S$?

On the quantum complexity side, we mention several questions. \ First, in
Theorem \ref{localthm}, is the polynomial blowup in the number of qubits
unavoidable? \ Could one hope for a way to simulate an $n$-qubit advice state
by the ground state of $n$-qubit local Hamiltonian, or would that have
implausible complexity consequences? \ Second, can we use the ideas in this
paper to prove any upper bound on the class $\mathsf{QMA/qpoly}$ better than
the $\mathsf{PSPACE/poly}$\ upper bound shown by Aaronson \cite{aar:qmaqpoly}?
\ Third, if $\mathsf{NP}\subset\mathsf{BQP/qpoly}$, then does $\mathsf{QMA}%
^{\mathsf{P{}romiseQMA}}$\ contain not just $\mathsf{\Pi}_{\mathsf{2}%
}^{\mathsf{P}}$\ but the entire polynomial hierarchy? Finally, is
$\mathsf{BQP/qpoly} = \mathsf{BQP/poly}$?

\section{Acknowledgments}

%New edit here
We thank Sanjeev Arora, Kai-Min Chung, Avinatan Hassidim, Ashwin Nayak, Roberto Oliveira, Thomas Vidick, John
Watrous, and Colin Zheng for helpful comments and discussions, and the
anonymous reviewers for their comments. We are particularly grateful to a journal reviewer who pointed out the need for further analysis of the Local Hamiltonian reductions used in our work.

%
%\bibliographystyle{plain}
%\bibliography{thesis-limadv}

\appendix

\section{Appendix: Untrusted Oracles\label{UORACLE}}

In this appendix, we give an interesting consequence of the
majority-certificates lemma for classical complexity theory.

When we give a machine an oracle, normally we assume the oracle can be
trusted. \ But it is also natural to consider \textit{untrusted} oracles,
which are nevertheless restricted in their computational power. \ We formalize
this notion as follows:

\begin{definition}
[Untrusted Oracles]Let $\mathcal{C}$\ and $\mathcal{D}$\ be complexity
classes. \ Also, given a family $a=\left\{  a_{n}\right\}  _{n\geq1}$\ of
$p\left(  n\right)  $-bit advice strings and a machine $V$, let $V\left[
a\right]  $\ be the language decided by $V$ given $a$ as advice.\ \ Then
$\mathcal{C}^{\mathsf{Untrusted}\text{-}\mathcal{D}}$\ is the class of
languages $L$ for which there exists a $\mathcal{C}$\ machine $U$, a
$\mathcal{D}$\ machine $V$, and a polynomial $p$ such that for all $n$:

\begin{enumerate}
\item[(i)] There exist $p\left(  n\right)  $-bit\ advice strings\ $a_{1}%
,\ldots,a_{m}$\ such that $U^{V\left[  a_{1}\right]  ,\ldots,V\left[
a_{m}\right]  }$\ decides $L$.

\item[(ii)] $U^{V\left[  a_{1}\right]  ,\ldots,V\left[  a_{m}\right]  }\left(
x\right)  $\ outputs either $L\left(  x\right)  $\ or ``FAIL,'' for all inputs
$x\in\left\{  0,1\right\}  ^{n}$ and all $p\left(  n\right)  $-bit\ advice
strings $a_{1},\ldots,a_{m}$.
\end{enumerate}
\end{definition}

We can now state the consequence.

\begin{theorem}
\label{untrustedthm}Let $\mathcal{C}$ be a uniform syntactic complexity class,
such as $\mathsf{P}$, $\mathsf{NP}$, or $\mathsf{EXP}$. \ Then $\mathcal{C}%
\mathsf{/poly}\subseteq\left(  \mathsf{AC}^{0}\right)  ^{\mathsf{Untrusted}%
\text{-}\mathcal{C}}$.
\end{theorem}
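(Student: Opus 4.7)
The plan is to apply the Majority-Certificates Lemma (Lemma \ref{majcer}) to spread the $\mathcal{C}/\mathsf{poly}$ advice across $m=O(n)$ untrusted $\mathcal{C}$-oracles, each pinned down by a short isolating certificate. Let $L\in\mathcal{C}/\mathsf{poly}$ be decided by a uniform $\mathcal{C}$-machine $M$ on $p(n)$-bit advice $a_n^{\ast}$, and set $S_n:=\{M(\cdot,a):a\in\{0,1\}^{p(n)}\}$, so $|S_n|\le 2^{p(n)}$. Applying Lemma \ref{majcer} to $S_n$ with target $L_n$ yields certificates $C_1,\ldots,C_m$ of size $k=O(p(n))$ and advice strings $a_1,\ldots,a_m$ satisfying $S_n[C_i]=\{M(\cdot,a_i)\}$ and $\mathrm{MAJ}_{i\in[m]} M(x,a_i)=L(x)$ for every $x\in\{0,1\}^n$.

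I would design the oracle $V$ as a uniform $\mathcal{C}$-machine that, given advice $\alpha=(a,C)$ of size $\mathrm{poly}(n)$, supports three families of queries: read the $j$-th bit of the encoding of $C$; return $M(y,a)$ on a specified $y$; and perform an ``isolation audit'' that returns $1$ iff some $a'\neq a$ is consistent with $C$ in $S_n$. The audit is an existential quantifier over $\{0,1\}^{p(n)}$ and so lies inside $\mathcal{C}$ whenever $\mathcal{C}\supseteq\mathsf{NP}$; for $\mathcal{C}=\mathsf{P}$ we instead embed an explicit isolation witness into $\alpha$ and have $V$ merely verify it. The honest advice to the $i$-th oracle is $\alpha_i^{\ast}:=(a_i,C_i)$. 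The $\mathsf{AC}^0$ verifier $U$ proceeds in constant depth and polynomial size, all in parallel over $i\in[m]$: it reads off the claimed $C_i'$ from the $O(k)$ certificate-bit queries, verifies $V[\alpha_i](y)=b$ for every $(y,b)\in C_i'$, verifies the isolation audit returns $0$, and reads $\beta_i:=V[\alpha_i](x)$. If any check fails, $U$ outputs \texttt{FAIL}; otherwise it outputs the Ajtai-style $\mathsf{AC}^0$ approximate-majority of $(\beta_1,\ldots,\beta_m)$, which is well-defined in $\mathsf{AC}^0$ whenever the true margin is a positive constant.

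In the honest case all checks pass by construction, and the Chernoff guarantee built into the proof of Lemma \ref{majcer} (each draw from Alice's winning mixed strategy agrees with $f^{\ast}$ with probability at least $9/10$ at any fixed $x$) ensures that at every $x$ an $\Omega(1)$ fraction of the $\beta_i$'s equal $L(x)$, so approximate-majority returns $L(x)$. The main obstacle will be soundness: once the consistency and audit checks pass, each oracle is forced to commit to the unique function $M(\cdot,a_i')\in S_n$ isolated by its claimed $C_i'$, but the adversary still freely chooses which isolating pair $(a_i',C_i')$ to supply for each $i$, and prima facie this leaves room to tilt $\mathrm{MAJ}_i M(x,a_i')$ away from $L(x)$ at some $x$. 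Closing this gap---guaranteeing that $U$ outputs only $L(x)$ or \texttt{FAIL} for every adversarial assignment of the $\alpha_i$'s---is the technical heart of the argument: one must leverage the Minimax/Chernoff structure inside Lemma \ref{majcer}, which forces a \emph{uniform} $9/10$ majority margin over every input, and argue that this margin cannot be simultaneously reproduced by any perverse $m$-tuple of isolating certificates that also survives $U$'s local consistency and audit checks.
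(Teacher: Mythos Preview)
Your soundness worry is exactly right, and it is fatal to the approach as written. Once you let the untrusted oracle supply its own certificate $C_i'$, the adversary is free to pick \emph{any} size-$k$ isolating certificate for each $i$, not the particular $C_i$ produced by Lemma~\ref{majcer}. The majority guarantee of Lemma~\ref{majcer} holds only for the specific tuple $(C_1,\ldots,C_m)$ it outputs; there is no uniform margin over all $m$-tuples of isolating certificates. Concretely, take $S_n$ to consist of the identically-zero function together with all point functions $f_y$, with $L_n\equiv 0$. Each $f_y$ is isolated by the single constraint $f(y)=1$, so the adversary can hand you the \emph{same} certificate $\{(y,1)\}$ in all $m$ slots. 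Your consistency checks and your isolation audit both pass, yet the majority is $f_y$, which is wrong at $y$. No appeal to the Minimax/Chernoff structure inside Lemma~\ref{majcer} can rescue this, because that structure fixes the certificates, and you have let the adversary unfix them.

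The paper's proof sidesteps the whole issue by exploiting the nonuniformity of $\mathsf{AC}^0$: it \emph{hardwires} the specific certificates $C_1,\ldots,C_m$ from Lemma~\ref{majcer} into the circuit $U$. The untrusted advice to oracle $i$ is then just a string $a_i$; the circuit queries $V[a_i]$ at the hardwired points of $C_i$, outputs \texttt{FAIL} if any disagrees, and otherwise outputs the $\mathsf{AC}^0$ approximate-majority of $V[a_1](x),\ldots,V[a_m](x)$. Soundness is now immediate: passing the check forces $f_{a_i}\in S_n[C_i]=\{f_i\}$, so the majority is exactly $L_n$ by the lemma (strengthened, as you note, to a $2/3$ margin so that approximate-majority suffices). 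No isolation audit is needed at all, which also removes your secondary difficulty of certifying isolation inside $\mathcal{C}$ when $\mathcal{C}=\mathsf{P}$.
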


\begin{proof}
Let $V$\ be a $\mathcal{C}\mathsf{/poly}$\ machine that uses a family
$a=\left\{  a_{n}\right\}  _{n\geq1}$\ of $p\left(  n\right)  $-bit advice
strings. \ Fix an input length $n$, and let $f_{w}\left(  x\right)  $\ be the
output of $V$ on input $x$\ and advice string $w\in\left\{  0,1\right\}
^{p\left(  n\right)  }$. \ Then $S=\left\{  f_{w}\right\}  _{w\in\left\{
0,1\right\}  ^{p\left(  n\right)  }}$\ is a Boolean concept class of size
$\left\vert S\right\vert \leq2^{\operatorname*{poly}\left(  n\right)  }$. \ So
by Lemma \ref{majcer}, there exist $m=O\left(  n\right)  $\ polynomial-size
certificates $C_{1},\ldots,C_{m}$, which isolate functions $f_{1},\ldots
,f_{m}\in S$\ respectively such that $\operatorname*{MAJ}\left(  f_{1}%
,\ldots,f_{m}\right)  =f_{a_{n}}$. \ Now, we can easily modify the proof of
Lemma \ref{majcer} to ensure not only that $\operatorname*{MAJ}\left(
f_{1},\ldots,f_{m}\right)  =f^{\ast}$, but also that%
\begin{align*}
f_{a_{n}}\left(  x\right)   &  =1\Longrightarrow f_{1}\left(  x\right)
+\cdots+f_{m}\left(  x\right)  \geq\frac{2m}{3},\\
f_{a_{n}}\left(  x\right)   &  =0\Longrightarrow f_{1}\left(  x\right)
+\cdots+f_{m}\left(  x\right)  \leq\frac{m}{3}%
\end{align*}
for all inputs $x$. \ To do so, we simply take $m=O\left(  n\right)
$\ sufficiently large and redo the Chernoff bound. \ Furthermore, it is known
that \textsc{Approximate Majority}---that is, \textsc{Majority} where the
fraction of $1$'s in the input is bounded away from $1/2$ by a constant---can
be computed by polynomial-size depth-$3$ circuits, so in particular, in
$\mathsf{AC}^{0}$ (see Viola \cite{viola:maj}\ for example).

By hardwiring the certificates $C_{1},\ldots,C_{m}$\ into the $\mathsf{AC}%
^{0}$\ circuit, we can produce an $\mathsf{AC}^{0}$\ circuit\ that first
checks whether $f_{i}$\ is consistent with $C_{i}$\ for all $i\in\left[
m\right]  $, outputs ``FAIL''\ if not, and otherwise outputs $U^{f_{1}%
,\ldots,f_{m}}\left(  x\right)  =f_{a_{n}}\left(  x\right)  $.
\end{proof}

If $\mathcal{C}$\ is a semantic complexity class, such as $\mathsf{BPP}$ or
$\mathsf{UP}$, the difficulty is that there might be a $\mathcal{C}%
\mathsf{/poly}$\ machine $M$\ and advice string $w$ for which the function
$f_{w}$\ is undefined (since $M$ need not decide a language for every $w$).
\ However, if we force the $\mathsf{Untrusted-}\mathcal{C}$\ oracle to
restrict itself to $w$\ for which $f_{w}$\ \textit{is} defined,\ then Theorem
\ref{untrustedthm}\ goes through for semantic classes as well. \ Using the
\textit{real}-valued majority-certificates lemma that we develop in Section
\ref{REALMAJ},\ it is possible to remove the assumption that $f_{w}$\ is
defined\ for all $w$ for semantic classes such as $\mathsf{BPP}$.

\section{Appendix: Isolatability and Learnability\label{ISOLNLEARN}}

The following definition abstracts a key notion from the majority-certificates
lemma. \ 

\begin{definition}
[Majority-Isolatability]A Boolean concept class $S$ \textit{is
majority-isolatable if for every }$f\in S$\textit{, there exist }%
$m=\operatorname*{poly}\left(  n\right)  $\textit{\ certificates }%
$C_{1},\ldots,C_{m}$\textit{, each of size }$\operatorname*{poly}\left(
n\right)  $\textit{, such that}

\begin{enumerate}
\item[(i)] $S\left[  C_{i}\right]  $ is nonempty for all $i\in\left[
m\right]  $, and

\item[(ii)] if $f_{i}\in S\left[  C_{i}\right]  $\ for all $i\in\left[
m\right]  $, then $\operatorname*{MAJ}\left(  f_{1},\ldots,f_{m}\right)  =f$,
where $\operatorname*{MAJ}$\ denotes pointwise majority.
\end{enumerate}
\end{definition}

We now show that the majority-isolatability of a Boolean concept class $S$ is
equivalent to a large number of other properties of $S$---including having
singly-exponential cardinality, having polynomial VC-dimension, being
PAC-learnable using $\operatorname*{poly}\left(  n\right)  $\ samples, and
being ``winnowable.''\ \ While we do not need this equivalence theorem elsewhere
in the paper, we feel it has independent interest. \ The equivalence
theorem we prove is easily seen to break down for concept classes with infinite input domains.

\begin{definition}
[VC-dimension]We say a Boolean concept class $S$ \textit{shatters} the set
$A\subseteq\left\{  0,1\right\}  ^{n}$\ if for all $2^{\left\vert A\right\vert
}$\ functions $g:A\rightarrow\left\{  0,1\right\}  $, there exists an $f\in
S$\ whose restriction to $A$ equals $g$. \ Then the VC-dimension of $S$, or
$\operatorname*{VCdim}\left(  S\right)  $, is the size of the largest set
shattered by $S$.
\end{definition}

Given a distribution $\mathcal{D}$ over $\left\{  0,1\right\}  ^{n}$, we say
the Boolean functions $f,g:\left\{  0,1\right\}  ^{n}\rightarrow\left\{
0,1\right\}  $\ are $\left(  \mathcal{D},\varepsilon\right)  $\textit{-close}%
\ if%
\[
\Pr_{x\sim\mathcal{D}}\left[  g\left(  x\right)  =f\left(  x\right)  \right]
\geq1-\varepsilon.
\]

\begin{definition}
[Learnability]$S$ is learnable if for all $f\in S$, distributions
$\mathcal{D}$, and $\varepsilon,\delta>0$, there exists an
$m=\operatorname*{poly}\left(  n,1/\varepsilon,\log1/\delta\right)  $\ such
that with probability at least $1-\delta$\ over sample points $x_{1}%
,\ldots,x_{m}$\ drawn independently from $\mathcal{D}$, every $g\in
S$\ satisfying $g\left(  x_{1}\right)  =f\left(  x_{1}\right)  ,\ldots
,g\left(  x_{m}\right)  =f\left(  x_{m}\right)  $\ is $\left(  \mathcal{D}%
,\varepsilon\right)  $-close to $f$.
\end{definition}

We can also define ``approximability,''\ which is like learnability except that
the choice of training examples can be nondeterministic:

\begin{definition}
[Approximability]$S$ is approximable if for all $f\in S$ and distributions
$\mathcal{D}$, there exists a certificate $C$\ of size $\operatorname*{poly}%
\left(  n,1/\varepsilon\right)  $\ such that every $g\in S\left[  C\right]  $
is $\left(  \mathcal{D},\varepsilon\right)  $-close to $f$.
\end{definition}

Finally, let us call attention to a notion that implicitly played a major role
in the proof of Lemma \ref{majcer}.

\begin{definition}
[Winnowability]$S$ is winnowable if for all nonempty subsets $S^{\prime
}\subseteq S$, there exists a certificate $C$\ of size $\operatorname*{poly}%
\left(  n\right)  $\ such that $\left\vert S^{\prime}\left[  C\right]
\right\vert =1$.
\end{definition}

We can now prove the equivalence theorem.

\begin{theorem}
\label{equivthm}Let $S$ be a Boolean concept class. \ Then $\left\vert
S\right\vert \leq2^{\operatorname*{poly}\left(  n\right)  }$ iff
$\operatorname*{VCdim}\left(  S\right)  \leq\operatorname*{poly}\left(
n\right)  $ iff $S$ is learnable iff $S$ is approximable iff $S$ is
majority-isolatable iff $S$ is winnowable.
\end{theorem}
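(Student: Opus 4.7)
The plan is to establish the six-way equivalence by closing a cycle of implications among the properties, which I label $(a)\ |S|\leq 2^{\operatorname*{poly}(n)}$, $(b)\ \operatorname*{VCdim}(S)\leq\operatorname*{poly}(n)$, $(c)$ learnable, $(d)$ approximable, $(e)$ majority-isolatable, and $(f)$ winnowable. The baseline piece $(a)\iff(b)$ is essentially free: $\operatorname*{VCdim}(S)\leq\log_2|S|$ because shattering $d$ points requires $2^d$ distinct restrictions, and conversely the Sauer--Shelah lemma gives $|S|\leq\sum_{i=0}^{d}\binom{2^n}{i}\leq 2^{O(nd)}$.

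For the PAC-learning block, I would prove $(b)\Rightarrow(c)\Rightarrow(d)\Rightarrow(b)$. The first step is the classical uniform-convergence theorem for VC classes, with sample complexity $m=O((d\log(1/\varepsilon)+\log(1/\delta))/\varepsilon)$. For $(c)\Rightarrow(d)$, I would apply learnability with $\delta=1/2$; by the probabilistic method there must exist a specific sample whose labels (given by $f^\ast$) form a certificate of size $\operatorname*{poly}(n,1/\varepsilon)$ with the required approximability property. The closing step $(d)\Rightarrow(b)$ is the most subtle: given a set $A$ of size $d$ shattered by $S$, take $\mathcal{D}$ uniform on $A$ and any $f=f_g\in S$, and apply approximability with $\varepsilon=1/4$ to get a certificate $C$ of size $k=\operatorname*{poly}(n)$. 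Because $A$ is shattered, every bit-assignment on $A\setminus C$ is realized by some $f_{g'}\in S[C]$; the $(\mathcal{D},1/4)$-closeness bound therefore forces $|A\setminus C|/d\leq 1/4$, giving $d\leq 4k/3=\operatorname*{poly}(n)$.

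It remains to plug $(e)$ and $(f)$ into the cycle. The direction $(a)\Rightarrow(e)$ is Lemma \ref{majcer} itself, which already produces $m=O(n)$ certificates of size $O(\log|S|)=\operatorname*{poly}(n)$ making each $S[C_i]$ a singleton---a stronger form of majority-isolatability. The reverse $(e)\Rightarrow(a)$ is a counting argument: the $m$-tuple of certificates encoding $f$ determines $f$ uniquely (since it pins down $\operatorname*{MAJ}(f_1,\ldots,f_m)=f$), and there are only $2^{\operatorname*{poly}(n)}$ such $m$-tuples of $\operatorname*{poly}(n)$-size certificates. For $(a)\Rightarrow(f)$ I would run the halving procedure from the second stage of Claim \ref{heurpart}: starting from any nonempty $S'\subseteq S$ with $C:=\emptyset$, repeatedly find an input $x$ distinguishing two members of $S'[C]$ and set $C(x):=b$ to whichever bit shrinks $S'[C]$ by at least a factor of two, terminating after $\log_2|S'|=\operatorname*{poly}(n)$ steps. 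For $(f)\Rightarrow(a)$, the key trick is a peeling argument: iteratively apply winnowability to $S_i:=S\setminus\{f_1,\ldots,f_{i-1}\}$ to get certificates $C_i$ with $S_i[C_i]=\{f_i\}$; these $C_i$ are pairwise distinct (if $C_i=C_j$ with $i<j$, then $S_j[C_i]\subseteq S_i[C_i]=\{f_i\}$ contradicts $S_j[C_j]=\{f_j\}\neq\{f_i\}$), so $|S|$ is bounded by the number of $\operatorname*{poly}(n)$-size certificates, namely $\binom{2^n}{\operatorname*{poly}(n)}\cdot 2^{\operatorname*{poly}(n)}=2^{\operatorname*{poly}(n)}$.

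The hardest step I expect to navigate is $(d)\Rightarrow(b)$: approximability is unusually permissive because the certificate can depend on both $f$ and $\mathcal{D}$, so one has to engineer a specific $(\mathcal{D},f)$ pair that forces the certificate to expose the combinatorial structure of a shattered set. The constants (choice of $\varepsilon$ and the slack in $|A\setminus C|\leq\varepsilon d$) have to line up so that the resulting upper bound on $d$ is actually less than $d$ itself, which is the kind of step where the bookkeeping can easily go wrong.
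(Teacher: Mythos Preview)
Your overall plan tracks the paper's proof closely: the paper also gets $(a)\Leftrightarrow(b)$ from the trivial bound and Sauer's lemma, closes the learning loop via Valiant and a counting argument cited to Blumer et al., and handles $(e)$ and $(f)$ by direct reductions to and from $(a)$ using Lemma~\ref{majcer} and the halving procedure. Your peeling argument for $(f)\Rightarrow(a)$ is in fact cleaner than the paper's: the paper argues the contrapositive by iteratively deleting from $S$ every function isolated by a certificate of size $\leq t(n)/(2n+2)$ and showing the residual set is nonempty yet contains no isolatable function. Both arguments rest on the same count of small certificates, but your direct injection $f_i\mapsto C_i$ is more transparent.

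There is, however, a genuine gap in your $(d)\Rightarrow(b)$ step, and you were right to flag it as the delicate one. You fix a single target $f$, obtain a certificate $C$, and then assert that because $A$ is shattered, every bit-assignment on $A\setminus\operatorname{dom}(C)$ is realized by some $g'\in S[C]$. Shattering only guarantees a $g'\in S$ matching a prescribed pattern on $A$; it gives you no control over $g'$ on $\operatorname{dom}(C)\setminus A$, so there is no reason $g'$ should lie in $S[C]$. Consequently the inequality $|A\setminus\operatorname{dom}(C)|\leq d/4$ is unjustified, and the bound $d\leq 4k/3$ does not follow.

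The standard repair (which is what the Blumer et al.\ citation is covering) is to range over \emph{all} $2^d$ targets rather than a single one. For each $b\in\{0,1\}^A$ pick $g_b\in S$ realizing $b$ on $A$ and apply approximability to $g_b$ with $\mathcal{D}$ uniform on $A$, obtaining a certificate $C_b$ of size $\leq k$. If $C_b=C_{b'}$ then $g_{b'}\in S[C_{b'}]=S[C_b]$, so $b'$ lies within Hamming distance $\varepsilon d$ of $b$; hence each fiber of $b\mapsto C_b$ has size at most $2^{H(\varepsilon)d}$. Since there are at most $2^{O(nk)}$ certificates of size $\leq k$, one gets $2^d\leq 2^{H(\varepsilon)d}\cdot 2^{O(nk)}$ and thus $d=O(nk)=\operatorname{poly}(n)$. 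Your single-target shortcut tries to avoid this pigeonhole, but it does not close.
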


\begin{proof}
$\left\vert S\right\vert \leq2^{\operatorname*{poly}\left(  n\right)
}\Longrightarrow\operatorname*{VCdim}\left(  S\right)  \leq
\operatorname*{poly}\left(  n\right)  $ follows from the trivial upper bound
$\operatorname*{VCdim}\left(  S\right)  \leq\log_{2}\left\vert S\right\vert $.

$\operatorname*{VCdim}\left(  S\right)  \leq\operatorname*{poly}\left(
n\right)  \Longrightarrow\left\vert S\right\vert \leq2^{\operatorname*{poly}%
\left(  n\right)  }$ is Sauer's Lemma \cite{sauer}, which implies the relation
$\left\vert S\right\vert \leq2^{n\operatorname*{VCdim}\left(  S\right)  }$.

$\left\vert S\right\vert \leq2^{\operatorname*{poly}\left(  n\right)
}\Longrightarrow$\textbf{Learnable} was proved by Valiant \cite{valiant:pac}.

\textbf{Learnable}$\Longrightarrow$\textbf{Approximable} is immediate, and
\textbf{Approximable}$\Longrightarrow\operatorname*{VCdim}\left(  S\right)
\leq\operatorname*{poly}\left(  n\right)  $ follows from a counting argument
(see Blumer et al.\ \cite{behw}\ for details).

$\left\vert S\right\vert \leq2^{\operatorname*{poly}\left(  n\right)
}\Longrightarrow$\textbf{Majority-Isolatable} was the content of Lemma
\ref{majcer}.

\textbf{Majority-Isolatable}$\Longrightarrow\left\vert S\right\vert
\leq2^{\operatorname*{poly}\left(  n\right)  }$\ follows from another counting
argument: if $S$ is majority-isolatable, then every $f\in S$\ is uniquely
determined by $\operatorname*{poly}\left(  n\right)  $ certificates
$C_{1},\ldots,C_{m}$, each of which can be specified using
$\operatorname*{poly}\left(  n\right)  $ bits.

For $\left\vert S\right\vert \leq2^{\operatorname*{poly}\left(  n\right)
}\Longrightarrow$\textbf{Winnowable}, let $S^{\prime}\subseteq S$. \ Then as
in the proof of Lemma \ref{majcer}, we can use binary search to winnow
$S^{\prime}$\ down to a single function $f\in S^{\prime}$, which yields a
certificate of size at most $\log_{2}\left\vert S^{\prime}\right\vert \leq
\log_{2}\left\vert S\right\vert $.

For \textbf{Winnowable}$\Longrightarrow\left\vert S\right\vert \leq
2^{\operatorname*{poly}\left(  n\right)  }$, we prove the contrapositive.
\ Suppose $\left\vert S\right\vert \geq2^{t\left(  n\right)  }$ for some
superpolynomial function $t\left(  n\right)  $ (at least, for infinitely many
$n$). \ Then define a subset $S^{\prime}\subseteq S$ by the following
iterative procedure. \ Initially $S^{\prime}=S$. \ Then so long as there
exists a certificate $C$ of size at most $t\left(  n\right)  /\left(
2n+2\right)  $\ such that $\left\vert S^{\prime}\left[  C\right]  \right\vert
=1$, remove the function $f\in S^{\prime}\left[  C\right]  $ from $S^{\prime}%
$, halting only when no more such ``isolating certificates''\ can be found.

The number of certificates of size $k$ is at most $2^{\left(  n+1\right)  k}$,
and a given certificate $C$\ can only be chosen once, since thereafter
$S^{\prime}\left[  C\right]  $\ is empty. \ So when the above procedure halts,
we are left with a set $S^{\prime}$ such that $\left\vert S^{\prime
}\right\vert \geq2^{t\left(  n\right)  }-2^{\left(  n+1\right)  t\left(
n\right)  /\left(  2n+2\right)  }>0$. \ Furthermore, for every function
$f$\ remaining in $S^{\prime}$, there can be no polynomial-size certificate
$C$ such that $S^{\prime}\left[  C\right]  =\left\{  f\right\}  $---for if
there were, then we would already have eliminated $f$ in the process of
forming $S^{\prime}$. \ Hence $S$\ is not winnowable.
\end{proof}

\section{Appendix: Winnowing of p-Concept Classes\label{WINNOW}}

In this appendix, we look more closely at the problem solved by Lemma
\ref{linfwinnow} (the ``Safe Winnowing Lemma''),\ and ask in what senses it is
possible to winnow a p-concept class down to ``essentially''\ just one function.
\ The answer turns out to be interesting, even though we do not need it for
our quantum complexity applications.

We first give a definition that abstracts part of what Lemma \ref{linfwinnow}
was trying to accomplish.

\begin{definition}
[Winnowability]A p-concept class $S$ is $L_{1}$-winnowable if the following
holds. \ For all nonempty subsets $S^{\prime}\subseteq S$ and $\varepsilon>0$,
there exists a function $f\in S^{\prime}$, a set $X\subseteq\left\{
0,1\right\}  ^{n}$\ of size $\operatorname*{poly}\left(  n,1/\varepsilon
\right)  $, and a $\delta=\operatorname*{poly}\left(  \varepsilon\right)
$\ such that every $g\in S^{\prime}$ that satisfies $\Delta_{1}\left(
f,g\right)  \left[  X\right]  \leq\delta$\ also satisfies $\Delta_{\infty
}\left(  f,g\right)  \leq\varepsilon$. \ Likewise, $S$ is $L_{2}%
$-winnowable\ if $\Delta_{2}\left(  f,g\right)  \left[  X\right]  \leq\delta
$\ implies $\Delta_{\infty}\left(  f,g\right)  \leq\varepsilon$, and
$L_{\infty}$-winnowable\ if $\Delta_{\infty}\left(  f,g\right)  \left[
X\right]  \leq\delta$\ implies $\Delta_{\infty}\left(  f,g\right)
\leq\varepsilon$.
\end{definition}

Clearly $L_{\infty}$-winnowability\ implies $L_{2}$-winnowability implies
$L_{1}$-winnowability. \ The following lemma will imply that every set of
functions with a small cover is $L_{1}$-winnowable.

\begin{lemma}
[$L_{1}$-Winnowing Lemma]\label{l1lem}Let $S$ be a set of functions
$f:\left\{  0,1\right\}  ^{n}\rightarrow\left[  0,1\right]  $. \ For some
parameter $\varepsilon>0$, let $C$ be a finite $\varepsilon$-cover for $S$.
\ Then there exists an $f\in S$, as well as a subset $X\subseteq\left\{
0,1\right\}  ^{n}$\ of size $O\left(  \frac{1}{\varepsilon}\log\left\vert
C\right\vert \right)  $,\ such that every $g\in S$\ that satisfies $\Delta
_{1}\left(  f,g\right)  \left[  X\right]  \leq0.4\varepsilon$\ also satisfies
$\Delta_{\infty}\left(  f,g\right)  \leq2\varepsilon$.
\end{lemma}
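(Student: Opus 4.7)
The plan is to mimic the iterative procedure of the Safe Winnowing Lemma (Lemma~\ref{linfwinnow}), but with its per-point $L_\infty$ closeness criterion replaced by an $L_1$ budget criterion matching the conclusion. At iteration $t$ I will maintain a live subset $S_t\subseteq S$, a candidate $f_t\in S_t$, and a test set $X_t\subseteq\{0,1\}^n$. The step searches for a ``counterexample'' $g\in S_{t-1}$ with $\Delta_\infty(f_{t-1},g)>2\varepsilon$ but $\Delta_1(f_{t-1},g)[X_{t-1}]\leq 0.4\varepsilon$; if none exists we return $(f_{t-1},X_{t-1})$. Otherwise I locate a witness $z_t$ with $|f_{t-1}(z_t)-g(z_t)|>2\varepsilon$, split $S_{t-1}$ at the midpoint $v_t:=\tfrac{1}{2}(f_{t-1}(z_t)+g(z_t))$, and keep the side $S_t$ with fewer surviving cover elements, updating $f_t$ to $f_{t-1}$ if possible and to $g$ otherwise, exactly as in Lemma~\ref{linfwinnow}. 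Because $|S_t\cap C|$ halves each step, the procedure terminates after $T\leq\log_2|C|$ iterations.

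For the output $(f,X):=(f_T,X_T)$, correctness will split into two cases for a putative bad $g\in S$ satisfying $\Delta_1(f,g)[X]\leq 0.4\varepsilon$. If $g\in S_T$, the termination condition directly forces $\Delta_\infty(f,g)\leq 2\varepsilon$, since any such $g$ would otherwise have been a valid counterexample at the final step. If $g\notin S_T$, then $g$ was cut at some step $t$ because $g(z_t)$ lies on the side of $v_t$ opposite $f_t$, with margin at least $\varepsilon$, giving $|f_t(z_t)-g(z_t)|>\varepsilon$. The aim is then to deduce $|f_T(z_t)-g(z_t)|>0.4\varepsilon$, which immediately contradicts the hypothesis $\Delta_1(f_T,g)[X]\leq 0.4\varepsilon$ through the single coordinate $z_t\in X$.

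The hard part will be bounding the drift $|f_T(z_t)-f_t(z_t)|$ produced by later updates of the candidate function. Under the $L_\infty$ regime of Lemma~\ref{linfwinnow} this drift is at most $T\delta=\varepsilon/5$, but under an $L_1$ budget a single update can move $f$ by as much as $0.4\varepsilon$ at any previously-pinned input, so summing naively over $T=\log_2|C|$ iterations gives a hopeless $O(\varepsilon\log|C|)$ bound. To kill this I plan to pad each halving step by adjoining $O(1/\varepsilon)$ additional ``protection'' points to $X$, chosen among further inputs where $f_{t-1}$ and the discovered adversary disagree substantially; this yields exactly $|X|=O(\log|C|/\varepsilon)$ as the lemma requires. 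The effect of the padding is that any later adversary must spread its $0.4\varepsilon$ $L_1$ mass across many already-pinned coordinates, so the per-step drift at any one $z_t$ is forced to be much smaller than $0.4\varepsilon$, keeping the accumulated drift strictly below $\varepsilon$ and preserving the gap needed in Case~2.

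A backup route, should the padding construction prove too delicate, is an amortized analysis: the cumulative drift mass injected by the entire procedure is at most $0.4\varepsilon\cdot T$, distributed across the growing set $X_t$. By always selecting $f_t$, when an update is forced, to be a function in $S\cap S_t$ that minimizes drift at already-pinned points (feasible because the finite cover $C$ gives us control over which $S_t$-elements are close to $f_{t-1}$), one should be able to convert the average drift bound into the worst-case per-point bound $<\varepsilon$ required for Case~2. Either way, the step-by-step halving bound $T\leq\log_2|C|$ is preserved, so once the drift is tamed the size estimate $|X|=O(\tfrac{1}{\varepsilon}\log|C|)$ and the conclusion $\Delta_\infty(f,g)\leq 2\varepsilon$ follow from the case split above.
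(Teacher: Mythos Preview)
Your diagnosis of the obstacle is right, but neither proposed repair closes the gap. For the padding route: the protection points you add at step $t$ are places where $f_{t-1}$ and the \emph{current} adversary $g_t$ disagree, but a \emph{later} adversary $g_s$ (for $s>t$) can agree with $f_{s-1}$ exactly at every such point and still spend its entire $0.4\varepsilon$ budget at the single coordinate $z_t$; the per-switch drift at $z_t$ therefore remains as large as $0.4\varepsilon$. (There is also no guarantee that multiple protection points even exist---$f_{t-1}$ and $g_t$ may differ only at $z_t$.) For the amortized route: at a forced switch you would need some $h\in S_t$ close to $f_{t-1}$ on $X_{t-1}$, but $S_t$ is just a half-space slice of $S_{t-1}$ at $z_t$ and need not contain any such $h$ besides $g_t$ itself. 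So the cumulative drift at a fixed $z_t$ really can be $\Theta(\varepsilon\log|C|)$, and your Case~2 breaks.

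The paper abandons the discrete halving of cover elements altogether and runs a multiplicative-weights potential argument instead: set $P_{f,X}(h):=\exp\bigl(-\Delta_1(f,h)[X]\bigr)$ for $h\in C$ and $M_{f,X}:=\sum_{h\in C}P_{f,X}(h)$. Adding a point $y$ with $|f(y)-g(y)|>2\varepsilon$ shrinks at least half the mass (relative to one of $f,g$) by a factor $e^{-\varepsilon}$, while switching from $f$ to $g$ inflates $M$ by at most $e^{0.4\varepsilon}$, precisely because $P_{g,X}(h)/P_{f,X}(h)\le\exp\bigl(\Delta_1(f,g)[X]\bigr)\le e^{0.4\varepsilon}$ via the triangle inequality. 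The net effect is a $(1-\Omega(\varepsilon))$ multiplicative drop in $M$ per step, and since $e^{-\varepsilon|X|}\le M_{f,X}\le |C|$ this forces termination at $|X|=O\bigl(\tfrac{1}{\varepsilon}\log|C|\bigr)$. The key point is that the exponential reweighting turns the additive $L_1$ constraint into a single multiplicative factor on the potential, so the cost of a switch no longer compounds across coordinates or iterations---exactly the control your halving approach cannot arrange.
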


\begin{proof}
We will consider functions\ $P:S\rightarrow\left[  0,1\right]  $, which we
think of as assigning a probability weight $P\left(  g\right)  $\ to each
function $g\in S$. \ In particular, given an $f\in S$\ and a subset of inputs
$X\subseteq\left\{  0,1\right\}  ^{n}$, define%
\[
P_{f,X}\left(  g\right)  :=\exp\left(  -\Delta_{1}\left(  f,g\right)  \left[
X\right]  \right)  .
\]
Clearly $P_{f,X}\left(  f\right)  =1$. \ Our goal will be to find $f\in
S$\ and $X\subseteq\left\{  0,1\right\}  ^{n}$, with $\left\vert X\right\vert
=O\left(  \frac{1}{\varepsilon}\log\left\vert C\right\vert \right)  $, such
that every $g\in S$ that satisfies $P_{f,X}\left(  g\right)  \geq
e^{-0.4\varepsilon}$ also satisfies $\Delta_{\infty}\left(  f,g\right)
\leq2\varepsilon$. \ Supposing we have found such an $\left(  f,X\right)
$\ pair, the lemma is proved.

Consider the progress measure%
\[
M_{f,X}:=\sum_{h\in C}P_{f,X}\left(  h\right)  .
\]
Clearly $M_{f,X}\leq\left\vert C\right\vert $ for all $\left(  f,X\right)  $.
\ We claim, furthermore, that $M_{f,X}\geq\exp\left(  -\varepsilon\left\vert
X\right\vert \right)  $\ for all $\left(  f,X\right)  $. \ For since $C$\ is
an $\varepsilon$-cover for $S$, there always exists an $h\in C$\ such that
$\Delta_{1}\left(  f,h\right)  \left[  X\right]  \leq\varepsilon\left\vert
X\right\vert $, and that $h$ alone contributes at least $\exp\left(
-\varepsilon\left\vert X\right\vert \right)  $\ to $M_{f,X}$.

We will construct $\left(  f,X\right)  $\ by an iterative process. \ Initially
$f$ is arbitrary and $X$ is the empty set, so $P_{f,X}\left(  g\right)
=1$\ for all $g$, and $M_{f,X}=\left\vert C\right\vert $. \ Now, suppose there
exists a $g\in S$\ such that $P_{f,X}\left(  g\right)  \geq e^{-0.4\varepsilon
}$, as well as an input $y$ such that $\left\vert f\left(  y\right)  -g\left(
y\right)  \right\vert >2\varepsilon$. \ As a first step, let $Y:=X\cup\left\{
y\right\}  $\ (that is, add $y$ into our set of inputs). \ Then the crucial
claim is that either $M_{f,Y}$\ or $M_{g,Y}$\ is a $1-\Omega\left(
\varepsilon\right)  $ factor smaller than $M_{f,X}$. \ This means in
particular that, by replacing $X$ with $Y$ (increasing $\left\vert
X\right\vert $\ by $1$), and possibly also replacing $f$ with $g$, we can
decrease $M_{f,X}$\ by a $1-\Omega\left(  \varepsilon\right)  $ factor
compared to its previous value. \ Since $\exp\left(  -\varepsilon\left\vert
X\right\vert \right)  \leq M_{f,X}\leq\left\vert C\right\vert $, it is clear
that $M_{f,X}$ can decrease in this way at most%
\[
O\left(  \log_{1+\varepsilon}\frac{\left\vert C\right\vert }{\exp\left(
-\varepsilon\left\vert X\right\vert \right)  }\right)
\]
times. \ Setting the above expression equal to $\left\vert X\right\vert $\ and
solving, we find that the process must terminate when $\left\vert X\right\vert
=O\left(  \frac{1}{\varepsilon}\log\left\vert C\right\vert \right)  $,
returning an $\left(  f,X\right)  $\ pair with the properties we want.

We now prove the crucial claim. \ The first step is to show that either%
\[
M_{f,Y}=\sum_{h\in C}P_{f,X}\left(  h\right)  e^{-\left\vert f\left(
y\right)  -h\left(  y\right)  \right\vert }%
\]
or else%
\[
M^{\prime}:=\sum_{h\in C}P_{f,X}\left(  h\right)  e^{-\left\vert g\left(
y\right)  -h\left(  y\right)  \right\vert }%
\]
is at most%
\[
\frac{1+e^{-\varepsilon}}{2}M_{f,X}.
\]
For since $\left\vert f\left(  y\right)  -g\left(  y\right)  \right\vert
>2\varepsilon$, either $\left\vert f\left(  y\right)  -h\left(  y\right)
\right\vert >\varepsilon$\ or $\left\vert g\left(  y\right)  -h\left(
y\right)  \right\vert >\varepsilon$ by the triangle inequality. \ So for every
$y$, either $e^{-\left\vert f\left(  y\right)  -h\left(  y\right)  \right\vert
}<e^{-\varepsilon}$\ or $e^{-\left\vert g\left(  y\right)  -h\left(  y\right)
\right\vert }<e^{-\varepsilon}$. \ This in turn means that either $M_{f,Y}$ or
$M^{\prime}$\ must have at least half its terms (as weighted by the
$P_{f,X}\left(  h\right)  $'s) shrunk by an $e^{-\varepsilon}$\ factor.

If $M_{f,Y}<\frac{1+e^{-\varepsilon}}{2}M_{f,X}$\ then we are done. \ So
suppose instead that $M^{\prime}<\frac{1+e^{-\varepsilon}}{2}M_{f,X}$. \ Then%
\begin{align*}
M_{g,Y}  &  =\sum_{h\in C}P_{g,X}\left(  h\right)  e^{-\left\vert g\left(
y\right)  -h\left(  y\right)  \right\vert }\\
&  \leq M^{\prime}\max_{h\in C}\frac{P_{g,X}\left(  h\right)  }{P_{f,X}\left(
h\right)  }\\
&  =M^{\prime}\max_{h\in C}\frac{\exp\left(  -\Delta_{1}\left(  g,h\right)
\left[  X\right]  \right)  }{\exp\left(  -\Delta_{1}\left(  f,h\right)
\left[  X\right]  \right)  }\\
&  \leq M^{\prime}\exp\left(  \Delta_{1}\left(  f,g\right)  \left[  X\right]
\right) \\
&  =\frac{M^{\prime}}{P_{f,X}\left(  g\right)  }\\
&  <\frac{\frac{1+e^{-\varepsilon}}{2}M_{f,X}}{e^{-0.4\varepsilon}}\\
&  <\left(  1-\frac{\varepsilon}{20}\right)  M_{f,X}%
\end{align*}
and we are done.
\end{proof}

Recall that $S$ is \textit{coverable} if for all $\varepsilon>0$, there exists
an $\varepsilon$-cover\ for $S$\ of size $2^{\operatorname*{poly}\left(
n,1/\varepsilon\right)  }$. \ We can now prove the following equivalence theorem.

\begin{theorem}
\label{l1winnowthm}A p-concept class $S$ is coverable if and only if it is
$L_{1}$-winnowable.
\end{theorem}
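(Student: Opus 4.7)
The forward direction is a direct application of Lemma \ref{l1lem}. Given any $S^{\prime}\subseteq S$ and $\varepsilon>0$, coverability furnishes an $\varepsilon$-cover $C$ for $S$ of size $M=2^{\operatorname*{poly}(n,1/\varepsilon)}$, and the same $C$ serves as an $\varepsilon$-cover for $S^{\prime}$ (the proof of Lemma \ref{l1lem} only uses that every element of the class being winnowed has some $h\in C$ within $\varepsilon$ in $L_{\infty}$, not that $C$ lies inside $S^{\prime}$). Applying Lemma \ref{l1lem} with $S^{\prime}$ as the base class and $C$ as the cover produces an $f\in S^{\prime}$ together with an $X\subseteq\{0,1\}^{n}$ of size $O((1/\varepsilon)\log M)=\operatorname*{poly}(n,1/\varepsilon)$ such that every $g\in S^{\prime}$ satisfying $\Delta_{1}(f,g)[X]\leq 0.4\varepsilon$ also satisfies $\Delta_{\infty}(f,g)\leq 2\varepsilon$; after rescaling $\varepsilon\mapsto\varepsilon/2$ this is exactly the $L_{1}$-winnowability condition.

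For the reverse direction my plan is to invoke Theorem \ref{coverthm}, which reduces the problem to proving that $\operatorname*{fat}\nolimits_{\varepsilon}(S)$ is polynomial in $n$ and $1/\varepsilon$ for every $\varepsilon>0$. Suppose by way of contradiction that $\operatorname*{fat}\nolimits_{\varepsilon}(S)\geq A$ for some super-polynomial $A$. Let $X_{0}$ of size $A$ be an $\varepsilon$-shattered set with threshold function $r$, and pick witnesses $\{f_{g}\in S:g\in\{0,1\}^{X_{0}}\}$. Any two distinct witnesses disagree by at least $2\varepsilon$ at some point of $X_{0}$, so $S^{\prime}:=\{f_{g}\}_{g}$ is pairwise $2\varepsilon$-separated in $L_{\infty}$. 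Applying $L_{1}$-winnowability to $S^{\prime}$ at scale $\varepsilon/2$ yields $f^{\ast}=f_{g^{\ast}}\in S^{\prime}$, a set $X$ of size $k=\operatorname*{poly}(n,1/\varepsilon)$, and $\delta=\operatorname*{poly}(\varepsilon)>0$ such that $\Delta_{1}(f^{\ast},f_{g})[X]>\delta$ for every $g\neq g^{\ast}$.

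The remaining step, and the main obstacle, is to extract from this $L_{1}$-separation guarantee a bound of the form $A\leq\operatorname*{poly}(n,1/\varepsilon)$. The natural flipped-bit argument would observe that for any $y\in X_{0}\setminus X$ the neighboring pattern $g^{\ast}\oplus e_{y}$ differs from $g^{\ast}$ only at the hidden position $y$, and hope to conclude $\Delta_{1}(f^{\ast},f_{g^{\ast}\oplus e_{y}})[X]=0$, contradicting $\Delta_{1}>\delta$ and forcing $X_{0}\subseteq X$. The difficulty is that witnesses in $S$ are only constrained by the shattering condition on $X_{0}$, so $f^{\ast}$ and $f_{g^{\ast}\oplus e_{y}}$ may disagree nontrivially on $X\setminus\{y\}$ or on same-side agreement points of $X_{0}\cap X$, and these contributions can push $\Delta_{1}$ above $\delta$ for reasons unrelated to $y$. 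I would resolve this by choosing $S^{\prime}$ more carefully: rather than taking one arbitrary witness per Boolean pattern, iteratively apply $L_{1}$-winnowability (or the safe-winnowing Lemma \ref{linfwinnow}) inside $S$ to extract witnesses that agree with a fixed $f^{\ast}$ up to $\operatorname*{poly}(\varepsilon)$ on $X\setminus\{y\}$, reducing to the case where each flipped-bit neighbor really does differ from $f^{\ast}$ only near the flipped position. With this structural version of $S^{\prime}$ the flipped-bit contradiction goes through and gives $X_{0}\subseteq X$ up to polynomially small error, whence $A=|X_{0}|\leq|X|=\operatorname*{poly}(n,1/\varepsilon)$, and Theorem \ref{coverthm} completes the proof.
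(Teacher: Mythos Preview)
Your forward direction is correct and matches the paper's argument.

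Your reverse direction takes a different route from the paper---via fat-shattering dimension and Theorem \ref{coverthm}---but the argument has a genuine gap that your proposed fix does not close. You correctly identify the obstacle: the flipped-bit witness $f_{g^{\ast}\oplus e_{y}}$ need not agree with $f^{\ast}$ on $X$, since shattering constrains witnesses only on $X_{0}$. Your remedy is to ``choose $S^{\prime}$ more carefully'' by invoking Lemma \ref{linfwinnow} or iterated $L_{1}$-winnowability to manufacture witnesses that agree with $f^{\ast}$ on $X\setminus\{y\}$. But Lemma \ref{linfwinnow} assumes a small $\varepsilon$-cover for the class---exactly what you are trying to establish---so that invocation is circular. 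And $L_{1}$-winnowability hands you \emph{some} $f$ and \emph{some} $X$ determined by the subset $S^{\prime}$ you feed in; it gives no mechanism to force agreement with a pre-chosen $f^{\ast}$ on a pre-chosen set. Since $X$ depends on $S^{\prime}$ while you want to choose $S^{\prime}$ depending on $X$, the construction is chasing its own tail. Without this control, nothing in your setup bounds $A$: knowing that $f^{\ast}$ is $L_{1}$-isolated from the other $2^{A}-1$ witnesses on a poly-size $X$ places no constraint whatsoever on $A$.

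The paper avoids fat-shattering entirely and proves the contrapositive by a direct construction. It discretizes function space into ``slivers''---rectangular boxes indexed by a size-$p$ set $X$ and a coarse value assignment $a:X\to[W]$ with $W=\lceil 2p/\delta\rceil$---and iteratively deletes from $S$ any sliver $B$ for which some $f_{B}\in S$ is $\varepsilon$-close in $L_{\infty}$ to all of $S^{\prime}\cap B$. There are at most $(2^{n}W)^{p}=2^{\operatorname*{poly}(n,1/\varepsilon)}$ slivers, so the process halts; if the residual $S^{\prime}$ were empty, the chosen $f_{B}$'s would form a small $\varepsilon$-cover, contradicting non-coverability; and no $(f,X)$ can winnow the surviving $S^{\prime}$, because any such pair sits inside some sliver that would already have been removed. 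This argument uses only counting and the cover hypothesis, with no appeal to shattering or to auxiliary learning-theoretic lemmas.
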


\begin{proof}
For \textbf{Coverable}$\Longrightarrow$\textbf{L}$_{1}$\textbf{-Winnowable}:
fix a subset $S^{\prime}\subseteq S$\ and an $\varepsilon>0$. \ Let $C$\ be an
$\varepsilon/2$-cover\ for $S^{\prime}$ of size $2^{\operatorname*{poly}%
\left(  n,1/\varepsilon\right)  }$. \ Then by Lemma \ref{l1lem}, there exists
an $f\in S^{\prime}$, as well as a subset $X\subseteq\left\{  0,1\right\}
^{n}$\ of size $O\left(  \frac{1}{\varepsilon}\log\left\vert C\right\vert
\right)  =\operatorname*{poly}\left(  n,1/\varepsilon\right)  $,\ such that
every $g\in S^{\prime}$\ that satisfies $\Delta_{1}\left(  f,g\right)  \left[
X\right]  \leq\varepsilon/5$\ also satisfies $\Delta_{\infty}\left(
f,g\right)  \leq\varepsilon$.

For \textbf{L}$_{1}$-\textbf{Winnowable}$\Longrightarrow$\textbf{Coverable},
we prove the contrapositive. \ Suppose there exists a function $t\left(
n,1/\varepsilon\right)  $, superpolynomial in either $n$\ or $1/\varepsilon$,
such that $S$ has no $\varepsilon$-cover of size $2^{t\left(  n,1/\varepsilon
\right)  }$ (at least, for infinitely many $n$ or $1/\varepsilon$). \ Let
$p=\operatorname*{poly}\left(  n,1/\varepsilon\right)  $ and $\delta
=\operatorname*{poly}\left(  \varepsilon\right)  $. \ Given a function $f$ and
subset $X\subseteq\left\{  0,1\right\}  ^{n}$, let $L\left[  f,X\right]  $\ be
the set of all functions $g$\ such that $\Delta_{1}\left(  f,g\right)  \left[
X\right]  \leq\delta$. \ Then our goal is to construct a subset $S^{\prime
}\subseteq S$\ for which there is no pair $\left(  f,X\right)  $ such that

\begin{itemize}
\item $f\in S^{\prime}$,

\item $X\subseteq\left\{  0,1\right\}  ^{n}$\ is a set of inputs with
$\left\vert X\right\vert =p$, and

\item $g\in S^{\prime}\cap L\left[  f,X\right]  $ implies $\Delta_{\infty
}\left(  f,g\right)  \leq\varepsilon$.
\end{itemize}

Let $W:=\left\lceil 2p/\delta\right\rceil $. \ Also, call a set $B$ of
functions $f:\left\{  0,1\right\}  ^{n}\rightarrow\left[  0,1\right]  $\ a
\textit{sliver} if there exists a set $X\subseteq\left\{  0,1\right\}  ^{n}%
$\ with $\left\vert X\right\vert =p$, as well a function $a:X\rightarrow
\left[  W\right]  $, such that%
\[
f\in B\Longleftrightarrow f\left(  x\right)  \in\left[  \frac{a\left(
x\right)  -1}{W},\frac{a\left(  x\right)  }{W}\right]  ~\forall x\in X.
\]
Then define a subset $S^{\prime}\subseteq S$ by the following iterative
procedure. \ Initially $S^{\prime}=S$. \ Then so long as there exists a sliver
$B$ such that $S^{\prime}\cap B$ is nonempty, together with a function
$f_{B}\in S$ such that%
\[
g\in S^{\prime}\cap B\Longrightarrow\Delta_{\infty}\left(  f_{B},g\right)
\leq\varepsilon,
\]
remove $B$ from $S^{\prime}$\ (that is, set $S^{\prime}:=S^{\prime}\setminus
B$). \ Halt only when no more such slivers $B$ can be found.

As a first observation, the total number of slivers is at most $\left(
2^{n}W\right)  ^{p}=2^{\operatorname*{poly}\left(  n,1/\varepsilon\right)  }$.
\ Thus, the above procedure must halt after at most $2^{\operatorname*{poly}%
\left(  n,1/\varepsilon\right)  }$\ iterations.

As a consequence, we claim that $S^{\prime}$\ must be nonempty after the
procedure has halted. \ For suppose not. \ Then the sequence of functions
$f_{B}$\ chosen by the procedure would form an $\varepsilon$-cover\ for
$S$\ of size $2^{\operatorname*{poly}\left(  n,1/\varepsilon\right)  }%
$---since for all $g\in S$, we would simply need to find a sliver $B$
containing $g$ that was removed by the procedure; then $f_{B}$\ would satisfy
$\Delta_{\infty}\left(  f_{B},g\right)  \leq\varepsilon$. \ But this
contradicts the assumption that no such $\varepsilon$-cover exists.

Finally, we claim that once the procedure halts, there can be no $f\in
S^{\prime}$\ and set $X$\ of $p$ inputs\ such that $\Delta_{\infty}\left(
f,g\right)  \leq\varepsilon$\ for all $g\in S^{\prime}\cap L\left[
f,X\right]  $. \ For suppose to the contrary that such an $\left(  f,X\right)
$\ pair existed. \ It is not hard to see that for every $\left(  f,X\right)
$, there exists a sliver $B$ that contains $f$\ and is contained in $L\left[
f,X\right]  $. \ But then $S^{\prime}\cap B$\ would be nonempty, and $\left(
B,f\right)  $\ would satisfy the condition $g\in S^{\prime}\cap
B\Longrightarrow\Delta_{\infty}\left(  f,g\right)  \leq\varepsilon$. \ So $B$
(or some other sliver containing $f$) would already have been eliminated in
the process of forming $S^{\prime}$.
\end{proof}

A natural question is whether Lemma \ref{l1lem}\ and Theorem \ref{l1winnowthm}%
\ would also hold with $L_{2}$-winnowability or $L_{\infty}$-winnowability in
place of $L_{1}$-winnowability. \ The next theorem\ shows, somewhat
surprisingly, that the use of the $L_{1}$\ norm is essential.

\begin{theorem}
\label{l2no}There exists a p-concept class $S$ that is coverable, but not
$L_{2}$-winnowable or $L_{\infty}$-winnowable.
\end{theorem}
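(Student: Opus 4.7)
The plan is to exhibit a p-concept class $S_n$ of ``spike'' functions whose members are pairwise well-separated in $L_\infty$ yet mutually indistinguishable on any small query set under both $L_\infty$ and $L_2$. Concretely, for a family $F_n \subseteq 2^{\{0,1\}^n}$ to be designed, define $f_T(x) := \tfrac{1}{2}\mathbf{1}[x \in T]$ and set $S_n := \{f_T : T \in F_n\}$. Since $\Delta_\infty(f_T, f_{T'}) = 1/2$ whenever $T \neq T'$, any $\varepsilon$-cover with $\varepsilon < 1/2$ must contain every function of $S_n$, so the size bound $|F_n| \leq 2^{\operatorname{poly}(n)}$ (for some fixed polynomial) already guarantees $|S_n| \leq 2^{\operatorname{poly}(n, 1/\varepsilon)}$, i.e.\ coverability.

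To break winnowing, I would build $F_n$ by the probabilistic method so that it has the following ``collision'' property: for every $T \in F_n$ and every $X \subseteq \{0,1\}^n$ of size at most some polynomial bound $n^c$, there exists $T' \in F_n$ with $T \neq T'$ and $T \cap X = T' \cap X$. Sampling each $T \in F_n$ by including each element of $\{0,1\}^n$ independently with probability $1/2$, for any fixed $T$ and $X$ the probability that no other $T' \in F_n$ has the same restriction to $X$ is at most $\exp\bigl(-(|F_n|-1) \cdot 2^{-|X|}\bigr)$, which is doubly-exponentially small when $|F_n| = 2^{n^{c+1}}$. A union bound over all $(T, X)$ pairs with $|X| \leq n^c$ then succeeds, and $|F_n| = 2^{n^{c+1}}$ is still $2^{\operatorname{poly}(n)}$ so coverability is preserved. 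The existence of such a twin $T'$ yields $g := f_{T'}$ with $\Delta_\infty(f_T, g)[X] = \Delta_2(f_T, g)[X] = 0 \leq \delta$ yet $\Delta_\infty(f_T, g) = 1/2 > \varepsilon$, for $\varepsilon = 1/3$ and any $\delta < 1/2$. Taking $S' = S_n$, this simultaneously refutes both $L_\infty$- and $L_2$-winnowability at $(f, X)$ for \emph{every} choice of $f \in S'$ and $X$ of the relevant size, since the adversary supplies a bad $g$ for each.

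The main obstacle is to make the construction defeat winnowing for every polynomial size of $|X|$, not only up to a fixed degree $c$. Because $|F_n|$ is polynomially capped by coverability, once $F_n$ is fixed a random $X$ of size $\log|F_n|$ distinguishes all functions by pigeonhole, and that size is itself polynomial in $n$. The standard resolution is that the bounding polynomial in the definition of $L_\infty$-winnowability is treated as data attached to the class $S$, so it suffices to defeat it for one value of $c$; one simply chooses $|F_n|$ large enough that no polynomial of growth rate strictly below $n^{c+1}$ works. If a stronger reading is required, the plan would be to replace the one-level family $F_n$ by a union of levels $\bigcup_c S_{n,c}$ with spike heights $\varepsilon_c$ tuned to each scale, so that for each fixed polynomial winnowing budget there is a scale at which the collision argument still kills every $(f, X)$; the key leverage is that $\delta = \operatorname{poly}(\varepsilon)$ is forced to be $n$-independent, which prevents the winnower from adapting $\delta$ downward as the class complexity grows. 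Balancing the heights, the family sizes, and the coverability budget across scales is the principal technical challenge.
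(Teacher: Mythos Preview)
Your construction does not work: any p-concept class consisting of $\{0,\tfrac12\}$-valued functions with $|S_n|\le 2^{\operatorname*{poly}(n)}$ is automatically both $L_\infty$- and $L_2$-winnowable. For such functions a single disagreement on $X$ already contributes $1/2$ to both $\Delta_\infty(f,g)[X]$ and $\Delta_2(f,g)[X]$, so taking any constant $\delta<1/2$ forces $f|_X=g|_X$ exactly. The winnower may therefore set, say, $\delta=\varepsilon/2$ and is left with the purely \emph{Boolean} task of finding $f\in S'$ and polynomial-size $X$ such that $g|_X=f|_X\Rightarrow g=f$; by the Boolean winnowing part of Theorem~\ref{equivthm} this is always possible with $|X|\le\log_2|S'|$. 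Concretely, your family with $|F_n|=2^{n^{c+1}}$ is Boolean-winnowable using some $|X|\le n^{c+1}$, which is still polynomial; your collision property only blocks $|X|\le n^c$.

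Your first proposed fix inverts the quantifiers: winnowability asserts the existence of \emph{some} polynomial bound, so to refute it you must defeat \emph{every} polynomial, and you cannot tailor $|F_n|$ after seeing the winnower's degree. Your multi-scale fix does not escape the obstacle either, since as long as the spike height exceeds $\varepsilon$ (needed for $\Delta_\infty(f,g)>\varepsilon$) it is bounded below by a constant, and then a constant $\delta$ again reduces the problem to Boolean winnowing.

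The paper's construction gets around this by using functions with values on a grid of mesh $1/n$ subject to the conservation constraint $\sum_x f(x)=n$. One perturbs $f$ into $g$ by adding a spike of height $1$ at some $y\notin X$ (so $\Delta_\infty(f,g)=1$) and compensating by subtracting $1/n$ at $n$ other inputs, giving $\Delta_2(f,g)[X]\le 1/\sqrt{n}$, which beats every constant $\delta$ for large $n$. The essential asymmetry---diffuse compensation visible on $X$, concentrated spike hidden off $X$---is precisely what a two-valued class cannot supply.
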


\begin{proof}
We prove a stronger statement: there exists a \textit{finite} p-concept class
$S$, of size $\left\vert S\right\vert \leq2^{\operatorname*{poly}\left(
n\right)  }$, that is not $L_{2}$-winnowable (and as a direct consequence, not
$L_{\infty}$-winnowable either). \ To prove this, it suffices to find a set
$S$ with $\left\vert S\right\vert \leq2^{\operatorname*{poly}\left(  n\right)
}$, as well as a constant $\varepsilon>0$, for which the following holds.
\ For all $f\in S$,\ subsets $X\subseteq\left\{  0,1\right\}  ^{n}$ of size
less than $2^{n}-n^{2}$, and constants $\delta$\ depending on $\varepsilon$,
there exists a $g\in S$\ such that $\Delta_{2}\left(  f,g\right)  \left[
X\right]  \leq\delta$\ but $\Delta_{\infty}\left(  f,g\right)  >\varepsilon
$\ (at least, for all sufficiently large $n$).

Let $\varepsilon$ be any constant in $\left(  0,1\right)  $, and let $S$ be
the class of all functions $f:\left\{  0,1\right\}  ^{n}\rightarrow\left[
0,1\right]  $\ of the form%
\[
f\left(  x\right)  =\frac{a_{x}}{n},
\]
where the $a_{x}$'s are nonnegative integers satisfying%
\[
\sum_{x\in\left\{  0,1\right\}  ^{n}}a_{x}=n^{2}.
\]
Then clearly $\left\vert S\right\vert \leq\left(  2^{n}\right)  ^{n^{2}}$,
since we can form any $f\in S$ by starting from the identically-$0$ function,
then choosing $n^{2}$\ inputs $x$ (with repetition) on which to increment $f$
by $1/n$.

Now let $f\in S$, and let $X\subseteq\left\{  0,1\right\}  ^{n}$ have size
$\left\vert X\right\vert <2^{n}-n^{2}$. \ Then we can ``corrupt''\ $f$ to create
a new function $g\in S$ as follows. \ Let $Z$\ be a set of $n$ inputs
$x\in\left\{  0,1\right\}  ^{n}$\ on which $f\left(  x\right)  >0$\ (note that
such a $Z$ must exist, since $\sum_{x}f\left(  x\right)  =n$\ but $f\left(
x\right)  \leq1$\ for all $x$). \ By the pigeonhole principle, there exists a
$y\in\left\{  0,1\right\}  ^{n}\setminus X$\ such that $f\left(  y\right)
=0$. \ Fix that $y$, and define%
\[
g\left(  x\right)  :=\left\{
\begin{array}
[c]{cc}%
1 & \text{if }x=y\\
f\left(  x\right)  -1/n & \text{if }x\in Z\\
f\left(  x\right)  & \text{otherwise.}%
\end{array}
\right.
\]
Clearly $g\in S$ and%
\[
\Delta_{2}\left(  f,g\right)  \left[  X\right]  =\sqrt{\sum_{x\in Z\cap
X}\frac{1}{n^{2}}}\leq\frac{1}{\sqrt{n}}.
\]
On the other hand, we have $f\left(  y\right)  =0$\ and $g\left(  y\right)
=1$, so\ $\Delta_{\infty}\left(  f,g\right)  =1$. \ Therefore $S$\ is not
$L_{2}$-winnowable.
\end{proof}

\end{document}